\definecolor{DOrange}{RGB}{145,20,20}
\definecolor{Red}{RGB}{255,0,0}
\definecolor{DeepGreen}{RGB}{5,102,8}
\newcommand{\name}{\textsf{Morpheus}\xspace}
\newcommand{\eip}{\mbox{$\mathit{e}$}}
\newcommand{\lp}{\mbox{$\mathit{l}_{p}$}}
\newcommand{\xp}{\mbox{$\mathit{x}$}}
\newcommand{\vp}{\mbox{$\mathit{v}$}}
\newcommand{\cp}{\mbox{$\mathit{c}$}}
\newcommand{\el}{\mbox{$\varepsilon$}}
\newcommand{\qual}{\mbox{$\mathit{Q}$}}
\definecolor{mygreen}{rgb}{0,0.6,0}
\definecolor{mygray}{rgb}{0.5,0.5,0.5}
\definecolor{mymauve}{rgb}{0.58,0,0.82}
\renewcommand{\S}[1]{\mbox{$\mathsf{#1}$}}
\newcommand{\B}[1]{\mbox{$\mathbf{#1}$}}
\newcommand{\rp}{\mbox{$\ell$}}
\newcommand{\M}[1]{\mbox{\textcolor{red}{$\mathsf{#1}_{{\sf m}}$}}}
\newcommand{\h}[1]{\mbox{$\hat{#1}$}}
\tikzset{
  base font=\sffamily,
}
\begin{document}

%%
%% The "title" command has an optional parameter,
%% allowing the author to define a "short title" to be used in page headers.
\title{\name: Automated Safety Verification of Data-dependent Parser Combinator Programs}

%%
%% The "author" command and its associated commands are used to define
%% the authors and their affiliations.
%% Of note is the shared affiliation of the first two authors, and the
%% "authornote" and "authornotemark" commands
%% used to denote shared contribution to the research.
%%
%% The "author" command and its associated commands are used to define
%% the authors and their affiliations.
%% Of note is the shared affiliation of the first two authors, and the
%% "authornote" and "authornotemark" commands
%% used to denote shared contribution to the research.

%% Author with single affiliation.
\author{Ashish Mishra}
%\authornote{}          %% \authornote is optional;
                                        %% can be repeated if necessary
%\orcid{nnnn-nnnn-nnnn-nnnn}             %% \orcid is optional
\affiliation{
  \position{}
  \department{Department of Computer Science}              %% \department is recommended
  \institution{Purdue University}            %% \institution is required
  \country{USA}                    %% \country is recommended
}
\email{mishr115@purdue.edu}          %% \email is recommended

%% Author with two affiliations and emails.
\author{Suresh Jagannathan}
%\authornote{}          %% \authornote is optional;
                                        %% can be repeated if necessary
%\orcid{nnnn-nnnn-nnnn-nnnn}             %% \orcid is optional
\affiliation{
  \position{}
  \department{Department of Computer Science}             %% \department is recommended
  \institution{Purdue University}           %% \institution is required
  \country{USA}                   %% \country is recommended
}
\email{suresh@cs.purdue.edu}         %% \email is recommended

% \author{Anonymous}
% %% \authornote{Both authors contributed equally to this research.}
% %% \email{trovato@corporation.com}
% %% \orcid{1234-5678-9012}
% %% \author{Anonymous}
% %% \authornotemark[1]
% %% \email{webmaster@marysville-ohio.com}
% %% \affiliation{%
% %%   \institution{Institute for Clarity in Documentation}
% %%   \streetaddress{P.O. Box 1212}
% %%   \city{Dublin}
% %%   \state{Ohio}
% %%   \country{USA}
% %%   \postcode{43017-6221}
% %% }

% %%
% %% By default, the full list of authors will be used in the page
% %% headers. Often, this list is too long, and will overlap
% %% other information printed in the page headers. This command allows
% %% the author to define a more concise list
% %% of authors' names for this purpose.
% \renewcommand{\shortauthors}{Anonymous}

%%
%% The abstract is a short summary of the work to be presented in the
%% article.
\begin{abstract}
  Parser combinators are a well-known mechanism used for the
  compositional construction of parsers, and have shown to be
  particularly useful in writing parsers for rich grammars with
  data-dependencies and global state.  Verifying applications written
  using them, however, has proven to be challenging in large part
  because of the inherently effectful nature of the parsers being
  composed and the difficulty in reasoning about the arbitrarily rich
  data-dependent semantic actions that can be associated with parsing
  actions. In this paper, we address these challenges by defining a
  parser combinator framework called \name\ equipped with abstractions
  for defining composable effects tailored for parsing and semantic
  actions, and a rich specification language used to define safety
  properties over the constituent parsers comprising a program. Even
  though its abstractions yield many of the same expressivity benefits
  as other parser combinator systems, \name\ is carefully engineered
  to yield a substantially more tractable automated verification
  pathway. We demonstrate its utility in verifying a number of
  realistic, challenging parsing applications, including several cases
  that involve non-trivial data-dependent relations.
\end{abstract}

%%
%% The code below is generated by the tool at http://dl.acm.org/ccs.cfm.
%% Please copy and paste the code instead of the example below.
%%
%% \begin{CCSXML}
%% <ccs2012>
%%  <concept>
%%   <concept_id>10010520.10010553.10010562</concept_id>
%%   <concept_desc>Computer systems organization~Embedded systems</concept_desc>
%%   <concept_significance>500</concept_significance>
%%  </concept>
%%  <concept>
%%   <concept_id>10010520.10010575.10010755</concept_id>
%%   <concept_desc>Computer systems organization~Redundancy</concept_desc>
%%   <concept_significance>300</concept_significance>
%%  </concept>
%%  <concept>
%%   <concept_id>10010520.10010553.10010554</concept_id>
%%   <concept_desc>Computer systems organization~Robotics</concept_desc>
%%   <concept_significance>100</concept_significance>
%%  </concept>
%%  <concept>
%%   <concept_id>10003033.10003083.10003095</concept_id>
%%   <concept_desc>Networks~Network reliability</concept_desc>
%%   <concept_significance>100</concept_significance>
%%  </concept>
%% </ccs2012>
%% \end{CCSXML}

%% \ccsdesc[500]{Computer systems organization~Embedded systems}
%% \ccsdesc[300]{Computer systems organization~Redundancy}
%% \ccsdesc{Computer systems organization~Robotics}
%% \ccsdesc[100]{Networks~Network reliability}

%% %%
%% %% Keywords. The author(s) should pick words that accurately describe
%% %% the work being presented. Separate the keywords with commas.
%% \keywords{datasets, neural networks, gaze detection, text tagging}

%%
%% This command processes the author and affiliation and title
%% information and builds the first part of the formatted document.
\maketitle
\section{Introduction}
\label{sec:introduction}
%Why :: Writing data-dependent parsers is error prone*)

Parsers are transformers that decode serialized, unstructured data
into a structured form. Although many parsing problems can be
described using simple context-free grammars (CFGs), numerous
real-world data formats (e.g., pdf~\cite{pdfreference},
dns~\cite{dns}, zip~\cite{zip}, etc.), as well as many programming
language grammars (e.g., Haskell, C, Idris, etc.) require their parser
implementations to maintain additional context information during
parsing. A particularly important class of context-sensitive parsers
are those built from {\it data-dependent grammars}, such as the ones
used in the data formats listed above. Such {\it data-dependent}
parsers allow parsing actions that explicitly depend on earlier parsed
data or semantic actions. Often, such parsers additionally use global
effectful state to maintain and manipulate context information. To
illustrate, consider the implementation of a popular class of {\it
  tag-length-data} parsers; these parsers can be used to parse image
formats like PNG or PPM images, networking packets formats like TCP,
etc., and use a parsed length value to govern the size of the input
payload that should be parsed subsequently.  The following BNF grammar
captures this relation for a simplified PNG image.
\begin{tabbing}
  \small
\hspace{.2in}\= \small {\sf png} ::= {\sf header} \ .  \ $\mathsf{chunk^{*}}$ \\
\> \small $\mathsf{chunk}$ ::= {\sf length } \ .  {\sf typespec} \ . {\sf content}
%\SJ{is this necessary: \{ (length,content) \}}
\end{tabbing}
The grammar defines a {\sf header} field followed by zero or more {\sf
  chunks}, where each {\sf chunk} has a single byte {\sf length} field
parsed as an unsigned integer, followed by a single byte chunk {\sf
  type specifier}. This is followed by zero or more bytes of actual
{\sf content}.
A useful data-dependent safety property that any
parser implementation for this grammar should satisfy is that
``\emph{the length of {\sf content} plus {\sf typespec} is
  equal to the value of {\sf length}}''.
% \AM{I removed the security vulnerabilities list}
% Indeed, there have been multiple reported parser vulnerabilities 
% \AM{comment: We make this more general, citing the security vulnerabilities examples to only include general parsing bugs. These new CVEs are all parser related bugs but not necessarily related to data-dependence.}
% %~\cite{pdfbug,zipbug,dnsbug}, 
% ~\cite{zipbug, CVE-2013-1203, CVE-2012-5965, CVE-2015-0618.}
% arising from a failure of the
% parser implementation to correctly parse a malformed input or 
% implement the data-dependent
% requirements of the grammar, leading to major security
% vulnerabilities.

Parser combinator
libraries~\citep{parsingwadler,parsec,parsinghutton,hammer} provide an
elegant framework in which to write parsers that have such
data-dependent features. These libraries simplify the task of writing
parsers because they define the grammar of the input language and
implement the recognizer for it at the same time. Moreover, since
combinator libraries are typically defined in terms of a
shallowly-embedded DSL in an expressive host language like
Haskell~\cite{indentation1,megaparsec} or OCaml~\cite{parsec}, parser
implementations can seamlessly use a myriad of features available in
the host language to express various kinds of data-dependent
relations. This makes them capable of parsing both CFGs as well as
richer grammars that have non-trivial semantic actions.  Consequently,
this style of parser construction has been adopted in many
domains~\cite{AfroozehAli2015Optr,indentation1,hammer}, a fact
exemplified by their support in many widely-used languages like
Haskell, Scala, OCaml, Java, etc.

Although parser combinators provide a way to easily write
data-dependent parsers, verifying the correctness (i.e., ensuring
  that all data dependencies are enforced) of parser
implementations written using them remains a challenging problem.
This is in large part due to the inherently effectful nature of the
parsers being composed, the pervasive use of rich higher-order
abstractions available in the combinators used to build them, and the
difficulty of reasoning about complex data-dependent semantic actions
triggered by these combinators that can be associated with a parsing
action.

%% This difficulty in
%% reasoning hampers the expressivity and modularity benefits which
%% parser combinator libraries provide.

%% The data-dependence in parsers is reminiscent of the {\it Dependent
%%   Types~\cite{}}, which allows the type of a term to depend on another
%% program term. Consequently,

%% One verification approach that has been explored leverages the
%% expressive power of dependently-typed mechanized mechanized theorem
%% provers~\cite{otherproofsforparser,
%%   totalparsercomb,narcissus,everparse} to specify and implement parser
%% combinator libraries.  The expressiveness of the 

%% comes at the
%% cost of automation. To balance this trade-off, these efforts typically
%% either do not support
%% data-dependence/context-sensitivity~\cite{totalparsercomb, rocksalt}
%% or target a restricted domain of parsers~\cite{narcissus, everparse}.
%% Consequently, despite the popularity of data-dependent parsers, there
%% has been little progress to date on verification support that can be
%% used to guarantee data-dependent safety properties for a rich class of
%% parser programs.

This paper directly addresses these challenges.  We do so by imposing
modest constraints on the host language capabilities available to
parser combinator programs; these constraints \emph{enable} motly automated
reasoning and verification, \emph{without} comprising the ability to
specify parsers with rich effectful, data-dependent safety properties. We manifest
these principles in the design of a
deeply-embedded DSL for OCaml
called \name\ that we use to express and verify parsers and the
combinators that compose them.  Our design provides a novel (and, to
the best of our knowledge, first) automated verification pathway for
this important application class.  This paper makes the following
contributions:
\begin{enumerate}
\item It details the design of an OCaml DSL \name\ that allows
  compositional construction of {\it data-dependent} parsers using
  a rich set of primitive parsing combinators along with an expressive
  specification language for describing safety properties relevant to
  parsing applications.
\item It presents an automated refinement type-based verification
  framework that validates the correctness of \name\ programs with
  respect to their specifications and which supports fine-grained
  effect reasoning and inference to help reduce specification
  annotation burden.
\item It justifies its approach through a detailed evaluation study
  over a range of complex real-world parser applications that
  demonstrate the feasibility and effectiveness of the proposed
  methodology.
\end{enumerate}

The remainder of the paper is organized as follows.  The next section
presents a detailed motivating example to illustrate the challenges
with verifying parser combinator applications and presents a detailed overview of \name\ that
builds upon this example.  We formalize \name's
specification language and type system in Secs.~\ref{sec:syntax}
and~\ref{sec:typing}.  Details about \name's implementation and
benchmarks demonstrate the utility of our framework is given in
Sec.~\ref{sec:evaluation}.  Related work and conclusions are given in
Secs.~\ref{sec:related} and~\ref{sec:conc}, respectively.

\section{Motivation and \name Overview}
\label{sec:overview}
To motivate our ideas and give an overview of \name, consider a parser
for a simplified C language {\it declarations, expressions and
  typedefs} grammar. The grammar must handle context-sensitive
disambiguation of \textit{typenames} and
\textit{identifiers}~\footnote{https://web.archive.org/web/20070622120718/http://www.cs.utah.edu/research/projects/mso/goofie/grammar5.txt}.
Traditionally, C-parsers achieve this disambiguation via cumbersome
{\it lexer hacks}\footnote{https://www.lysator.liu.se/c/ANSI-C-grammar-l.html} which use feedback from the symbol table maintained in the parsing into the lexer to distinguish variables from types.  Once the
disambiguation is outsourced to the lexer-hack, the C-decl grammar can
be defined using a context-free-grammar. For instance, the left hand side, Figure~\ref{fig:cgrammars}, presents a simplified context-free grammar production for a C
declaration.
\begin{figure}
\begin{subfigure}{.5\textwidth}
  \begin{lstlisting}[escapechar=\@,basicstyle=\small\sf,numbers=left,breaklines=true]
  decl ::= @\textbf{typedef}@ . type-expr . id=rawident 
      |  @\textbf{extern}@ ...
      | ...
  typename ::= rawident 
  type-exp ::= "int" | "bool" 
  expr ::=  ... | id=rawident   
\end{lstlisting}
\label{fig:cfg-c}
\end{subfigure}%
\begin{subfigure}{.5\textwidth}
\begin{lstlisting}[escapechar=\@,basicstyle=\small\sf,numbers=left,breaklines=true]
  decl ::= @\textbf{typedef}@ . type-expr . id=rawident [@$\lnot$@ id @$\in$@ (!identifiers)]
             {types.add id} 
	  | ...
  typename ::= x = rawident [x @$\in$@ (!types)]{return x}
  type-exp ::= "int" | "bool" 
  expr ::=  ... | id=rawident {identifiers.add id ; return id}
\end{lstlisting}
\label{fig:csg-c}
\end{subfigure}
\caption{Context-free and context-sensitive grammars for C declarations.}
\label{fig:cgrammars}
\end{figure}

Unfortunately, ad-hoc lexer-hacks are both tedious and error prone.
Further, this convoluted integration of the lexing and parsing phases makes
it challenging to validate the correctness of the parser implementation.
%\TODO{In fact there are actual errors}

A cleaner way to implement such a parser is to disambiguate {\it
  typenames} and {\it identifiers} when parsing by writing an actual
context-sensitive parser. One approach would be to define a shared {\it
  context} of two non-overlapping lists of {\sf types} and {\sf
  identifiers} and a stateful-parser using this context.  The modified
{\it context-sensitive} grammar is shown in right hand side, Figure~\ref{fig:cgrammars}.
% \SJ{I don't understand semantic action ``{x}'' on line 4}
% \AM{This is equivalent to return {\sf x}. In OCaml a var is an exp. }

The square brackets show context-sensitive
checks e.g. [$\lnot$ {\sf id} $\in$ {\sf (!identifier)}] checks that
the parsed {\sf rawident} token {\sf id} is not in the list of {\sf 
  identifiers}, while the braces show semantic actions associated with
parser reductions, e.g. \{{\sf typed.add id}\}, adds the token {\sf
  id} to {\sf types}, a list of identifiers seen thus far in the parse.

\begin{figure}
    \begin{subfigure}[t]{0.5\textwidth}
    \begin{lstlisting}[escapechar=\@,basicstyle=\small\sf,numbers=left,language=ML,showstringspaces=false]
let ids = ref []
let types = ref []
type decl = 
    Typedecl of {typeexp;string}
    | $\ldots$
type expression = 
    Address of expression 
    | Cast of string * expression
    | $\ldots$
    | Identifier of string 
@\scriptsize\texttt{\textcolor{gray}{
    \begin{tabbing}
  ex\=pression :\\ \label{line:typeexprspec}
  $\mathsf{PE^{stexc}}$\\
  \>\{$\forall$ h, \\
     \> ldisjoint (sel (h, ids), 
      sel (h, types)) = true)\} \\
  \>\quad            $\nu$ : expression result  \\
  \>\{$\forall$ h, $\nu$, h'.\=
    $\nu$ = Inl (v1) => \\
  \>\quad ldisjoint (sel (h', ids), 
       sel (h', types)) = true) \\
  \>\quad $\wedge$ $\nu$ = Inr (Err) =>\ 
   included(inp,h,h') = true \}
 \end{tabbing}}}@
 let expression = @\label{line:expressionparse}@ 
  $\M{do}$ char '(' 
    tn <- typename 
    char ')'
    e <- expression
    return Cast (tn, e))
  <|> $\ldots$
  <|> 
  $\M{do}$  @\label{line:beginidentifierexp}@ 
    id <- identifier
    let b = List.mem id !types @\label{line:memtypecheck}@ 
    if (!b) then     
      ids := id :: (!ids) @\label{line:addidentifiers}@
      return (Identifier id) 
    else 
        fail @\label{line:endidentifierexp}@
    \end{lstlisting}
    \end{subfigure}
\begin{subfigure}[t]{0.44\textwidth}
\begin{lstlisting}[escapechar=\@,basicstyle=\small\sf,numbers=right,breaklines=true,firstnumber=28,language=Haskell,showstringspaces=false]
 @\scriptsize\texttt{\textcolor{blue}{
  \begin{tabbing}
  typ\=edecl :\\ \label{line:typedeclspec}
  $\mathsf{PE^{stexc}}$\\
  \>\{$\forall$ h, \\
     \> ldisjoint (sel (h, ids), 
       sel (h, types)) = true) \}\\
  \>\quad            $\nu$ : tdecl result  \\
  \>\{$\forall$ h, $\nu$, h'.\=
    $\nu$ = Inl (v1) => \\
  \> \quad ldisjoint (sel (h', ids), 
       sel (h', types)) = true) \\
  \> \quad $\wedge$ $\nu$ = Inr (Err) => \
  included(inp,h,h') = true \}  
 \end{tabbing}}}@
 let typedecl = 
   $\M{do}$ 
     td <- keyword "typedef"
     te <- string "bool" <|> string "int" @\label{line:texp}@
     id <- indentifier @\label{line:identifierparse}@
     @\textcolor{gray}{(* incorrect-check: if (not (List.mem id !types)) then*)}@ @\label{line:tdeclcheckincorrect}@ 
     if (not (List.mem id !ids)) then @\label{line:tdeclcheck}@                                  
          types := id :: (!types) @\label{line:addtype}@
          return Tdecl {typeexp; id} 
     else 
          fail @\label{line:identifierparse'}@
@\scriptsize\texttt{\textcolor{gray}{
  \begin{tabbing}
  typ\=ename :\\ \label{line:typenamespec}
  $\mathsf{PE^{stexc}}$\\
  \>\{$\forall$ h. \\
     \> ldisjoint (sel (h, ids), 
      sel (h, types)) = true\} \\
  \>\quad $\nu$ : string result   \\
  \>\{$\forall$ h, $\nu$, h'.\=
    $\nu$ = Inl (v) => \\
   \> \quad mem (sel (h', types), v) = true \\
   \> \quad $\wedge$ $\nu$ = Inr (Err) => \
    included(inp,h,h') = true 
   \}
  \end{tabbing}}}@
let typename = @\label{line:typenameparse}@
    $\M{do}$ 
      x <- identifier 
      if (List.mem x !types) then @\label{line:checktypename}@
          return x
      else 
          fail    
    \end{lstlisting}
    \end{subfigure}
\caption{A simplified C-declaration parser written in \name.  Specifications
  in \textcolor{blue}{blue} are provided by the programmer; specifications
  in \textcolor{gray}{gray} are inferred by \name.}
\label{fig:c-decl}
\end{figure}
\begin{wrapfigure}{r}{.40\textwidth}
  \begin{center}
  \begin{lstlisting}[escapechar=\@,basicstyle=\small\sf,breaklines=true,language=ML, showspaces=false]
    type 'a t
    val eps : unit t
    val bot : 'a t
    val char : char -> char t 
    val (>>=) : 'a t -> (a -> 'b t) -> 'b t 
    val <|> : 'a t -> 'a t -> 'a t 
    val fix : ('b t -> 'b t) -> 'b t
    val return : 'a -> 'a t
  \end{lstlisting}
  \end{center}
  \caption{Signatures of primitive parser combinators supported by \name.}
  \label{fig:api}
  \vspace*{-.2in}
  \end{wrapfigure}

Given this grammar, we can use parser combinator
libraries~\cite{parsec,mparser} in our favorite language to implement
a parser for C language declarations. Unfortunately, although cleaner
than the using unwieldy lexer hacks, it is still not obvious how we
might verify that implementations actually satisfy the desired {\it
  disambiguation} property, i.e. {\it typenames} and {\it identifiers}
do not overlap. In the next section we provide an overview of \name\
that informally presents our solution to this problem.

%\AM{Add type annotations to the program}
\subsection{\name Surface Language}

An important design decision we make is to provide a surface syntax
and API very similar to conventional monadic parser combinator
libraries like \S{Parsec}~\cite{parsec} in Haskell or
\S{mParser}~\cite{mparser} in OCaml; the core API that \name\ provides
has the signature shown in Figure~\ref{fig:api}. The library defines
a number of primitive combinators: \S{eps} defines a parser for the
empty language, \S{bot} always fails, and \S{char \ c} defines a
parser for character \S{c}. Beyond these, the library also provides a
bind (\S{>>=}) combinator for monadically composing parsers, a choice
({<|>}) combinator to non-deterministically choose among two parsers,
and a \S{fix} combinator to implement recursive parsers.
The \S{return \ x} is a parser which always succeeds with a value \S{x}.
As we
demonstrate, these combinators are sufficient to derive a number of
other useful parsing actions such as \S{many}, \S{count}, etc. found
in these popular combinator libraries. From the parser writer's
perspective, \name\ programs can be expressed using these combinators
along with a basic collection of other non-parser expression forms
similar to those found in an ML core language, e.g., first-class
functions, \S{let} expressions, references, etc.
% For instance, we can define an always successful parser, {\sf return x}, as follows:
% \begin{lstlisting}[escapechar=\@,basicstyle=\small\sf,language=ML]
%   let return x = @$\lambda \_$@. x
% \end{lstlisting}
For instance a parser for {\sf option p}, which either parses an empty string or anything that 
{\sf p} parses can be written:
\begin{lstlisting}[escapechar=\@,basicstyle=\small\sf,language=ML]
  let option p = (eps >>= @$\lambda$@_. return None) <|> (p >>= @$\lambda$@ x. return Some x) \end{lstlisting}
We can also define more intricate parsers like {\it Kleene-star} and {\it Kleene-plus}:
\begin{lstlisting}[escapechar=\@,basicstyle=\small\sf,language=ML]
  let star p = fix ($\lambda$ p_star. eps  <|> p >>= @$\lambda$@ x. p_star >>= @$\lambda$@ xs . return (x :: xs) ) 
  let plus p = fix ($\lambda$ p_star. p  <|> p >>= @$\lambda$@ x. p_star >>= @$\lambda$@ xs . return (x :: xs) ) 
\end{lstlisting}

Figure~\ref{fig:c-decl} shows a \name
implementation that parses a valid C language \textsf{decl}.\footnote{For now, ignore the specifications given in \textcolor{gray}{gray} and \textcolor{blue}{blue}.}  The parser uses two mutable
lists to keep track of {\sf types} and {\sf identifiers}.
The structure is similar to the original data-dependent grammar, 
even though the program uses ML-style operators for assignment
and dereferencing. For ease of presentation, we have written the
program using {\it do-notation} as syntactic sugar for
\name's monadic bind combinator.

The {\sf typedecl} parser follows the grammar and parses the keyword
{\it typedef} using the {\sf keyword} parser (not shown).\footnote{\name, like
  other parser combinator libraries provides a library of parsers for
  parsing keywords, identifiers, natural numbers, strings, etc.} It 
uses a choice combinator (<|>) (line~\ref{line:texp}), which
has a semantics of a non-deterministic choice between  two
sub-parsers. The interesting case occurs while parsing an {\sf
  identifier} (lines~\ref{line:identifierparse} -
~\ref{line:identifierparse'}), in order to enforce disambiguation
between {\it typenames} and {\it identifiers}, the parser needs to
maintain an invariant that the two lists, {\sf types} for parsed {\it
  typenames} and {\sf ids} for parsed {\it identifiers} are always
{\it disjoint} or {\it non-overlapping}.

In order to maintain the non-overlapping list invariant, a parsed
identifier token (line~\ref{line:identifierparse}) can be a valid
typename only if it is not parsed earlier as an identifier
expression. i.e. it is not in the {\it ids} list. The parser performs
this check at (line~\ref{line:tdeclcheck}). If this check succeeds,
the list of typenames ({\it types}) is updated and a {\sf decl} is
returned, else the parsing fails.

The disambiguation decision is required during the parsing of an {\sf
  expression}. The expression parser defines multiple choices. The
parser for the {\it casting} expression parses a typename followed by
a recursive call to expression. The {\sf typename} parser in turn
(line~\ref{line:typenameparse}) parses an identifier token and checks
that the identifier is indeed a {\sf typename}
(line~\ref{line:checktypename}) and returns it, or fails.

The {\sf ids} list is updated during parsing an {\sf identifier} expression (line~\ref{line:beginidentifierexp}), here again to maintain disambiguation, before adding a string to the {\sf ids} list, its non-membership in the current {\sf types} list is checked (line~\ref{line:memtypecheck}).

Although the above parser program is easy to comprehend given how
closely it hews to the grammar definition, it is still nonetheless
non-trivial to verify that the parser actually satisfies the required
disambiguation safety property.  For example, an implementation
in which line~\ref{line:tdeclcheckincorrect} is replaced with the
commented expression above it would incorrectly check membership on
the wrong list.  We describe how \name\ facilitates verification of
this program in the following section.

\subsection{Specifying Data-dependent Parser Properties}
Intuitively, verifying the above-given parser for the absence of
overlap between the {\it typenames} and {\it identifiers} requires
establishing the following partial correctness property: if the
\textsf{types} and \textsf{identifiers} lists do not overlap when the
\textsf{typedecl} parser is invoked, and the parser terminates without
an error, then they must not overlap in the output state generated by
the parser.  Additionally, it is required that the parser consumes
some prefix of the input list. \name provides an expressive
specification language to specify properties such as these.

\name allows standard ML-style inductive type definitions that can be
refined with {\it qualifiers} similar to other refinement type
systems~\cite{liquidoriginal,VS+14,relref}.  For instance, we can
refine the type of a list of strings to only denote {\it non-empty}
lists as: \textsf{type nonempty = \{ $\nu$ : [string]\ |\ len ($\nu$)
  $>$ 0 \}}.  Here, $\nu$ is a special bound variable representing a
list and ({\sf len $\nu$ > 0}) is a {\it refinement} where {\sf len}
is a {\it qualifier}, a predicate available to the type system that
captures the length property of a list.

%% However, standard refinement types lack expressiveness to specify
%% stateful properties such as the {\it disambiguation} property
%% discussed above.  To understand why, note that such properties need to
%% relate the generated output value of a parser with the state of the lists
%% for references ({\sf ids}, {\sf types}) when the parser terminates to
%% the initial values in these lists.  Unfortunately, a standard
%% refinement-type does not expose the state information. Furthermore,
%% many of these parsers may fail leading to the failure/exception (lines
%% ~\ref{line:identifierparse'}) or have
%% non-determinism(~\ref{line:texp}). Thus to reason over these parsers,
%% \name types should be able able to express effects like failure,
%% non-determinism and properties over mutable states.

\paragraph{Specifying effectful safety properties}
Standard refinement type systems, however, are ill-suited to specify safety properties for
effectful computation of the kind expressible by parser combinators.
Our specification language, therefore, also provides a type for
effectful computations. We use a specification monad (called a \emph{Parsing
  Expression}) of the form
$\mathsf{PE^{\el}}$ \{ $\phi$ \} $\nu$ : $\tau$ \{ $\phi'$ \} that is
parameterized by the \emph{effect} of the computation $\el$ (e.g.,
\S{state}, \S{exc}, \S{nondet}, and their combinations like \S{stexc} for (both \S{state} and \S{exc}), \S{stnon} (for both \S{state} and \S{nondet}), etc.); and Hoare-style pre- and
post-conditions~\cite{htt,fstar,vcc}.  Here, $\phi$ and $\phi'$ are first-order
logical propositions over qualifiers applied to program variables and
variables in the type context.  The precondition $\phi$ is defined
over an abstract input heap {\sf h} while the postcondition $\phi'$ is
defined over input heap {\sf h}, output heap {\sf h'}, and the special
result variable $\nu$ that denotes the result of the computation.
Using this monad, we can specify a safety property for the {\sf
  typedecl} subparser as shown at line~\ref{line:typedeclspec} in
Figure~\ref{fig:c-decl}. 
The type should be understood as follows: The {\it effect} label \S{stexc} defines that the parser may have both \S{state} effect as it reads and updates the context; and \S{exc} effect as the parser may fail. 
The precondition defines a
property over a list of identifiers {\sf ids} and a list of typenames
{\sf types} in the input heap {\sf h} via the use of the built-in
qualifier {\sf sel} that defines a select operation on the
heap~\cite{mccarthy}; here, $\nu$ is bound to the result of the parse. \name
also allows user-defined qualifiers, like the qualifier {\sf\small
  lsdisjoint}. It establishes the {\it disjointness/non-overlapping}
property between two lists.  This qualifier is defined using the
following definition:
\begin{lstlisting}[escapechar=\@,basicstyle=\small\sf,breaklines=true]
   qualifier lsdisjoint [] l2 -> true 
                  | l1 [] -> true
                  | (x :: xs) l2 -> member (x, l2) = false @$\wedge$@ lsdisjoint (xs, l2)
                  | l1 (y :: ys) -> member (y, l1) = false @$\wedge$@ lsdisjoint (l1, ys)
\end{lstlisting}
This definition also uses another qualifier for list membership called
{\sf member}.  \name automatically translates these user-defined
qualifiers to axioms, logical sentences whose validity is assumed by
the underlying theorem prover during verification.  For instance,
given the above qualifier, \name generates axioms like:
\begin{lstlisting}[escapechar=\@,basicstyle=\small\sf,breaklines=true]
   Axiom1: $\forall$ l1, l2 : $\alpha$ list. (empty(l1) $\vee$ empty (l2)) => lsdisjoint (l1, l2) = true
   Axiom2: $\forall$ xs, l2: $\alpha$ list, x : $\alpha$. lsdisjoint (xs, l2) = true @$\wedge$@ member (x, l2) = false => lsdisjoint ((x::xs), l2) = true
   Axiom3: $\forall$ l1, l2: $\alpha$ lsdisjoint (l1, l2) <=> lsdisjoint (l2, l1)
\end{lstlisting}
The specification (at line~\ref{line:typedeclspec}) also uses another
qualifier, {\sf included(inp,h,h')}, which captures the monotonic consumption property of the input list \S{inp}. The qualifier is true when the
remainder \textsf{inp} after parsing in \textsf{h'} is a suffix of the original \S{inp} list in \textsf{h}.

The types for other parsers in the figure can be specified as shown at
lines~\ref{line:typeexprspec},~\ref{line:typenamespec}, etc.; these
types shown in \textcolor{gray}{gray} are automatically inferred by
\name's type inference algorithm. For example, the type for the {\sf
  typename} parser (line ~\ref{line:typenamespec}) returns an optional
string (\textsf{result} is a special option type) and records that
when parsing is successful, the returned string is added to the {\sf
  types} list, and when unsuccessful, the input is still monotonically
consumed.

\paragraph{Verification using \name}
Note that the pre-condition in the specification ({\sf\small
  lsdisjoint (Id, Ty) = true)}) and the type ascribed to the
membership checks in the implementation (line ~\ref{line:tdeclcheck})
are sufficient to conclude that the addition of a typename to the \S{types} list
(line ~\ref{line:addtype}) maintains the {\sf\small lsdisjoint}
invariant as required by the postcondition.

In contrast, an erroneous implementation that omits the membership
check or replaces the check at line ~\ref{line:tdeclcheckincorrect}
with the commented line above it will cause type-checking to fail.
The program will be flagged ill-typed by \name.  For this example,
Morpheus generated {\sf 21} verification conditions (VCs) for the
control-path representing a successful parse and generated {\sf 5} VCs
for the failing branch.  We were able to discharge these VCs to
the SMT solver Z3~\cite{z3}, which took {\sf 6.78} seconds to verify the former and
{\sf 1.90} seconds to verify the latter.

\section{\name\ Syntax and Semantics}
\label{sec:syntax}

\subsection{\name\ Syntax}

Figure ~\ref{fig:syntax} defines the syntax of $\lambda_{sp}$, a core
calculus for \name\ programs.  The language is a call-by-value
polymorphic lambda-calculus with effects, extended with primitive
expressions for common parser combinators and a refinement type-based
specification language. A $\lambda_{sp}$ value is either a constant
drawn from a set of base types (\textsf{int}, \textsf{bool}, etc.), as
well as a special \textsf{Err} value of type {\sf exception}, an
abstraction, or a constructor application. Variables bound to
updateable locations (\rp) are distinguished from variables introduced
via function binding (\xp).  A $\lambda_{sp}$ expression \eip\ is
either a value, an application of a function or type abstraction,
operations to dereference and assign to top-level locations (see
below), polymorphic \B{let} expressions, reference binding
expressions, a \textbf{match} expression to pattern-match over type
constructors, a \textbf{return} expression that lifts a value to an
effect, and various parser primitive expressions that define parsers
for the empty language (\textsf{eps}), a character (\textsf{char})
parser, and $\bot$, a parser that always fails. Additionally, the
language provides combinators to monadically compose parsers (\S{>>=}),
to implement parsers defined in terms of a non-deterministic choice of
its constituents (\S{<|>}), and to express parsers that have recursive
($\mu\,(\textsf{x} : \tau). p$) structure.
% \ADD{Note that we have both \S{eps} and \S{return}, \S{return} is a traditional unit operator in an effectful calculus, and it can lift any pure expression to an effectful computation. Consequently, \S{eps} can be modeled using return. However, the return can also be applied to any (non-parser) expression $e$ or a value $v$ in the calculus, e.g., return $\mathsf{D_i}$ ($\ldots$) or return true. This keeps the parsing, and non-parsing expressions separate while coming in handy while writing effectful semantics actions associated with parsers.}

We restrict how effects manifest by requiring reference creation to
occur only within \B{let} expressions and not in any other expression
context. Moreover, the variables bound to locations so created
($\ell$) can only be dereferenced or assigned to and cannot be
supplied as arguments to abstractions or returned as results since
they are not treated as ordinary expressions. This stratification,
while arguably restrictive in a general application context, is
consistent with how parser applications, such as our introductory
example are typically written and, as we demonstrate below, do not
hinder our ability to write real-world data-dependent parser
implementations. Enforcing these restrictions, however, provides a
straightforward mechanism to prevent aliasing of effectful components
during evaluation, significantly easing the development of an
automated verification pathway in the presence of parser
combinator-induced computational effects.
%\AM{Figure~\ref{fig:syntax}, let expressions should be changed to let x = v in e and let l = ref e in e}
%\AM{Change the font for >>= in the Core Language Syntax}
\begin{figure}[t]
\begin{tabbing}
  \textbf{Expression\ Language}\\[-4pt]
\end{tabbing}  
\[  
\begin{array}{lclll} % centered columns (4 columns)
\cp, \B{unit}, {\sf Err}\ & \in & \mathit{Constants} \\ 
\xp\ & \in & \mathit{Vars}\\
\textsf{inp}, \rp\ & \in & \mathit{RefVars}\\
\vp\ & \in & \mathit{Value} & ::= & \cp\,  \mid\, \lambda\, (\xp:\tau).\,\eip \mid\ \Lambda\, (\alpha).\, \eip\ \mid\
      \mathsf{D_i}\, \overline{t_k}\, \overline{\vp_{j}}\\
\eip\ & \in & \mathit{Exp} & ::= &  \vp\, \mid\ \xp\, \mid\ p\, \mid\ \eip\ \xp\, \mid\ \eip\, [\mathsf{t}]%%
               \mid\  \mathsf{deref}\, \rp\, \mid\ \rp\ \mathsf{:=}\, \eip \\
&&&&         \mid\ \mathbf{let}\, \xp\ = \vp\, \mathbf{in}\ \eip\, 
             \mid\ \mathbf{let}\, \rp\ = \S{ref}\, \eip\, \mathbf{in}\ \eip\, \\
&&&&         \mid\ \mathbf{match}\, \vp\ \mathbf{with}\ \mathsf{D_i}\, %%
               \overline{\alpha}\, \overline{\xp_{j}} \rightarrow \eip\, \mid\ \mathbf{return}\ \eip  \\
\mathit{p} & \in & \mathit{Parsers} & ::= &     
             \mid\ \mathsf{eps}\, \mid\, \bot\, \mid\, \mathsf{char}\, \eip\,
              \mid\ (\mu\, (\xp\, :\, \tau).\, p) 
              \mid\ p\, >>=\, \eip
              \mid\ p\, <\mid>\, p 
             % \mid\ \M{do} [\eip]
\end{array}\]\\[1pt]
\begin{tabbing}
\textbf{Specification Language}\\[-4pt]
\end{tabbing}
\[
\begin{array}{lclll} % centered columns (4 columns)
\alpha & \in & \mathit{Type Variables}\\
\mathsf{TN} & \in & \mathit{User\ Defined\ Types } & ::= &  \alpha\, \mathsf{list}, \alpha\ \mathsf{tree}, \ldots\\
\mathsf{t}  & \in & \mathit{Base\ Types} & ::= & \alpha\, \mid\ \mathsf{int}\, \mid\ \mathsf{bool}\, \mid\ \S{unit}\, \mid\ %%
                \mathsf{heap}\, \mid\ \mathsf{TN}\, \mid\ \S{t}\ \S{result}\ \mid\ \S{t}\, \S{ref} \mid\ \mathsf{exc} \\
\tau & \in & \mathit{Type} & ::= & \{\nu\, :\, \mathsf{t} \mid\ \phi\} \mid\ (\xp\, :\, \tau) \rightarrow \tau %%
                \mid\ \mathsf{PE^{\el}} \{ \phi_1 \} \nu\, :\, \mathsf{t}\, \{ \phi_2 \}  \\
\el & \in & \mathit{Effect\ Labels}  & ::= & \mathsf{pure}\, \mid\ \mathsf{state}\, \mid\ \mathsf{exc}\, \mid\ \mathsf{nondet} \mid \ldots\\
\sigma & \in & \mathit{Type\ Scheme} & := & \tau\, \mid\ \forall \alpha.\, \tau \\
\qual  & \in & \mathit{Qualifiers}   & := &  \mathit{QualifierName}(\overline{x_i}) \\
\phi,P & \in & \mathit{Propositions} & ::=& \mathsf{true}\, \mid\ \mathsf{false} \mid\ \qual \mid\ \qual_1 = \qual_2 \\%%
&&&&            \mid\ \neg\ \phi\ \mid\ \phi\ \wedge\ \phi\  \mid\ \phi\ \lor \phi  \mid \phi \Rightarrow\ \phi %%
                \mid\ \forall (\xp\, :\, \mathsf{t}). \phi\ \\
\Gamma\ & \in & \mathit{Type\ Context} & ::= & \varnothing\, \mid\ \Gamma,\ \xp\, :\, \sigma\ \mid\ \Gamma,\ \rp\, :\, \tau\ \S{ref}\, \mid\ \Gamma, \phi \\
\Sigma  & \in & \mathit{Constructors} & ::= & \varnothing\, \mid\ \Sigma,\ %%
                \mathsf{D_i}\ \overline{\alpha_k}\ \overline{\xp_j : \tau_{j}}\, \rightarrow\ \tau\\
\hline
\end{array}\]
\caption{$\lambda_{sp}$ Expressions and Types } % title of Table
\label{fig:syntax} % is used to refer this table in the text
\end{figure}

\subsection{Semantics}
Figure ~\ref{fig:semantics-parser} presents a big-step operational
semantics for $\lambda_{sp}$ parser expressions; the semantics of
other terms in the language is standard.  The semantics is defined via
an evaluation relation ($\Downarrow$) that is of the form
$(\mathcal{H}; \mathsf{\eip}) \Downarrow (\mathcal{H'}; \vp)$.
The relation defines how a \name\ expression {\sf \eip} evaluates with
respect to a heap $\mathcal{H}$, a store of locations to base-type
values, to yield a value $\vp$, which can be a normal value or an
exceptional one, the latter represented by the exception constant {\sf
  Err}, and a new heap $\mathcal{H'}$.

The empty string parser (rule {\sc P-eps}) always succeeds, returning
a value of type \textbf{unit}, without changing the heap. A ``bottom'' ($\bot$) parser on the other hand always fails, producing an exception value, also without
changing the heap. If the argument \eip\ to a character parser {\sf
  char} yields value (a char `c'), and `c' is the head of the input string
(denoted by \S{inp}) being parsed, the parse succeeds (rule {\sc
  P-char-true}), consuming the input and returning `c', otherwise,
the parse fails, with the input not consumed and the distinguished
\S{Err} value being returned (rule {\sc P-char-false}). The fixpoint
parser $\mu\, x.p$ ({\sc P-fix}) allows the construction of recursive parser
expressions. 
% \textcolor{red}{Note that the recursive call assumes a well-founded partial-order $\prec_{\S{t}}$ on the terms of {\sf t} when substituting the parser for the bound variable {\sf x} in $p$. Thus, all recursive calls in \name are terminating by definition.} 
The monadic bind parser primitive (rule {\sc
  P-bind-success}) binds the result of evaluating its parser
expression to the argument of the abstraction denoted by its second
argument, returning the result of the evaluating the abstraction's
body ({\sc P-bind-success}); the {\sc P-bind-err} rule deals with the
case when the first expression fails. Evaluation of ``choice''
expressions, defined by rules {\sc P-choice-l} and {\sc P-choice-r},
 introduce an unbiased choice semantics over two parsers allowing non-deterministic choices in parsers. 
%\textcolor{red}{Note that we assume a non-deterministic choice semantics allowing ambiguous parser definitions, however this semantics can easily be changed to other {\it prioritized choice} as in Parsing expression grammars~\cite{peg}}.

\begin{figure}[h]
\begin{flushleft}
\fbox{
        $(\mathcal{H}; \mathsf{\eip}) \Downarrow (\mathcal{H'}; \vp) $ 
           
} 
\bigskip
\end{flushleft}
\begin{minipage}{0.6\textwidth}
  \inference[{\sc P-eps}]{} {(\mathcal{H}; {\sf eps}) \Downarrow (\mathcal{H}; {\sf ()})} 
\end{minipage}
\hfill
\begin{minipage}{0.6\textwidth}
\inference[{\sc P}-$\bot$]{ }{(\mathcal{H}; {\sf \bot}) \Downarrow (\mathcal{H}; {\sf Err})}  
\end{minipage}
\hfill
% \begin{minipage}{0.6\textwidth}
% \inference[{\sc P-fix}]{ (\mathcal{H}; [\mu \mathsf{x:\sigma^{\prec_{\S{t}}}}. p/\S{x}]p)) \Downarrow (\mathcal{H'}; v)  }
%                                   {(\mathcal{H}; \mu \mathsf{x:\sigma}. p) \Downarrow (\mathcal{H'}; v) } 
% \end{minipage}\\[10pt]
\begin{minipage}{0.6\textwidth}
\inference[{\sc P-fix}]{ (\mathcal{H}; [\mu \mathsf{x:\sigma}. p/\S{x}]p)) \Downarrow (\mathcal{H'}; \vp)  }
                                  {(\mathcal{H}; \mu \mathsf{x:\sigma}. p) \Downarrow (\mathcal{H'}; \vp) } 
\end{minipage}\\[10pt]
% \begin{minipage}{0.4\textwidth}
%   \inference[{\sc P-char-true}]{(\mathcal{H}; \eip) \Downarrow (\mathcal{H'}; \emph{v}) & {\sf \mathcal{H'}(inp) = (\emph{v}::s)}\\
%                                  \mathcal{H''} = \mathcal{H'}[{\sf inp} \mapsto s]}
%                              {(\mathcal{H}; \eip) \Downarrow (\mathcal{H''}; \emph{v})}
% \end{minipage}\\[10pt]
\begin{minipage}{0.4\textwidth}
  \inference[{\sc P-char-true}]{(\mathcal{H}; \eip) \Downarrow (\mathcal{H}; \emph{`c'}) & {\sf \mathcal{H}(inp) = (\emph{`c'}::s)}\\
                                 \mathcal{H'} = \mathcal{H}[{\sf inp} \mapsto s]}
                             {(\mathcal{H}; {\sf char} \ \eip) \Downarrow (\mathcal{H'}; \emph{`c'})}
\end{minipage}\\[10pt]
\begin{minipage}{0.4\textwidth}
  \inference[{\sc P-char-false}]{(\mathcal{H}; \eip) \Downarrow (\mathcal{H}; \emph{`c'}) & {\sf \mathcal{H} (inp)} \neq  (\emph{`c'} :: s) \\
                                  \mathcal{H'} = \mathcal{H}[{\sf inp} \mapsto {\sf inp}]}
                              {(\mathcal{H}; {\sf char} \ \eip) \Downarrow (\mathcal{H'}; {\sf Err})) } 
\end{minipage}\\[10pt]
\begin{minipage}{0.4\textwidth}
  \inference[{\sc P-bind-success}]{ 
    (\mathcal{H};  p)  \Downarrow (\mathcal{H'}; \vp_1) \quad
    (\mathcal{H'};  \eip) \Downarrow (\mathcal{H'}; (\lambda\, x\, :\, \tau.\, \eip'))\\ 
    (\mathcal{H'}; [\ \vp_1 / x]\eip') \Downarrow  (\mathcal{H''}; \vp_2)}
    {(\mathcal{H}; p \textnormal{>>=} \eip) \Downarrow (\mathcal{H''}; \vp_2)} 
\end{minipage}\\[10pt]
\begin{minipage}{0.6\textwidth}
  \inference[{\sc P-bind-err}]{
    (\mathcal{H}; p) \Downarrow (\mathcal{H'}; {\sf Err})}
  {(\mathcal{H}; p \textnormal{>>=} \eip) \Downarrow (\mathcal{H'}; {\sf Err})) } 
\end{minipage}\\[10pt]
\begin{minipage}{0.5\textwidth}
\inference[{\sc P-choice-l}]{ (\mathcal{H}; p_1) \Downarrow (\mathcal{H'}; {\sf \vp_1}) }
                                  {(\mathcal{H}; (p_1 <\mid> p_2)) \Downarrow 
(\mathcal{H'}; {\sf \vp_1})) } 
\end{minipage}
\hfill
\begin{minipage}{0.4\textwidth}
\inference[{\sc P-choice-r}]{ (\mathcal{H}; p_2) \Downarrow (\mathcal{H''}; {\sf \vp_2}) }
                                  {(\mathcal{H}; (p_1 <\mid> p_2)) \Downarrow 
(\mathcal{H''}; {\sf \vp_2})) } 
\end{minipage}
\bigskip
\caption{Evaluation rules for $\lambda_{sp}$ parser expressions}
\label{fig:semantics-parser}
\end{figure}

\section{Typing $\lambda_{sp}$ Expressions}
\label{sec:typing}

\subsection{Specification Language}
\label{sec:spec}  
%\AM{Add, explanations for the polymorphic structures}
The syntax of \name's type system is shown in the bottom of
Figure~\ref{fig:syntax} and permits the expression of {\it base types}
such as integers, booleans, strings, etc., as well as a special
\S{heap} type to denote the type of abstract heap variables like \S{h,
  h'} found in the specifications described below. There are additionally
user-defined datatypes \textsf{TN} (\textsf{list}, \textsf{tree},
etc.), a special sum type ({\sf t} result) to define two options of a
successful and exceptional result respectively, and a special
exception type.

More interestingly, base types can be refined with \emph{propositions}
to yield monomorphic refinement types.  Such
types~\cite{fstar,liquidoriginal,VS+14} are either {\it base refinement
  types}, refining a base typed term with a refinement; {\it
  dependent function types}, in which arguments and return values of
functions can be associated with types that are refined by
propositions; or a {\it computation type} specifying a type for an
effectful computation.  Effectful computations are refined using an
effect specification monad
\[\mathsf{PE}^{\el}\ \{\forall\,\mathsf{h}.\phi_1\}\, \nu : \mathsf{t}\,
\{\forall\, \mathsf{h},\nu,\mathsf{h'}.\phi_2\}\] that encapsulates a
base type \textsf{t}, parameterized by an effect label $\el$, with
Hoare-style pre- ($\{\forall\,\mathsf{h}.\phi_1\}$) and post-
($\{\forall\,\mathsf{h},\nu,\mathsf{h'}.\phi_2\}$) conditions.  This
type captures the behavior of a computation that (a) when executed in
a pre-state with input heap \S{h} satisfies proposition $\phi_1$; (b)
upon termination, returns a value denoted by $\nu$ of base type \S{t}
along with output heap \S{h'}; (c) satisifies a postcondition $\phi_2$
that relates \S{h}, $\nu$, and \S{h'}; and (d) whose effect is
over-approximated by effect label
$\el$~\cite{param-monad-effect,wadler1}. An effect label $\el$ is
either (i) a \S{pure} effect that records an effect-free computation;
(i) a \S{state} effect that signifies a stateful computation over the
program heap; (ii) an exception effect \S{exc} that denotes a
computation that might trigger an exception; (iii) a \S{nondet} effect
that records a computation that may have non-deterministic behavior;
or (iv) a \emph{join} over these effects that reflect composite
effectful actions. The need for the last is due to the fact that
effectful computations are often defined in terms of a composition
of effects, e.g. a parser oftentimes will define a computation that
has a state effect along with a possible exception effect. To capture
these composite effects, base effects can be joined to build a finite
lattice that reflects the behavior of computations which perform
multiple effectful actions, as we describe below.

Propositions ($\phi$) are first-order predicate logic formulae over
base-typed variables.  Propositions also include a set of qualifiers
which are applications of user-defined uninterpreted function symbols
such as \S{mem, size} etc. used to encode properties of program
objects, \textsf{sel} used to model accesses to the heap, and
\textsf{dom} used to model membership of a location in the heap, etc.
Proposition validity is checked by embedding them into a decidable
logic that supports equality of uninterpreted functions and linear
arithmetic (EUFLIA).

A type scheme ($\sigma$) is either a monotype ($\tau$) or a
universally quantified polymorphic type over type variables expressed
in prenex-normal form ($\forall\ \alpha.\sigma$).  A
\name\ specification is given as a type scheme.

There are two environments maintained by the \name\ type-checker: (1)
an environment $\Gamma$ records the type of variables, which can include
variables introduced by function abstraction as well as bindings to
references introduced by \S{let} expressions, along with a set of
propositions relevant to a specific context, and (2) an environment
$\Sigma$ maps datatype constructors to their signatures.  Our typing
judgments are defined with respect to a typing environment\\
\vspace*{-.1in}
\begin{center}\small
        $\Gamma$ ::= . $\mid$ $\Gamma$, \S{x} : $\sigma$ $\mid$ $\Gamma$, \rp : $\tau$\, \S{ref} \\
\end{center}
\noindent that is either empty, or contains a list of bindings of
variables to either type schemes or references.  The rules have two
judgment forms: ($\Gamma \vdash$ $\eip$ : $\sigma$) gives a type
for a \name\ expression $\eip$ in $\Gamma$; and ($\Gamma \vdash$
$\sigma_1$ <: $\sigma_2$) defines a dependent subtyping rule under
$\Gamma$.

Since our type expressions contain refinements, we generalize the
usual notion of type substitution to reflect substitution within
refined types:
\[
\begin{array}{rl}
  [x_a/x]\{ \nu : \S{t} | \phi \} &= \{ \nu : \S{t} | [x_a/x]\phi \}\\
  [x_a/x](y : \tau) \rightarrow \tau' &= (y : [x_a/x]\tau) \rightarrow [x_a/x]\tau', y \neq x\\
  [x_a/x]\mathsf{PE}^{\el} \{ \phi_1 \} \{ \nu : \S{t} \} \{ \phi_2 \} &=
      \mathsf{PE}^{\el} \{ [x_a/x]\phi_1 \} \{ \nu : \S{t} \} \{ [x_a/x]\phi_2 \}
\end{array}
\]

\subsection{Typing Base Expressions}

Figure~\ref{fig:typing-base} presents type rules for non-parser
expressions. The type rules for non-reference variables, functions, and
type abstractions ({\sc T-typ-fun}) are standard. The syntax for function application
restricts its argument to be a variable,  allowing us to record the
argument's (intermediate) effects in the typing environment when
typing the application as a whole.
%% , in particular when checking the
%% validity of the application results pre- and post-conditions in the
%% case where the abstraction yields a computational effect. \ADD{Such syntactic restrictions are enforced by a {\it normalization} phase before the typechecking.}

\begin{figure*}[t]
\begin{flushleft}
\bigskip
{\bf Base Expression Typing}\quad \fbox{\small $\Gamma \vdash$ $\eip$ : $\sigma$}
\end{flushleft}
\begin{minipage}{0.1\textwidth}
\begin{center}
  \inference[{\sc T-var}]{\Gamma (\S{x}) = \sigma}
                         {\Gamma \vdash \S{x} : \sigma}
\end{center}
\end{minipage}
\begin{minipage}{0.3\textwidth}
\begin{center}
\inference[{\sc T-fun}]{\Gamma, \S{x} : \tau_1 \vdash \eip : \tau_2}
		            {\Gamma \vdash \lambda (\S{x : \tau_1}).\eip\, : \tau_1 \rightarrow \tau_2}
\end{center}  
\end{minipage}
\begin{minipage}{0.3\textwidth}
\begin{center}
\inference[{\sc T-typApp}]{\Gamma \vdash \Lambda \alpha. \eip : \forall \alpha. \sigma}
		    {\Gamma \vdash {\Lambda \alpha. \eip}[\S{t}] : [\S{t}/\alpha]\sigma}
\end{center}  
\end{minipage}\\[5pt]
\bigskip
\begin{minipage}{0.4\textwidth}
\inference[{\sc T-App}]{\Gamma \vdash \eip_f : (x\, :\, \{ \nu : \S{t} \mid \phi_x\}) \rightarrow \S{PE}^{\el} \{\phi\}\ \nu\ : \S{t}\ \{\phi'\}\
                                   & \Gamma \vdash \xp_a : \{ \nu : \S{t} \mid \phi_x\} }
                   {\Gamma \vdash \eip_f\ \xp_a\, : [\xp_a/x]\S{PE}^{\el} \{\phi\}\ \nu\ : \S{t}\ \{\phi'\}}
\end{minipage}  
\begin{minipage}{0.4\textwidth}
\inference[{\sc T-typFun}]{\Gamma \vdash \S{e} : \sigma & \alpha \notin FV (\Gamma)}
		    {\Gamma \vdash \Lambda \alpha. \S{e} : \forall \alpha. \sigma}
\end{minipage}\\[5pt]
\bigskip
\begin{minipage}{0.4\textwidth}
\begin{center}  
\inference[{\sc T-return}]{\Gamma \vdash \eip\ : \{ \nu : \S{t} \mid \phi \} }%, \pi  \ }
	                    {\Gamma \vdash \mathbf{return}\, \eip :
                                \mathsf{PE}^{\S{pure}} \{ \forall \S{h}. \S{true} \}\, \nu :\, \S{t} \
                                \{ \forall \S{h}, \nu, \S{h'}. \S{h' = h} \wedge \phi \}}%, \pi}
\end{center}
\end{minipage}\\[5pt]
\bigskip
\begin{minipage}{0.4\textwidth}
\begin{center}
\inference[{\sc T-let}]{\Gamma \vdash \vp\ : \forall\alpha.\sigma & \Gamma, \xp\ : \forall\alpha.\sigma \vdash \eip_2 : \sigma' }
	                 {\Gamma \vdash \mathbf{let}\, \xp\, =\, \vp\, \mathbf{in}\, \eip_2\, :\, \sigma'}
\end{center}
\end{minipage}\\[5pt]
\bigskip
\begin{minipage}{.5\textwidth}
\begin{center}
\inference[{\sc T-capp}]{\Sigma (D_i) = \forall \overline{\alpha_k}. \overline{\S{x_j : \tau_j}} 
\rightarrow \tau & \forall i, j. \Gamma \vdash \S{v}_j : [\overline{\S{t}_k / \alpha_k }][\overline{\S{v_j / x_j}}]\tau_j}
		{\Gamma \vdash D_i\ \overline{\S{t}_k} \overline{\S{v}_j} : [\overline{\S{t / \alpha }}] [\overline{\S{v_j / x_j}}]\tau}
\end{center}
\end{minipage}\\[10pt]
\begin{minipage}{0.5\textwidth}
\begin{center}
\inference[{\sc T-match}]{\Sigma (D_i) = \forall \overline{\alpha_k}. \overline{\S{x_j : \tau_j}} \rightarrow \tau_0 \\  
		      \Gamma \vdash \vp : \tau_0 & 
                    \Gamma_{i} = \Gamma, \overline{\alpha_k}, \overline{\mathsf{x}_{j} :\tau_j} \\ 
                      \Gamma_{i} \vdash (\mathsf{D_i} \
                           \overline{\mathsf{\alpha}_{k}} \overline{\mathsf{x}_{j}}) : \tau_0 & \Gamma_{i} \vdash e_i : \S{PE}^{\S{\el}} \{\phi_i\}\ \nu\ : \S{t}\ \{\phi_{i'}\}\ }
                     {\Gamma \vdash {\bf match} \ \vp \ {\bf with} \ \mathsf{D_i}\ 
                           \overline{\mathsf{\alpha}_{k}} \overline{\mathsf{x}_{j}} \rightarrow e_i : \S{PE}^{\S{\el}} \{ \forall\ \S{h}. 
                            \bigwedge_{i}^{} (\vp\ = \mathsf{D_i} \ \overline{\mathsf{\alpha}_{k}} \overline{\mathsf{x}_{j}}) =>  
                           \phi_i\}\ \nu\ : \S{t}\ \{ 
                            \forall\ \S{h}, \nu', \S{h'}.                           
                           \bigvee_{i}^{} \phi_{i'}\}\ }
\end{center}
\end{minipage}\\[5pt]
\bigskip
\begin{minipage}{0.5\textwidth}
\begin{center}
\inference[{\sc T-deref}]{\Gamma \vdash \rp\ : \S{PE}^{\S{state}} \{\phi_1\}\ \nu\ : \S{t\ ref}\ \{\phi_2\}\                        }
		         {\Gamma \vdash \S{deref} \ \rp : \S{PE}^{\S{state}} \{ \forall\ \S{h}. \S{dom (h,\rp)} \}\ \nu'\ : \S{t}\ 
                                                          \{ \forall\ \S{h}, \nu', \S{h'}. \S{sel(h,\rp)}\ = \nu' \wedge \S{h = h'} \}}
\end{center}
\end{minipage}\\[5pt]
\bigskip
\begin{minipage}{0.5\textwidth}
\begin{center}
  \inference[{\sc T-assign}]{\Gamma \vdash\ e\ :\ \{ \nu\ :\  \S{t} \mid\ \phi \}}
		        {\Gamma \vdash \rp\ \mathsf{:=}\ \eip\  : \S{PE}^{\S{state}} \{ \forall \S{h}.\S{dom(h,\rp)} \}\ \nu'\ :\ \S{t}\ \{ \forall\ \S{h},\nu',\S{h'}.\S{sel(h',\rp)}\ =\ \nu' \wedge\ \phi(\nu') \}}
\end{center}
\end{minipage}\\[5pt]
\bigskip
% \begin{minipage}[t]{0.5\textwidth}
% \begin{center}
%   \inference[{\sc T-ref}]
%             {
%                 \Gamma\ \vdash\ \vp\ :\ \{\ \nu\ :\ \S{t} \mid\ \phi\ \}\\
%                 \Gamma, \rp\ : \S{PE}^{\S{state}} \{ \forall\ \S{h}. \neg\ \S{dom(h,\rp)} \}\ 
%                                     \nu'\ : \S{t\ ref} \\
%                                     \quad\quad\quad\quad\quad\quad\quad\quad\ \{ \forall\ \S{h,\nu',h'}. \forall\ \nu''.\, \S{sel(h',\rp)}\ =\ \nu''\ \wedge\\
%                                     \quad\quad\quad\quad\quad\quad\quad\quad\quad\quad\quad\quad\quad\quad\quad\quad\quad\quad\quad\quad\,\,\phi(\nu'') \wedge\ \S{dom(h',\rp)} \} \vdash\ \S{e}_b\ : \sigma}
%             {\Gamma\ \vdash\ \B{let}\ \rp\ = \S{ref}\ \eip\ \B{in}\ \S{e}_b\ :\ \sigma}
% \end{center}
% \end{minipage}\\[5pt]
\begin{minipage}[t]{0.5\textwidth}
\begin{center}
  \inference[{\sc T-ref}]
            {
                \Gamma\ \vdash\ \vp\ :\ \{\ \nu\ :\ \S{t} \mid\ \phi\ \}\\
                \Gamma, \rp\ : \S{PE}^{\S{state}} \{ \forall\ \S{h}. \neg\ \S{dom(h,\rp)} \}\ 
                                    \nu'\ : \S{t\ ref} \,
                                     \{ \forall\ \S{h,\nu',h'}.  \S{sel(h',\rp)}\ =\ \vp\ \wedge
                                    \phi(\vp) \wedge\ \S{dom(h',\rp)} \} \vdash\ \\ 
                                    \S{e}_b\ :  \S{PE}^{\S{\el}} \{\S{dom(h,\rp)}\}\ \nu''\ : \S{t}\ \{\phi_b'\}\ }
            {\begin{array}{cc}
                   \Gamma\ , \S{h_i} : \S{heap} \vdash\ \B{let}\ \rp\ = \S{ref}\ \vp\ \B{in}\ \S{e}_b\ :\\
                   \S{PE}^{\S{\el} \sqcup \mathsf{state}} \{\forall\ \S{h}. \neg\ \S{dom(h,\rp)}\}\,
                   \nu''\ : \S{t}\,
                   \{ \forall\ \S{h, \nu'', h}. \S{dom(h_i,\rp)} \wedge \S{sel(h_i,\rp)}\ =\ \vp\ \wedge \phi(\vp) \wedge\ \phi_b'\}\             
            \end{array}
          }
\end{center}
\end{minipage}\\[5pt]
\bigskip
\caption{Typing Semantics for \name\ Base Expressions}
\label{fig:typing-base}
\end{figure*}

The type rule for the return expression ({\sc T-return}) lifts its
non-effectful expression argument \eip\ to have a computation effect
with label \S{pure}, thereby allowing \eip's value to be used in
contexts where computational effects are required; a particularly
important example of such contexts are bind expressions used to
compose the effects of constituent parsers.

In the constructor application rule ({\sc T-capp}), the expression's
type reflects the instantiation of the type and term variables in the
constructor's type with actual types and terms. A match expression is
typed (rule {\sc T-match}) by typing each of the alternatives in a
corresponding extended environment and returning a {\it unified type}.
The pre-condition of the {\it unified} type is a conjunction of the
pre-conditions for each alternative, while the post-condition
over-approximates the behavior for each alternative by creating a
disjunction of each of the possible alternative's post-conditions.
Location manipulating expressions ({\sc T-deref} and {\sc T-assign})
use qualifiers \S{sel} and \S{dom} to define constraints that reflect
state changes on the underlying heap. The argument \rp\ of a
dereferencing expression (rule {\sc T-deref}) is associated with a
computation type over a \S{t ref} base type. Its pre-condition requires
\rp\ to be in the domain of the input heap, and its post-condition
establishes that \rp's contents is the value returned by the
expression and that the heap state does not change. The assignment
rule ({\sc T-assign}) assigns the contents of a top-level reference
$\rp$ to the non-effectful value yielded by evaluating expression
$e$. The pre-condition of its computation effect type requires that
\rp\ is in the domain of the input heap and that \rp's contents in
the output heap satisfies the refinement ($\phi$) associated with its
r-value. Finally, rule {\sc T-ref} types a \B{let} expression that
introduces a reference initialized to a value \vp. The body is typed
in an environment in which \rp\ is given a computational effect type.
The pre-condition of this type requires that the input heap, i.e., the
heap extant at the point when the binding of \rp\ to
\S{ref}\ \vp\ occurs, not include \rp\ in its domain; its
postcondition constrains \rp's contents to be some value $\nu'$ that
satisfies the refinement $\phi$ associated with $\vp$, its
initialization expression. The body of the \B{let} expression is then
typed in this augmented type environment.
%\AM{Re-read the last paragraph}
%% The restriction on
%% r-values to be a non-computational base-type prevents effectful
%% computations, such as those defined by parser expressions, from being
%% stored on the heap, a critical simplification that eliminates the
%% possibility of inducing implicit sharing among locations.  While this
%% restriction would be severe in general, it imposes little burden on
%% parser combinator definitions whose actions rarely entail constructing
%% complex shared heap structures, as we have seen earlier.  By
%% restricting aliasing in this way, however, we gain the ability to
%% automate the typechecking of $\lambda_{sp}$ programs.  

%\AM{Post condition of T-bind must have exceptional case}
\subsection{Typing Parser Expressions}
\label{sec:parser-typing}

\begin{figure*}[htbp]
\begin{flushleft}
\bigskip
{\bf Parser Expression Typing}\quad \fbox{\small $\Gamma \vdash$ $\eip$ : $\sigma$
           
}
\end{flushleft}
\bigskip
\begin{minipage}[t]{.7\textwidth}
\inference[{\sc T-sub}]{\Gamma \vdash e : \sigma_1 & \Gamma \vdash \sigma_1 <: \sigma_2}
		    {\Gamma \vdash e : \sigma_2}
\end{minipage}\\[10pt]
\begin{minipage}[t]{.7\textwidth}
\inference[{\sc T-p-eps}]{}
		    {\Gamma \vdash \S{eps} : \S{PE^{pure}}\, \{ \forall \S{h}.\, \S{true} \}\, \nu : \S{unit} 
\ \{ \forall \S{h}, \nu, \S{h'}. \S{h' = h} \}}
\end{minipage}\\[10pt]
\begin{minipage}[t]{.7\textwidth}
\inference[{\sc T-p-bot}]{}
		    {{\Gamma \vdash \S{\bot} : \S{PE^{exc}}\, \{ \forall \S{h}.\, \S{true} \}\, \nu  :  \S{exc}
\ \{ \forall \S{h}, \nu, \S{h'}. \S{h' = h} \wedge \nu = \S{Err} \}} }
\end{minipage}\\[10pt]
% \begin{minipage}[t]{\textwidth}
%   \inference[{\sc T-p-char}]{
%     \Gamma \vdash \eip : \{ \nu' :\S{char} \mid \phi \}\\
%     \phi_2 = \forall \S{h, \nu, h'}. \forall \S{x.} \\ 
%     (\S{Inl(v) = x} \implies \S{x} = \nu' \wedge \S{sel(h,inp)} = \nu'::\S{sel(h',inp)}) \wedge \\
%     (\S{Inr(v) = x} \implies \S{x = Err})  \wedge  \S{sel(h,inp)} = \S{sel(h',inp)})}
% 		    {\begin{array}{@{}c@{}}
% 		        \Gamma \vdash \S{char}\ \eip: \S{PE}^{\S{state}\, \sqcup\, \S{exc}} 
% 		      \{ \forall \S{h}. \S{true} \}\, \nu\, :\, \mathsf{char\ result}\, \{ \phi_2 \} \\ 
% 		    \end{array}
% 		    }
% \end{minipage}\\[10pt]		    
\begin{minipage}[t]{\textwidth}
  \inference[{\sc T-p-char}]{
    \Gamma \vdash \eip : \{ \nu' :\S{char} \mid \nu' = \lq \S{c} \rq \}\\
    \phi_2 = \forall \S{h, \nu, h'}. \forall \S{x}, \S{y.} \\ 
    (\S{Inl(x) = \nu} \implies \S{x} = \lq \S{c} \rq \wedge \S{upd(h', h, inp, tail (inp))}) \wedge \\
    (\S{Inr(y) = \nu} \implies \S{y = Err}  \wedge  \S{sel(h,inp)} = \S{sel(h',inp)})}
		    {\begin{array}{@{}c@{}}
		        \Gamma \vdash \S{char}\ \eip: \S{PE}^{\S{state}\, \sqcup\, \S{exc}} 
		      \{ \forall \S{h}. \S{true} \}\, \nu\, :\, \mathsf{char\ result}\, \{ \phi_2 \} \\ 
		    \end{array}
		    }		    
\end{minipage}\\[10pt]
\begin{minipage}{0.4\textwidth}
  \inference[{\sc T-p-choice}]{\Gamma \vdash p_1\, :\,
                                  \mathsf{PE}^{\el}\, \{ \phi_1 \}\, \nu_1\, :\, \tau\, \{ \phi_1' \} &  
 		               \Gamma \vdash p_2\, :\,
                                  \mathsf{PE}^{\el}\, \{ \phi_2 \}\, \nu_2\, :\, \tau\, \{ \phi_2' \} \\ 
				 } 
 			      { \begin{array}{@{}c@{}}  
				   \Gamma \vdash (p_1 \textnormal{<|>}  p_2) : 
				   \mathsf{PE}^{\el\, \sqcup\, \mathsf{nondet}} \ \{ (\phi1 \wedge \phi_2) \}
                                        \, \nu : \tau\,
                                        \{ (\phi_1'  \lor \phi_2') \} 
				     \end{array}			  
				 } 
\end{minipage}\\[10pt]
% \begin{minipage}[t]{.3\textwidth}
% \inference[{\sc T-p-fix}]{\Gamma, \mathsf{x}\, :\, (\mathsf{PE}^{\el}\, \{ \phi \}\, \nu\, :\, \S{t}\, \{ \phi' \})^{\prec_{\S{t'}}}\, \vdash\ p\, : \mathsf{PE}^{\el}\, \{ \phi \}\, \nu\, :\, \S{t}\, \{ \phi' \} & {\sf x} \notin FV(\phi, \phi')}
% 		    {\Gamma \vdash \mu\, \mathsf{x}\, :\, (\mathsf{PE}^{\el}\, \{ \phi \}\, \nu\, :\, \S{t}\, \{ \phi' \})^{\prec_{\S{t'}}}.\, p\, :\, \mathsf{PE}^{\el}\, \{ \phi \}\, \nu\, :\, \S{t}\, \{ \phi' \} }
% \end{minipage}\\[10pt]
\begin{minipage}[t]{.3\textwidth}
\inference[{\sc T-p-fix}]{\Gamma, \mathsf{x}\, :\, (\mathsf{PE}^{\el}\, \{ \phi \}\, \nu\, :\, \S{t}\, \{ \phi' \})\, \vdash\ p\, : \mathsf{PE}^{\el}\, \{ \phi \}\, \nu\, :\, \S{t}\, \{ \phi' \} & {\sf x} \notin FV(\phi, \phi')}
		    {\Gamma \vdash \mu\, \mathsf{x}\, :\, (\mathsf{PE}^{\el}\, \{ \phi \}\, \nu\, :\, \S{t}\, \{ \phi' \}).\, p\, :\, \mathsf{PE}^{\el}\, \{ \phi \}\, \nu\, :\, \S{t}\, \{ \phi' \} }
\end{minipage}\\[10pt]
\hfill
\begin{minipage}[t]{.5\textwidth}
  \inference[{\sc T-p-bind}]{
      \Gamma\ \vdash\ p\, :\, \mathsf{PE}^{\el} \ \{ \phi_1 \}\, \nu\, :\, \S{t}  \{ \phi_{1'} \}  &
      \Gamma\ \vdash\ \eip\, :\, (\xp : \tau) \rightarrow \mathsf{PE}^{\el} \
              \{ \phi_2 \} \ \nu' : \S{t'} \ \{ \phi_{2'} \}\\
      \Gamma' = \Gamma, \xp : \tau, \mathsf{h_i : heap} \quad\ \mathsf{h_i}\ \mbox{fresh}}
	    { \begin{array}{@{}c@{}}
		  \Gamma' \vdash p\ \textnormal{>>=}\ \eip : \\
		  \mathsf{PE}^{\el} \ \{ \forall \mathsf{h}.\, \phi_1 \, \mathsf{h}\,  \wedge 
                                               \phi_{1'} (\mathsf{h}, x, \mathsf{h_i})  => \phi_2 \ \mathsf{h_i} \} \\
  			          \ \nu'\,:\, \S{t'}\ \S{result} \\  
				  \{ \forall \mathsf{h}, \nu', \mathsf{h'}, \S{y}.\, 
				  ({x} \neq \S{Err} => \nu' = \S{Inl}\ \S{y} \wedge \phi_{1'} (\mathsf{h}, x, \mathsf{h_i})  \wedge 
                                               \phi_{2'} ({\mathsf{h_i}, \S{y}, \mathsf{h'}}))\,
                    \wedge \\
                   \mkern-18mu ({x} = \S{Err} => \nu' = \S{Inr}\ {\S{Err}} \wedge \phi_{1'} (\mathsf{h}, x, \mathsf{h_i}))                           \} 
	     \end{array}}
\end{minipage}

\begin{flushleft}
\bigskip
{\bf Subtyping} \quad\ \fbox{\small $\Gamma \vdash$ $\sigma_1$ <: $\sigma_2$
}
\end{flushleft}
\bigskip
\begin{minipage}{0.40\textwidth}
 \inference[{\sc T-Sub-Base}]{ \Gamma \vdash \{ \nu : \S{t} \mid \phi_1 \} & \Gamma \vdash \{ \nu : \S{t} 
\mid \phi_2 \} \\
			  \Gamma \vDash \phi_1 => \phi_2 }
				{ 
				\Gamma \vdash \{ \nu : \S{t} \mid \phi_1 \} <: \{ \nu : \S{t} \mid \phi_2 
\}
				  } 
\end{minipage}\\[5pt]
\begin{minipage}{0.50\textwidth}
 \inference[{\sc T-Sub-Arrow}]{ \Gamma \vdash \tau_{21} <: \tau_{11} & \Gamma \vdash \tau_{12} <: \tau_{22}}
			  { \Gamma \vdash (\S{x }: \tau_{11}) \rightarrow \tau_{12}  <: (\S{x} : \tau_{21}) 
\rightarrow \tau_{22} } 
\end{minipage}\\[10pt]
\begin{minipage}{0.4\textwidth}
 \inference[{\sc T-Sub-Schema}]{\Gamma \vdash \sigma_1 <: \sigma_2 }
				{\begin{array}{@{}c@{}}  
				    \Gamma  \vdash  
				      \forall \alpha. \sigma_1 <: \forall \alpha. \sigma_2 
				    \end{array}			  
				  } 
\end{minipage}
\quad\quad
\begin{minipage}{0.4\textwidth}
 \inference[{\sc T-Sub-TVar}]{ }
				{\begin{array}{@{}c@{}}  
				    \Gamma  \vdash  
				      \alpha  <: \alpha 
				    \end{array}			  
				  }

\end{minipage}\\[10pt]
\begin{minipage}{\textwidth}
 \inference[{\sc T-Sub-Comp}]{ \Gamma \vDash \phi_2 => \phi_1  &    \Gamma \vdash \tau_1 <: \tau_2 &
				    \Gamma \vdash \el_1 \leq \el_2 &   \Gamma, \phi_2 
\vDash (\phi_{1'} => \phi_{2'})}
				{\begin{array}{@{}c@{}}  
				    \Gamma  \vdash  
				     \mathsf{PE^{\el_1}} \ \{ \phi_1 \} \ \tau_1 \ \{ \phi_{1'} \} 
					 <:
				      \mathsf{PE^{\el_2}} \ \{ \phi_2 \} \ \tau_2 \ \{ \phi_{2'} \} 
				    \end{array}			  
				  } 
				  
\end{minipage}
\caption{Typing semantics for primitive parser expressions and subtyping rules.}
\label{fig:typing-parserss}
\end{figure*}
%\AM{Discuss T-p-fix}
Figure~\ref{fig:typing-parserss} presents the type rules for
\name\ parser expressions. ({\sc T-sub}) rule defines the standard type subsumption rule. The empty string parser typing rule ({\sc
  T-p-eps}) assigns a type with \S{pure} effect and \S{unit} return
type, while the postcondition establishes the equivalence of the input
and the output heaps. The {\sc T-p-bot} rule captures the always
failing semantics of $\bot$ with an exception effect \S{exc} and
corresponding return types and return values while maintaining the
stability of the input heap.

The type rules governing a character parser ({\sc T-p-char}) is more
interesting because it captures the semantics of the success and the
failure conditions of the parser. We use a sum type ($\alpha$
\S{result}) to define two options representing a successful and
exceptional result, resp. (with the \S{Err} exception value in the
latter case), using standard injection functions to differentiate
among these alternatives. In the successful case, the returned value
is equal to the consumed character, captured by an equality constraint
over characters. In the successful case, the structure of the output heap
with respect to the parse string \S{inp} must be the same as the input
heap except for the absence of the {\sf 'c'}, the now consumed head-of-string
character. In the failing case, the input remains unconsumed. Note that we also join the effect labels ($\S{state} \sqcup \S{exc}$), highlighting the state and exception effect. These effect labels form a standard join semi-lattice with an ordering relation ($\leq$)~\footnote{Details of the effect-labels and their join semi-lattice is provided in the supplementary material.}.
%The details of the effect parameters are provided in Sec~\ref{sec:discussion}.

Rule {\sc T-p-choice} defines the static semantics for a
non-deterministic choice parser. It introduces a non-determinism
effect to the parser's composite type. The effect's precondition
requires that either of the choices can occur; we achieve this by
restricting it to the conjunction of the two preconditions for the
sub-parsers. The disjunctive post-condition requires that both the
choices must imply the desired goal postcondition for a composite
parser to be well-typed. The effect for the choice expression takes a
join over the effects of the choices and the non-deterministic effect.
% We describe this join operation and the lattice of effects in Section~\ref{sec:discussion}.

Rule ({\sc T-P-Fix}) defines the semantics for the terminating recursive fix-point combinator.
Given an annotated type $\tau$ for the parameter \S{x}, if the type of the body $p$ in an extended environment which has \S{x} mapping to $\tau$, is $\tau$, then $\tau$ is also
a valid type for a recursive fixpoint parser expression. 
% \textcolor{red}{However, the recursive call requires that there is some argument to the call (including heap locations) with a type {\S{t'}} such that there is a well-founded order $\prec_{\S{t'}}$ is defined over \S{t'}, ensuring the termination required by the fix point combinator.}
% \AM{Do we need an example or is it clear?}
The {\sc T-p-bind} rule defines a typing judgement for the exceptional
monadic composition of a parser expression $p$ with an abstraction
\eip.  The composite parser is typed in an extended environment
($\Gamma$') containing a binding for the abstraction's parameter $x$
and an intermediate heap $\mathsf{h_{i}}$ that acts as the
output/post-state heap for the first parser and the input/pre-state
for the second.  The relation between these heaps is captured by the
inferred pre-and post-conditions for the composite parser.  There are
two possible scenarios depending upon whether the first parser $p$
results in a success (i.e. x $\neq$ \S{Err}) or a failure (x =
\S{Err}).  In the successful case, the inferred conditions capture the
following properties: a) the output of the combined parser is a
success; b) the post-condition for the first expression over the
intermediate heap $\mathsf{h_{i}}$ and the output variable x should
imply the precondition of the second expression (required for the
evaluation of the second expression); and, c) the overall
post-condition relates the post-condition of the first with the
precondition of the second using the intermediate heap $\mathsf{h_i}$.
The case when $p$ fails causes the combined parser to fail as well,
with the post-condition after the failure of the first as the overall
post-condition.
% \ADD{Note that $\lambda_{sp}$ has both \S{eps} and \S{return}, this makes the core calculus sub-optimal in size as \S{eps} can be modeled using \S{return}. The motivation for this design choice is not expressiveness or parsimony but rather gaining decidable typechecking by restricting how high-order functions and states can be combined. This is achieved by using a limited bind \S{p >>= e}, rather than the general \S{e >>= e}, which allows the definition of semantic actions \S{e} for a parser p where e is again restricted to perform only limited state manipulation, i.e. read and update locations. }
Note that the core calculus is sub-optimal in size since $\lambda_{sp}$ supports both \S{return} and \S{eps}, even though the latter could be modeled using \S{return}. However, this design choice enables decidable typechecking by limiting the combination of higher-order functions, combinators and states. This is achieved using a limited bind \S{p >>= e}, rather than the general \S{e >>= e}, allowing for the definition of semantic actions \S{e} that only perform limited state manipulation, i.e., reading and updating locations. Thus \S{>>=} and \S{<|>} only take parser arguments; thus, eps <|> p is not equivalent to (return () <|> p), in fact the latter is disallowed.
Another such design restriction shows up in the typing rules, e.g., the typing rule for function application (T-APP) restricts the arguments to be of {\it basetype}, thus disallowing expressions returning abstractions or computations, like return ($\lambda$x. e) or return (x := e)
A more general definition for \S{>>=} will allow valid HO arguments, like $\lambda$x. e >>= e1, but translating such general HO stateful programs to decidable logic fragments is not always feasible, as is discussed in other fully dependent type systems~\cite{fstar}.

The subtyping rules enable the propagation of refinement type information
and relate the subtyping judgments to logical entailment. The
subtyping rule for a base refinement ({\sc T-Sub-Base}) relates
subtyping to the logical implication between the refinement of the
subtype and the supertype. The ({\sc T-Sub-Arrow}) rule defines
subtyping between two function refinement types. The ({\sc
  T-Sub-Comp}) rule for subtyping between computation types follows
the standard Floyd-Hoare rule for {\it consequence}, coupled with the
subtyping relation between result types and an ordering relation
between effects($\leq$). The subtyping rule for type variables (T-Sub-TVar)
relates each type variable to itself in a reflexive way, while the subtyping
for a type-schema lifts the subtyping relation from a schema to
another schema.

%% \SJ{I don't know what a ``precise'' type means?}
%% \ADD{The typing rules discussed in Figures~\ref{fig:typing-base}, ~\ref{fig:typing-parserss} provide a semantics for a inferring `a' precise type for an expression in $\lambda_{sp}$, for instance, the typing rule for the (>>=) parser expression defines a precise type for the composed expression, given the type for the first parser and the continuation expression $e$.} 
\subsection{Example}  
\label{example:4p4}
To illustrate the application of these typing rules, consider how 
we might type-check a simple \S{consume} parser, a parser that
successfully consumes the next character in an input stream
(\S{inp}).  An intuitive specification capturing a safety property
related to how inputs are consumed might be:
\begin{lstlisting}[escapechar=\@,basicstyle=\small\sf,breaklines=true,language=ML]
  consume : @$\mathsf{PE^{state}}$@ { @$\forall$@ h. true }  $\nu$ :  char { @$\forall$@ h $\nu$ h'. $\nu$ = hd (sel h inp) @$\wedge$@ len (sel h' inp) = len (sel h inp) @$\text{-}$@ 1) }
\end{lstlisting}
\noindent that simply establishes that the parser's output is a character and that the
length of the input stream after the character has been consumed is
one less than its length before the consumption.

Using this parser, we can define a parser for consuming {\sf
  k} elements, called {\sf k-consume}, which is defined in terms of 
\S{count}, a derived parser available in the \name\ library.  Thus,
  {\sf k-consume} $\equiv$ \S{count\ k\ consume}, and translates to the
following definition, in which specifications in gray are inferred by \name:
\vspace*{-.2in}
\begin{lstlisting}[escapechar=\@,basicstyle=\small\sf,breaklines=true,language=ML]
@\begin{tabbing}
let k-consume = \\
fix ($\lambda$\= k-consume : \textcolor{gray}{(k : int) $\rightarrow$ \{$\forall$ h. true\} $\nu$ : char list 
\{$\forall$ h $\nu$ h'. len ($\nu$) = k $\wedge$ len (sel h inp) $\text{-}$ len (sel h' inp) = k\}}.\\
\> if (k <= 0)\ \=then (eps >>= ($\lambda$\_. return []))  \\
\>\>             else (consume >>= $\lambda$ x : char. k-consume (k-1) >>= $\lambda$ xs : char list. return (x :: xs))
\end{tabbing}@
\end{lstlisting}
Now, applying  rule {\sc T-p-fix}, we need to prove the following requirement:
\vspace*{-.2in}
\begin{lstlisting}[escapechar=\@,basicstyle=\small\sf,breaklines=true,language=ML]
@\begin{tabbing}  
 $\Gamma$,  (k-consume : \textcolor{gray}{(k : int) $\rightarrow$ \{true\} v : char list \{ len (v) = k $\wedge$ len (sel h inp) $\text{-}$ len (sel h' inp) = k\}}) $\vdash$\\ 
\quad\quad \=if (k <= 0)\ \= then (eps >>= ($\lambda$\_. return [])) \\
\>\>                     else (consume >>= $\lambda$ x : char. k-consume (k $\text{-}$ 1) >>= $\lambda$ xs : char list. return (x :: xs))) : \\
\>           \textcolor{gray}{(k : int) $\rightarrow$ \{ true \} v : char list \{ len (v) = k $\wedge$ len (sel h inp) $\text{-}$ len (sel h' inp) = k \}}
\end{tabbing}@
\end{lstlisting}
\noindent i.e., we need to prove that, in an extended environment, with a
type-mapping for the fixpoint combinator's argument ({\sf k-consume}),
the combinator's body also satisfies the type.   Using
the type for {\sf consume} and the typing rule for {\sc T-p-bind},
we can infer the type for the {\sf else} branch in the body:
\vspace*{-.2in}
\begin{lstlisting}[escapechar=\@,basicstyle=\small\sf,breaklines=true,language=ML]
@\begin{tabbing}
  $\Gamma$,\= (x : char), (hi : heap), (xs : char list), (hi' : heap),\\
  (k-consume : \textcolor{gray}{(k : int) $\rightarrow$ \{ true \} v : char list \{ len (v) = k $\wedge$ len (sel h inp) $\text{-}$ len (sel h' inp) = k \}}) $\vdash$ \\
\> (consume >>=\= $\lambda$ x : char. k-consume (k $\text{-}$ 1) >>= $\lambda$ xs\= : char list. return (x :: xs)) :\\
\>\quad                \textcolor{gray}{(k : int) $\rightarrow$ \{ true \} v : char list \{ }\= \textcolor{gray}{len (xs) = k - 1 $\wedge$ len (sel hi inp) $\text{-}$ len (sel h' inp) = (k - 1) $\wedge$}\\
\>\>                                                                      \textcolor{gray}{len (sel h inp) $\text{-}$ len (sel hi inp) = 1 $\wedge$} \\ \> \>
\textcolor{gray}{ hi = hi' $\wedge$ len (v) = len (xs) + 1 \}}
\end{tabbing}@
\end{lstlisting}
\noindent The \S{then} branch is relatively simpler and uses the semantics of the derived combinator {\sf map}\footnote{Definitions for these derived combinators are provided in the supplementary material.} and primitive combinator {\sf eps}:
\begin{lstlisting}[escapechar=\@,basicstyle=\small\sf,breaklines=true,language=ML]
 @$\Gamma$@, (x : unit), (hi : heap), (k@-@consume : @\textcolor{gray}{(k : int) $\rightarrow$ \{ true \} v : char list \{ len (v) = k $\wedge$ len (sel h inp) $\text{-}$ len (sel h' inp) = k \}}@) @$\vdash$@ 
   (eps >>= (@$\lambda$\_@. return []))  :  @\textcolor{gray}{(k : int) $\rightarrow$ \{ k=0 \} v : char list \{ len (v) = 0 $\wedge$ hi = h $\wedge $ len (sel hi inp) $\text{-}$ len (sel h' inp) = 0 \}}@
\end{lstlisting}
\noindent Finally, using the standard rule for {\sf if-then-else} (implemented using {\sf match}), and 
simplifying the conclusion in the post-condition for the \S{else} branch
shown earlier, we can infer that the type for the body agrees with
the type for the fixpoint combinator's argument, thus proving that the
{\sf k-consume} is correct with respect to the given specification.

However, consider a scenario where we change the definition of say,
{\sf k-consume}'s \S{else} branch, as follows:
\vspace*{-.05in}
\begin{lstlisting}[escapechar=\@,basicstyle=\small\sf,breaklines=true,language=ML]
@\quad\quad@ (consume >>= @$\lambda$@ x : char. k@-@consume (0) >>= @$\lambda$@ xs : char list. return (x :: xs))
\end{lstlisting}
\noindent Now, this definition of {\sf k-consume} does not run k-successive \S{consume}
parsers, but instead  only runs the \S{consume} parser once;  type-checking as
above fails.

%% \subsection{Library of Derived Combinators and Utilities}
%% \AM{This, should be provided in the overview section.}
%% Figure~\ref{fig:derived} presents a list of derived parsers and combinatros written using primitive parser expressions in \name\ . The types for these parsers are synthesized automatically by \name\ .

%% \begin{figure*}[h]
%% \begin{lstlisting}[escapechar=\@,basicstyle=\small\sf,breaklines=true,language=ML]
%%  val any : a parser list -> a list parser 
%%  let any l = List.fold_left (fun acc pi -> acc <|> pi) l

%%  val map : (a -> b) -> a parser -> b parser 
%%  let map f p = (p >>= \x f x)
 
%%  let (>>) e1 e2 = e1 >>= \_ e2 
 
%%  let (<<) e1 e2 = e1 >>= \x. e2 >>= return x 
 
%%  let option e = (e >>= \r. return Some r) <|> (eps >>= \_ retrun None)  
 
%%  let star e = fix (\e_star : @$\tau$@. 
%% 		map (\_ -> []) eps 
%% 		<|>
%%                 (e >>= \x. 
%%                 e_star >>= \xs. return (x :: xs)) 
 
%%  let plus e = e >> (star e)
 
%%  let count n p = fix (\countnp : @$\tau$@. 
%% 		if (n <= 0) then 
%% 		 map (\_ -> []) eps 
%% 		else
%% 		  (p >>= \x. 
%% 		  countnp (n-1) >>= \xs. return (x :: xs)) 
%% \end{lstlisting}
%% \caption{Common Derived Combinators}
%% \label{fig:derived} 
%% \end{figure*}

%%%%%%%%%%%%%%%%%%%%% Making a call to Local Effect Reasoning %%%%%%%%%%%%%%%%
%\input{discussion}

\subsection{Properties of the Type System}

\begin{definition}[Environment Entailment $\Gamma \models \phi$]
  Given $\Gamma$ = \dots , $\overline{\phi_i}$, the entailment of a formula $\phi$ under $\Gamma$ is defined as 
   ($\bigwedge_{i}^{} \phi_i$)  $ \implies \phi$ 
\end{definition}

In the following, $\Gamma \models$ $\phi$($\mathcal{H}$) extends the notion of semantic entailment of a formula over an
abstract heap $\Gamma \models$ $\phi$ ({\sf h}) to a concrete heap
using an interpretation of concrete heap $\mathcal{H}$ to an abstract
heap {\sf h} and the
standard notion of well-typed {\it stores} ($\Gamma \vdash
\mathcal{H}$).\footnote{Details are provided in the supplemental material.}

To prove soundness of \name typing, we first state a soundness lemma
for pure expressions (i.e. expressions with non-computation type).
\begin{lemma}[Soundness Pure-terms]
 If\ $\Gamma \vdash \eip : \{ \nu\ :\ \mathsf{t}\ |\ \phi\ \}$ then:
 \begin{itemize}
   \item \textnormal{Either \eip\ is a value with $\Gamma \models$ $\phi$ ( \eip )} 
   \item \textnormal{OR Given there exists a $\S{v}$, such that ($\mathcal{H}$; \eip) $\Downarrow$ ($\mathcal{H}$; \S{v}) then $\Gamma \vdash$ \S{v} : t and $\Gamma \models$ $\phi$ ($\S{v}$) }
\end{itemize}   
\end{lemma}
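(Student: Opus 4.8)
The plan is to proceed by structural induction on the typing derivation $\Gamma \vdash \eip : \{\nu : \S{t} \mid \phi\}$. The first move is to observe that only a handful of rules can conclude a \emph{base} refinement type: {\sc T-fun}, {\sc T-typFun}, and {\sc T-typApp} yield function or polymorphic schemes, while {\sc T-App}, {\sc T-return}, {\sc T-deref}, {\sc T-assign}, {\sc T-ref}, {\sc T-match}, and every parser rule yield a computation type $\S{PE}^{\el}\{\cdots\}$. Hence the genuinely relevant cases are {\sc T-var}, {\sc T-capp}, {\sc T-sub} (instantiated through {\sc T-Sub-Base}), and {\sc T-let} when its body has a base refinement type. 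A second observation, which keeps the whole argument at the level of the refinement predicate $\phi$ and avoids any heap reasoning, is that base refinement types carry no heap pre-/post-conditions: consistent with the hypothesis $(\mathcal{H}; \eip) \Downarrow (\mathcal{H}; \vp)$, pure terms leave the heap unchanged.

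For the value cases I would discharge the first disjunct directly, using a canonical-forms inversion that a value of base type $\S{t}$ is either a constant or a constructor application (abstractions and type abstractions carry non-base types). When $\eip$ is a variable ({\sc T-var}) or constant it is already in normal form, and the refinement recorded in its type is, by the definition of environment entailment, derivable as $\Gamma \models \phi(\eip)$. When $\eip = \S{D_i}\,\overline{\S{t}_k}\,\overline{\vp_j}$ is a constructor application ({\sc T-capp}) it is a value; here I would combine the premises typing each argument $\vp_j$ with the substituted constructor signature to conclude that the self-referential refinement, which equates $\nu$ with $\S{D_i}\,\overline{\S{t}_k}\,\overline{\vp_j}$, holds of $\eip$ itself.

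The inductive cases transport an already-established refinement. For {\sc T-sub}, the subderivation gives $\eip : \{\nu:\S{t}\mid\phi_1\}$ and {\sc T-Sub-Base} supplies $\Gamma \models \phi_1 \Rightarrow \phi_2$; applying the induction hypothesis to the subderivation yields $\Gamma \models \phi_1(\eip)$ or $\Gamma \models \phi_1(\vp)$, and the entailment transports either to $\phi_2$. The case {\sc T-let}, $\mathbf{let}\,\xp = \vp\,\mathbf{in}\,\eip_2$, is where reduction actually occurs: the standard operational rule reduces it to $[\vp/\xp]\eip_2$ without touching the heap, so I would appeal to a \emph{substitution lemma} asserting that substituting a well-typed value for a variable preserves typing and commutes with the refinement, i.e. $\Gamma,\xp:\tau_x \vdash \eip_2 : \tau$ and $\Gamma \vdash \vp : \tau_x$ give $\Gamma \vdash [\vp/\xp]\eip_2 : [\vp/\xp]\tau$. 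The induction hypothesis on the substituted body then closes the case and simultaneously delivers the routine base-type conclusion $\Gamma \vdash \vp : \S{t}$.

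The hardest step will be establishing and using that substitution lemma at the level of refinements, since it must reconcile the \emph{syntactic} typing judgment with the \emph{semantic} entailment $\Gamma \models \phi(\vp)$. Concretely, one must show that after substituting $\vp$ for $\xp$ the interpretation of the context's propositions still discharges $\phi[\vp/\nu]$ in the underlying EUFLIA logic, that is, that substitution commutes with entailment. Pinning down the precise meaning of $\Gamma \models \phi(\cdot)$ as an interpretation of the abstract-heap propositions of $\Gamma$ into the decidable logic, specialized to the concrete value, and proving it stable under the dependent substitutions introduced by {\sc T-capp} and {\sc T-let}, is the main technical obligation; the remaining cases are then routine.
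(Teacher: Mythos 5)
Your overall strategy — induction on the typing derivation, canonical forms for the value cases, and a substitution lemma to push refinements through \textbf{let}-reduction and constructor application — is exactly the paper's route, and your closing observation that the real technical burden lies in showing substitution commutes with the semantic entailment $\Gamma \models \phi(\cdot)$ is a fair (indeed sharper) account of what the paper's terse substitution lemma actually has to deliver. Your inclusion of {\sc T-sub}/{\sc T-Sub-Base}, which the paper's own case list omits, is if anything more careful than the original.

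There is, however, one concrete error in your case analysis: you discard {\sc T-typApp} on the grounds that it ``yields function or polymorphic schemes.'' It does not. Its conclusion is $\Gamma \vdash (\Lambda\alpha.\eip)[\S{t}] : [\S{t}/\alpha]\sigma$, and when $\sigma$ is a base refinement type the instantiation $[\S{t}/\alpha]\sigma$ is again a base refinement type, so this rule absolutely can conclude a judgment of the form covered by the lemma. Moreover a type application is not a value — it reduces via {\sc P-typApp} to $[\S{t}/\alpha]\eip$ — so the case cannot be absorbed into your value disjunct; it needs the same inversion--IH--substitution treatment you give {\sc T-let}, which is precisely how the paper handles it. The fix is routine (apply the inductive hypothesis to the body under $\Gamma,\alpha$ and transport the result through the type substitution $[\S{t}/\alpha]$ using the same substitution lemma), but as written your enumeration of the ``genuinely relevant cases'' is incomplete and the proof does not go through without it.
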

% \ADD{Note that the initial and final heap in the second case above remains the same ($\mathcal{H}$) as the expression $e$ is pure.}
% \begin{lemma}[Soundness Pure-terms]
%  If $\Gamma$ $\vdash$ \eip : \{ $\nu$ : t | $\phi$ \} then:
%  \begin{itemize}
%   \item Either \eip\ is a value with $\Gamma \models$ $\phi$ ( \eip )  ... (PG1)
%   \item OR iff evaluating \eip\ in $\mathcal{H}$ terminates, then ($\mathcal{H}$; \eip) $\Downarrow$ ($\mathcal{H}$; \S{v}) and $\Gamma \vdash$ \S{v} : t and $\Gamma \models$ $\phi$ ($\nu$) ... (PG2)
% \end{itemize}   
% \end{lemma}

% Our soundness theorem states that if expression \eip, has type
% $\forall \overline{\alpha}$. $\mathsf{PE^{\el}}$ \{$\phi_1$\} $\nu$ :
% t \{$\phi_2$\}, then evaluating \eip\ in some heap $\mathcal{H}$
% satisfying $\phi_1$ upon termination produces a result of type t and a
% new heap $\mathcal{H'}$ satisfying $\phi_2$($\mathcal{H}$, $\nu$,
% $\mathcal{H'}$).

\begin{theorem}[Soundness \name]
\label{thm:soundness}
Given a specification $\sigma$ = $\forall \overline{\alpha}$. $\mathsf{PE^{\el}}$
$\{\phi_1\}$ $\nu$ : {\sf t} $\{\phi_2\}$ and a \name\ expression \eip, such
that under some $\Gamma$, $\Gamma \vdash$ \eip : $\sigma$, then if
there  exists $\mathcal{H}$ such that $\Gamma \models \phi_1 (\mathcal{H})$
then:
\begin{enumerate}
  \item \textnormal{Either \eip \ is a value, and:} 
    $\Gamma , \phi_1$ $\models$ $\phi_2$ ($\mathcal{H},\ \eip$,\ $\mathcal{H}$)
  \item \textnormal{Or, if there exists an $\mathcal{H'}$ and \S{v} such that  ($\mathcal{H}$; \eip) $\Downarrow$ ($\mathcal{H'}$; \S{v}), then\\
  $\exists$ $\Gamma'$, $\Gamma \subseteq \Gamma'$ and ({\sf consistent $\Gamma$ $\Gamma'$}), such that:} 
  \begin{enumerate}
    \item $\Gamma'$ $\vdash$ $\S{v} : \S{t}$. 
    \item $\Gamma', \phi_1$ ($\mathcal{H}$) $\models$$\phi_2$ ($\mathcal{H},\ \S{v}$,\ $\mathcal{H'}$) 
     \end{enumerate}
\end{enumerate}
\end{theorem}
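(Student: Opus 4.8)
The statement is a combined progress / partial-correctness soundness result relating the refinement type system to the big-step semantics of Figure~\ref{fig:semantics-parser}, specialized to computation types $\mathsf{PE}^{\el}\{\phi_1\}\,\nu:\S{t}\,\{\phi_2\}$. The \emph{Soundness Pure-terms} lemma already discharges values and non-computation terms, so the remaining task is to lift it across the parser and effect constructs. Because {\sc T-sub} leaves the subject expression unchanged, inducting on the typing derivation is awkward; the plan is instead to handle case~(1) (where $\eip$ is already a value) directly from the Pure-terms lemma — the postcondition is read at the unchanged heap, so $\phi_2(\mathcal{H},\eip,\mathcal{H})$ holds — and to prove case~(2) by induction on the derivation of $(\mathcal{H};\eip)\Downarrow(\mathcal{H}';\vp)$, performing inversion on the typing derivation at each node. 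When the IH is applied to a subexpression it returns the case~(1)/case~(2) disjunction, which is exactly what the combinator cases consume.

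\textbf{Eliminating subtyping and the base cases.} The first step is to normalize away {\sc T-sub}: whenever $\Gamma\vdash\eip:\sigma$ is obtained from $\Gamma\vdash\eip:\sigma'$ with $\Gamma\vdash\sigma'<:\sigma$, writing $\sigma'=\mathsf{PE}^{\el'}\{A\}\,\nu:\S{t}\,\{A'\}$ and $\sigma=\mathsf{PE}^{\el}\{B\}\,\nu:\S{t}\,\{B'\}$, rule {\sc T-Sub-Comp} supplies $B\Rightarrow A$, $\Gamma,B\models(A'\Rightarrow B')$, and $\el'\le\el$; from the hypothesis $\Gamma\models B(\mathcal{H})$ these let me derive $A(\mathcal{H})$, apply the result for $\sigma'$, and transport the resulting postcondition $A'$ to $B'$ --- exactly the Floyd--Hoare rule of consequence, with the effect ordering being a sound over-approximation that carries no semantic obligation. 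This lets me assume each node is typed by a syntax-directed rule. The base cases are then the parser primitives checked against their evaluation rules: {\sc T-p-eps}/{\sc P-eps} returns $\S{()}$ leaving the heap fixed, matching $\S{h' = h}$; {\sc T-p-bot}/{\sc P}-$\bot$ returns $\S{Err}$ with $\S{h' = h}\wedge\nu=\S{Err}$; and {\sc T-p-char} is a case split on whether $\mathcal{H}(\S{inp})$ begins with $\S{'c'}$, aligning {\sc P-char-true} with the $\S{Inl}$ disjunct (consumption encoded by $\S{upd}$) and {\sc P-char-false} with the $\S{Inr}$ disjunct (input unchanged). The stateful base expressions ({\sc T-return}, {\sc T-deref}, {\sc T-assign}, {\sc T-ref}) are discharged using the Pure-terms lemma together with the fact that the generated $\S{sel}$/$\S{dom}$/$\S{upd}$ constraints are validated by the interpretation of qualifiers over $\mathcal{H}$ through the well-typed-store relation $\Gamma\vdash\mathcal{H}$.

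\textbf{Inductive combinator cases.} For \textbf{bind} ({\sc T-p-bind}), the IH on the subderivation for $p$ yields an intermediate concrete heap $\mathcal{H}_i$ and value $v_1$ with $\phi_{1'}(\mathcal{H},v_1,\mathcal{H}_i)$, instantiating the abstract $\mathsf{h_i}$ by $\mathcal{H}_i$ and $x$ by $v_1$. I then split on $v_1$: if $v_1\ne\S{Err}$ ({\sc P-bind-success}), the bind precondition $\phi_1\wedge(\phi_{1'}\Rightarrow\phi_2)$ discharges the continuation's precondition $\phi_2$ at $\mathcal{H}_i$, and a second use of the IH on the body yields $\phi_{2'}(\mathcal{H}_i,v_2,\mathcal{H}'')$, which assembles into the $\S{Inl}$ branch of the composite postcondition; if $v_1=\S{Err}$ ({\sc P-bind-err}), the $\S{Inr}$ branch follows directly. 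For \textbf{choice} ({\sc T-p-choice}) the conjoined precondition $\phi_1\wedge\phi_2$ entails each sub-parser's precondition, inversion selects {\sc P-choice-l} or {\sc P-choice-r}, and the IH yields $\phi_1'$ or $\phi_2'$, hence the disjunction $\phi_1'\lor\phi_2'$; the $\sqcup\,\S{nondet}$ join is sound. For \textbf{fix}, {\sc P-fix} unfolds $\mu\,\S{x}.p$ to $[\mu\,\S{x}.p/\S{x}]p$, whose evaluation is a strict subderivation, so the induction on the evaluation relation applies; a substitution lemma using $\Gamma\vdash\mu\,\S{x}.p:\tau$ shows $[\mu\,\S{x}.p/\S{x}]p$ still has type $\tau$, and the side condition $\S{x}\notin FV(\phi,\phi')$ guarantees the pre/postconditions are stable under unfolding.

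\textbf{Main obstacle.} I expect two interlocking difficulties to be the crux. First, the fix case forces the induction to be on the height of the \emph{evaluation} derivation (the typing derivation for $\mu\,\S{x}.p$ is fixed), and the accompanying substitution/unfolding lemma must interact cleanly with refinements at computation type; here the restriction that references are introduced only in $\B{let}$ and are never aliased or passed as arguments is precisely what keeps substitution sound. Second, the environment bookkeeping --- producing $\Gamma'\supseteq\Gamma$ with \textsf{consistent}~$\Gamma$~$\Gamma'$ --- must be threaded through every case: {\sc T-ref} genuinely grows the environment with a fresh location, so I need a weakening lemma showing that both $\models$ and $\vdash$ are preserved under consistent extension, and that these extensions compose across the two premises of bind and the branches of choice. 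Ensuring that the concrete/abstract heap correspondence survives these extensions, so that $\phi_{1'}(\mathcal{H},v_1,\mathcal{H}_i)$ remains usable when reasoning about the continuation under $\Gamma'$, is the most delicate part of the argument.
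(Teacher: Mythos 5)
Your proof is sound, but it is organized around a different induction than the paper's. The paper inducts on the \emph{typing} derivation, proving for every rule except {\sc T-p-fix} a stronger statement that also exhibits the evaluation (i.e., progress), and for {\sc T-p-fix} it assumes the evaluation $(\mathcal{H};\eip)\Downarrow(\mathcal{H'};\S{v})$ is given --- which is exactly where the partial-correctness reading of the theorem is cashed in --- before invoking the substitution lemma and the side condition $\S{x}\notin FV(\phi,\phi')$ much as you do. You instead induct on the \emph{evaluation} derivation and invert the typing at each node. What your choice buys is a clean fix case: the premise of {\sc P-fix} is a strictly smaller evaluation derivation, whereas the paper must apply its induction hypothesis to the unfolded body $[\mu\,\S{x}.p/\S{x}]p$, whose typing derivation is not smaller than that of $\mu\,\S{x}.p$ --- a step its framing only justifies informally. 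What it costs you is that case (1), where \eip\ is already a value, falls outside your induction entirely and must be discharged separately (your appeal to the Pure-terms lemma there is thin, since that lemma speaks of base refinement types rather than computation types, though the paper is no more explicit), and that you must normalize away {\sc T-sub} up front via the rule of consequence --- a case the paper's enumeration silently omits, so you are if anything more thorough there. The remaining cases --- the parser primitives checked against their evaluation rules, bind with the concrete intermediate heap $\mathcal{H}_i$ instantiating $\mathsf{h_i}$ and the error/success split on $x$, choice, the heap-manipulating expressions discharged through the well-typed-store relation, and the weakening/consistency bookkeeping needed to produce $\Gamma'$ --- line up with the paper's essentially one for one.
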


% \begin{theorem}[Soundness]

% \label{thm:soundness}
% Given a specification $\sigma$ = $\forall \alpha$. $\mathsf{PE^{\el}}$
% $\{\phi_1\}$ $\nu$ : {\sf t} $\{\phi_2\}$ and a \name\ expression \eip, such
% that under some $\Gamma$, $\Gamma \vdash$ \eip : $\sigma$, then if
% there exists $\mathcal{H}$ such that $\Gamma \models \phi_1 (\mathcal{H})$
% then:
% \begin{itemize}
%   \item \textnormal{Either \eip \ is a value, and:} \dots {\sf (G3)}
%   \begin{enumerate}
%     \item $\Gamma , \phi_1$ $\models$ $\phi_2$ ($\mathcal{H},\ \eip$,\ $\mathcal{H}$)
%   \end{enumerate}  
%   \item \textnormal{OR Iff evaluating \eip\ in $\mathcal{H}$ terminates then, there exists a $\mathcal{H'}$ such that  ($\mathcal{H}$; \eip) $\Downarrow$ ($\mathcal{H'}$; \S{v}), and:} \dots {(\sf G4)})
%   \begin{enumerate}
%     \item $\exists$ $\Gamma'$, $\Gamma \subseteq \Gamma'$
%     \item $\Gamma'$ $\vdash$ $\S{v} : \S{t}$. \dots {\sf (G4.1)}
%     \item $\Gamma', \phi_1$ ($\mathcal{H}$) $\models$$\phi_2$ ($\mathcal{H},\ \S{v}$,\ $\mathcal{H'}$) \dots {\sf (G4.2)}
%      \end{enumerate}
% \end{itemize}
% \end{theorem}
where ({\sf consistent $\Gamma$ $\Gamma'$}) is a Boolean-valued
function that ensures that $\forall$ x $\in$ (dom ($\Gamma$) $\cap$
dom ($\Gamma'$)). $\Gamma \vdash$ x : $\sigma$ $\implies$ $\Gamma'
\vdash$ x : $\sigma$. Additionally, $\forall \phi$. $\Gamma \models
\phi$ $\implies$ $\Gamma' \models \phi$.
\begin{proof}
The soundness proof is by induction on typing rules in
Figures~\ref{fig:typing-base} and ~\ref{fig:typing-parserss}, proving
the soundness statement against the evaluation rules in
Figures~\ref{fig:semantics-parser}.\footnote{Proofs for all theorems
  are provided in the supplemental material.}
\end{proof}

\paragraph{Decidability of Typechecking in \name}
Propositions in our specification language are first-order formulas in
the theory of EUFLIA~\cite{eufa}, a theory of equality of uninterpreted
functions and linear integer arithmetic.

The subtyping judgment in $\lambda_{sp}$ relies on the semantic
entailment judgment in this theory. Thus, decidability of type
checking in $\lambda_{sp}$ reduces to decidability of semantic
entailment in EUFLIA. The following lemma argues that the
verification conditions generated by \name typing rules always
produces a logical formula in the Effectively Propositional
(EPR)~\cite{epr,smt-eps} fragment of this theory consisting of formulae with
prenex quantified propositions of the forms $\exists^{*}$
$\forall^{*}$ $\phi$.  Off-the-shelf SMT solvers (e.g., Z3) are
equipped with efficient decision procedures for EPR
logic~\cite{smt-eps}, thus making typechecking decidable in \name.

\begin{definition}
  We define two judgments:
  \begin{itemize}
    \item $\vdash$ $\Gamma$ {\sf EPR} asserting that all
    propositions in $\Gamma$ are of the form $\exists^{*}$ $\forall^{*}$
    $\phi$ where $\phi$ is a quantifier free formula in {\sf EUFLIA}.
    \item $\Gamma$ $\vdash$ $\phi$ {\sf EPR}, asserting that under a given $\Gamma$,
    semantic entailment of $\phi$ is always of the form $\exists^{*}$
    $\forall^{*}$ $\phi'$.  
  \end{itemize}
\end{definition}

\begin{lemma}[Grounding]{\label{lem:grounding}}
If $\Gamma$ $\vdash$ {\sf e} : $\tau$,
then $\vdash$ $\Gamma$ {\sf EPR} and if\ $\Gamma$ $\vDash$ $\phi$ then
$\Gamma$ $\vdash$ $\phi$ {\sf EPR}
\end{lemma}

\begin{theorem}[Decidability \name]
  \label{thm:decidability}
  Typechecking in \name\ is decidable.
  \end{theorem}

\section{Evaluation}
\label{sec:evaluation}

%\AM{Create a section on the generation of VCs, should go somewhere else earlier.}
\subsection{Implementation}

\name\ is implemented as a deeply-embedded DSL in OCaml\footnote{An anonymized repository link is provided in the supplemental material.} equipped with
a refinement-type based verification system encoding the typing rules
given in Section~\ref{sec:typing} and a parser translating an
OCaml-based surface language of the kind presented in our motivating
example to the \name\ core, described in Section~\ref{sec:syntax}.  To
allow \name\ programs to be easily used in an OCaml development, its
specifications can be safely erased once the program has been
type-checked.  Note that a \name program, verified against a safety
specification is guaranteed to be safe when erased since verification
takes place against a stricter memory abstraction; in particular,
since \name\ programs are free of aliasing by construction and thus remain so
when evaluated as an ML program.  This obviates the need for a
separate interpreter/compilation phase and gives \name-verified
parsers efficiency comparable to the parsers written using OCaml
parser-combinator libraries~\cite{mparser,angstrom}.
%% We note that one point of difference -  interpreting the non-deterministic semantics of
%% the choice parser to OCaml, however, given that \name uses a
%% non-deterministic choice while verification, a program which is
%% correct against this semantics will also be correct against a
%% prioritized-choice semantics commonly used in OCaml Parser combinator
%% libraries~\cite{mparser}, when erased to standard OCaml.

\name\ specifications typically require meaningful qualifiers over
inductive data-types, beyond those discussed in our core language; in
addition to the qualifiers discussed previously, typical examples
include qualifiers to capture properties such as the length of a list,
membership in a list, etc.  \name provides a way for users to write
simple inductive propositions over inductive data types, translating
them to axioms useful for the solver, in a manner similar to the use
of {\it measures} and {\it predicates} in other refinement type
works~\cite{liquidoriginal,liquidextended}.  For example, a qualifier
for capturing the length property of a list can be written as:
\begin{lstlisting}[escapechar=\@,basicstyle=\small\sf,breaklines=true]
   qualifier len [] $\rightarrow$ 0 | len (x :: xs) $\rightarrow$ len (xs) + 1.
\end{lstlisting}
\name\ generates the following axiom from this qualifier:
\begin{lstlisting}[escapechar=\@,basicstyle=\small\sf,breaklines=true]
   $\forall$ xs : $\alpha$ list, x : $\alpha$. len (x :: xs) = len (xs) + 1 $\wedge$ len [] = 0
\end{lstlisting}
% Our refinement type system is implemented over 
% the relational-refinement type system~\cite{relref}.
\name\ is implemented in approximately 9K lines of OCaml code.  The
input to the verifier is a \name\ program definition, correctness
specifications, and any required qualifier definitions.  Given this,
\name\ infers types for other expressions and component parsers,
generates first-order verification conditions using the typing
semantics discussed earlier, and checks the validity of these
conditions.

% 
% \paragraph{A-normalization}
% The \name\ typechecker is built over an A-normalized grammar for \name\ expressions, this allows us to use the types of intermediate
% expressions and further simplifies many typing rules, like fix point parser, function and constructor applications etc.
% On the downside, it negatively impacts the size of $\lambda_{sp}$ terms, and hence the erased terms.
% \AM{Do we need this section still?}

% 
% 
% The table lists the number
% of meaningful sub-parser in Morpheus(\#C), discounting trivial parsing
% components for terminals like {\it ws} character, {\it comma},
% etc. Next, we show the number of control states in these combined
% parsers,counting the control states for each atomic-parser (\#Q).  
% The ``Comb'' column shows the set of \name\ combinators
% used in the program.  The last column gives a brief description of the
% example. All examples were exeecuted on a 2.7GHz, 64 bit Ubuntu
% machine; type-checking and verification for each program took less
% than {\sf 3} seconds.

\subsection{Results and Discussions}

We have implemented and verified the examples given in the paper,
along with a set of benchmarks capturing interesting, real-world
safety properties relevant to data-dependent parsing tasks. The goal
of our evaluation is to consider the effectiveness of \name\ with
respect to generality, expressiveness and practicality.
Table~\ref{table:results} shows a summary of the benchmark programs
considered. Each benchmark is a \name\ parser program affixed with a
meaningful safety property (last column). The first column gives the
name of the benchmark. The second column of the table describes
benchmark size in terms of the number of lines of \name\ code, without
the specifications. The third column gives a pair D/P, showing the
number of unique derived (D) combinators (like \textsf{count}, \textsf{many}, etc.) used in the
benchmark from the \name\ library, and the number of primitive (P)
parsers (like \textsf{string}, \textsf{number}, etc.) from the \name library used in the benchmark; the former provides some insight on the
usability of our design choices in realizing extensibility. The fourth
column lists the size of the grammar along with the number of
production rules in the grammar. The fifth column gives the number of
verification conditions generated, followed by the time taken to verify
them (sixth column). The overall verification time is the time taken
for generating verification conditions plus the time Z3 takes to solve
these VCs. All examples were executed on a 2.7GHz, 64 bit Ubuntu
% machine. \ADD{The seventh column lists size of the formulas in terms of
% the number of conjuncts(\#Q) (possibly non-unique) in each of the benchmarks.
% User-provided specifications are required to specify a top-level
% safety property and to specify invariants for \textsf{fix}
% expressions akin to loop invariants that would be provided in a
% typical verification task.} 
The seventh column quantifies the annotation effort for verification. It gives a ratio (\#A/\#Q) of required user-provided specifications (in terms of the number of conjuncts in the specification) to the total specification size (annotated + inferred). User-provided specifications are required to specify a top-level
safety property and to specify invariants for \textsf{fix}
expressions akin to loop invariants that would be provided in a
typical verification task.

Finally, the last column gives a
high-level description of the data-dependent safety property being
verified.

%% Note that \name\ aims at modular
%% verification; thus, we give local specifications to each subparsers.
%% However, a user may choose to perform whole-program full-parser
%% verification, which would substantially reduce the annotation
%% requirement as \name infers specifications for all the
%% subparsers. 
% 
% 
% meaningful
% parser components in the benchmark, discounting trivial parsing
% components for terminals like {\sf ws} {\sf character}, {\sf comma},
% etc. The number in parenthesis gives the number of atomic-parsing PSTs
% used. The third column lists the number of verification conditions
% generated followed by the time taken to verify these verification
% conditions (fourth column).  The number of verification conditions
% excludes VCs generated (if any) to
% verify terms in $L$, and explicitly focuses on the verification
% conditions generated using the \name\ expression typing semantics.
% The overall time shown is the time taken for generating the verification
% condition plus the time Z3 takes to solve these VCs.  All examples
% were executed on a 2.7GHz, 64 bit Ubuntu machine.
% 

Our benchmarks explore data-dependent parsers from several interesting
categories.\footnote{ The grammar for each of our implementations is
  given in the supplemental material.}
The first category, represented by {\sf Idris do-block},
\textsf{Haskell case-exp} and \textsf{Python while-statement}, capture
parsing activities concerned with layout and indentation introduced
earlier. Languages in which layout is used in the definition of their
syntax require context-sensitive parser
implementations~\cite{indentation1, indentation2}. We encode a
Morpheus parser for a sub-grammar for these languages whose
specifications capture the layout-sensitivity property.
  
The second category, represented by \textsf{png} and \textsf{ppm}
consider data-dependent image formats like PNG or PPM.
Verifying data-dependence is non-trivial as it requires verifying an
invariant over a monadic composition of the output of one parser
component with that of a downstream parser component, interleaved with
internal parsing logic.

The next category, captured by \textsf{xauction}, \textsf{xprotein},
and \textsf{health}, represent data-dependent parsing in
data-processing pipelines over XML and CSV databases.  For {\sf
  xauction} and {\sf xprotein}, we extend  XPath expressions over
XML to {\it dependent} XPath expressions.  Given that XPath
expressions are analogous to regular-expressions over structured XML
data, {\it dependent} XPath expressions are analogous to dependent
regular-expressions over XML.  We use these expressions to encode a
property of the XPath query over XML data for an online auction and
protein database, resp. Note that verifying such properties over XPath
queries is traditionally performed manually or through testing. In the
case of \textsf{health}, we extend regular custom pattern-matching
over CSV files to stateful custom pattern-matching, writing a
data-dependent custom pattern matcher. We verify that the parser
correctly checks relational properties between different columns in
the database.

The next two categories have one example each: we introduced the {\sf
  c typedef} parser in Section~\ref{sec:overview} that uses data
dependence and effectful data structures to disambiguate syntactic
categories (e.g., \emph{typenames} and \emph{identifiers}) in a
language definition.  Benchmark {\sf streams} defines a parser over
streams (i.e. input list indexed with natural numbers).
\paragraph{Annotation overhead vs inference.}
There are some interesting things to note in the second to last column; First, as the benchmarks (grammars) become more complex, i.e., have a greater number of functions (sub-parsers), the ratio decreases (small is better). In other words, the gains of type-inference become more visible (e.g., Haskell, Idris, C typedef). This is because Morpheus easily infers the types of these functions (sub-parsers).
The worst (highest) ratio is for the PPM parser. This parser is interesting because, even though the grammar is small, it makes multiple calls to fixpoint combinators. Thus, the user must provide specifications for the top-level parser and each fix-point combinator. Additionally, given a small number of functions (sub-parsers) due to small grammar size, the gains due to inference are also low. In summary, these trends show that the efforts needed for verification are at par with other Refinement typed languages (like., Liquid Types~\cite{liquidoriginal}, FStar~\cite{fstar}, etc, and as the parsers become bigger, the benefits of inference become more prominent.

\begin{table}[t]
\centering 
\small\begin{tabular}{| l | c | c | c | c | c | c | l |} 
\hline %inserts double horizontal lines
Name	& \# Loc & D/P & G(\#prod) & \# VCs& T (s) & (\#A/\#Q) & data-dependence \\% inserts table
\hline

haskell case-exp   & 110 & 5/4 & 20 (7) 	&  17  	&  8.11 &  9/39 & \scriptsize{layout-sensitivity}   \\
idris do-block 	   & 115  & 5/5   &  22(8)  		&  33	&  10.46  & 7/26  & \scriptsize {layout-sensitivity}  \\
python while-block & 47   & 3/3   & 25 (7)	&  23	&  7.44 & 6/20  &  \scriptsize {layout-sensitivity} \\
ppm 		   & 46   & 5/2   & 21 (7)	&  20	&  5.33   & 4/9 &  \scriptsize {tag-length-data}  \\
png chunk 	   & 30   & 3/4	  & 10 (2) 	&  12     & 3.38  & 2/7 & \scriptsize {tag-length-data} \\
xauction	   & 54   & 4/4   & 31 (10) 	&  19 	&  6.70 & 2/8 & \scriptsize {data-dependent XPath expression} \\
xprotein	&  45  & 3/3   & 24(6)		&  22	& 6.23 & 4/10  & \scriptsize {data-dependent XPath expression}\\
health 		&  40  & 4/3	  & 15(5)	& 13 	& 4.56	& 2/8  & \scriptsize {data-dependent CSV pattern-matching} \\

c typedef  	   & 60   & 4/4	  & 14 (5)	&  21	& 6.78	& 4/16 & \scriptsize {context-sensitive disambiguation} \\
%nitf 	& 	& 	& 	& DoD Image format\\
streams		&  51  & 4/2   &  12 (4)   	& 16	&  5.21	& 2/9 & \scriptsize {safe stream manipulation} \\

%consumer &  & \_  & ({\sf >>=}),(*)  &effectful Regex, CC~\cite{consumer}  \\
\hline 
\end{tabular}
\caption{Summary of Benchmarks : {\small \#Loc Loc defines the size of the parser implementation in \name ; 
D/P gives the number of derived/primitive combinator uses in the parser implementation;
grammar size G(\# prod) defines size of the grammar along with the number of production rules in the grammar; \#VCs defines number of VCs generated; T(s) is the time for discharging these VCs in seconds; (\#A/\#Q) defines the ratio of number of conjuncts used in the specification provided by the user (\#A) to the total number of conjuncts (\#Q) across all files in the implementation; Property gives a high-level description of the data-dependent safety property. }} % title of Table
\label{table:results} 
\end{table}

\subsection{Case Study: Indentation Sensitive Parsers}
As a case study to illustrate \name's capabilities, we consider a
particular class of stateful parsers that are {\it
  indentation-sensitive}, and which are widely used in many functional
language implementations. These parsers are characterized by having
indentation or layout as an essential part of their grammar. Because
indentation sensitivity cannot be specified using a context-free
grammar, their specification is often specified via an orthogonal set
of rules, for example, the offside rule in Haskell.\footnote{https://www.haskell.org/onlinereport/haskell2010/haskellch10.html}
Haskell language specifications define these rules in a complex routine found
in the lexing phase of the compiler~\cite{marlow}. Other indentation-sensitive
languages like Idris~\cite{idris} use parsers written using a parser
combinator libraries like \S{Parsec} or its
variants~\cite{parsec,megaparsec} to enforce indentation constraints.

\begin{figure}
\begin{subfigure}[t]{0.48\textwidth}
\begin{lstlisting}[escapechar=\@,basicstyle=\small\sf,breaklines=true,showspaces=false,
  showstringspaces=false, tabsize=2, breaklines=true,
  xleftmargin=5.0ex]
 DoBlock ::= 'do' OpenBlock Do* CloseBlock;
 Do ::=
    'let' Name  TypeSig'      '=' Expr
   | 'let' Expr'                '=' Expr
   |  Name  '<-' Expr
   |  Expr' '<-' Expr
   |  Ext Expr
   |  Expr  
\end{lstlisting}
\caption{An Idris grammar rule for a {\sf do} block}
\label{fig:idris-grammar}
\end{subfigure}
\begin{subfigure}[t]{0.48\textwidth}  
\begin{lstlisting}[escapechar=\@,basicstyle=\small\sf,breaklines=true,showspaces=false,
  showstringspaces=false, tabsize=2, breaklines=true,
  xleftmargin=5.0ex]
  expr =  do 
            t <- term 
  	        symbol "+" 
	          e <- expr 
	          pure t + e
         symbol '*'      
\end{lstlisting}
\caption{An input to the parser.}
\label{fig:idris-grammar-example}
\end{subfigure}
\caption{An Idris grammar rule for a {\sf do} block and an example input.}
\end{figure}
Consider the Idris grammar fragment shown in
Figure~\ref{fig:idris-grammar}. The grammar defines the rule to parse
a \S{do}-block. Such a block begins with the \S{do} keyword, and is
followed by zero or more \S{do} statements that can be \S{let}
expressions, a binding operation ($\leftarrow$) over names and
expressions, an external expression, etc. The Idris documentation
specifies the indentation rule in English governing where these
statements must appear, saying that the ``\emph{indentation of each
  \S{do} statement in a \S{do}-block {\sf Do*} must be greater than
  the current indentation from which the rule is
  invoked}~\cite{idrisgrammar}.'' Thus, in the Idris code fragment
shown in Figure~\ref{fig:idris-grammar-example}, indentation sensitivity
constraints require that the last statement is not a part
of the do-block, while the inner four statements are. A correct Idris
parser must ensure that such indentation rules are preserved.

Figure~\ref{fig:idris-original-do} presents a fragment of the parser
implementation in Haskell for the above grammar, taken from the Idris
language implementation source, and simplified for ease of
explanation. The implementation uses Haskell's \S{Parsec} library, and
since the grammar is not context-free, it implements indentation rules
using a state abstraction (called {\sf IState}) that stores the
current indentation level as parsing proceeds. The parser then manually performs reads and
updates to this state and performs indentation checks at appropriate
points in the code (e.g. line~\ref{line:lo1},~\ref{line:indentGreater2}).

\begin{figure}
\begin{subfigure}[t]{0.48\textwidth}
\begin{lstlisting}[escapechar=\@,basicstyle=\small\sf,numbers=left,language=Haskell,showstringspaces=false]
data IState = IState {
    ist :: Int
    @\dots@
} deriving (Show)
data PTerm =  PDoBlock [PDo]                    data PDo t =  DoExp t | DoExt t | DoLet  t t | @\ldots@            
type IdrisParser a = Parser IState a @\label{line:idrisp}@

getIst :: IdrisParser IState
getIst = get 
putIst :: (i : Int) -> IdrisParser ()
pustIst i = put {ist = i}

doBlock :: IdrisParser PTerm
doBlock = do  
            reserved "do"
            ds <- indentedDoBlock
            return (PDoBlock ds)
indentedDo ::  IdrisParser (PDo PTerm)
indentedDo = do 
               allowed <- ist getIst @\label{line:getind}@
               i <- indent
               if (i <= allowed) @\label{line:lo1}@
                  then fail ("end of block")
               else do_ @\label{line:originaldo}@ 
indent :: IdrisParser Int                     
indent =  @\label{line:indentdef}@
    do    
     if (lookAheadMatches (operator)) then 
        do 
          operator 
          return (sourceColumn.getSourcePos)
      else 
          return (sourceColumn.getSourcePos)
                    
\end{lstlisting}
\end{subfigure}
\begin{subfigure}[t]{0.48\textwidth}
\begin{lstlisting}[escapechar=\@,basicstyle=\small\sf,numbers=left,breaklines=true,firstnumber=34,language=Haskell,showstringspaces=false]
do_ :: IdrisParser (PDo PTerm)
do_  =  do 
          reserved "let"
          i <- name
          reservedOp "="
          e <- expr 
          return (DoLet i e)
    <|> do 
          e <- expr 
          return (DoExt  i e)
    <|> do e <- expr 
           return (DoExp  e)
indentedDoBlock :: IdrisParser [PDo PTerm]
indentedDoBlock  = @\label{line:indentdostart}@
    do 
        allowed <- ist getIst @\label{line:indentdobegin}@  
        lvl' <- indent 
        if (lvl' > allowed) then @\label{line:indentGreater2}@
            do 
                putIst lvl' @\label{line:setind}@ 
                res <- many (indentedDo) @\label{line:many}@ 
                putIst allowed @\label{line:reset}@
                return res 
        else fail "Indentation error"

lookAheadMatches :: IdrisParser a -> IdrisParser Bool 
lookAheadMatches p = 
    do 
        match <- lookAhead (optional p)
        return (isJust match)
\end{lstlisting}
\end{subfigure}
\caption{A fragment of a \S{Parsec} implementation for Idris \S{do}-blocks with indentation checks.}
\label{fig:idris-original-do}
\end{figure}
The IdrisParser (line~\ref{line:idrisp}) is defined in terms of
\S{Parsec}'s parser monad over an Idris state (here, {\sf IState}), which
along with other fields has an integer field (\S{ist}) storing the
current indentation value.  A typical indentation check (e.g. see
lines~\ref{line:getind} - ~\ref{line:lo1}) fetches the current value
of \S{ist} using \S{getIst}, fetches the indentation of the next
lexeme using the \S{Parsec} library function \S{indent}, and compares
these values.

The structure of the implementation follows the grammar
(Figure~\ref{fig:idris-grammar}): the {\sf doBlock} parser parses a
reserved keyword ``\S{do}'' followed by a block of \S{do\_} statement
lists.  The indentation is enforced using the parser
\S{indentedDoBlock} (defined at line~\ref{line:indentdostart}) that
gets the current indentation value (\S{allowed}) and the indentation for
the next lexeme using \S{indent}, checks that the indentation is
greater than the current indentation (line~\ref{line:indentGreater2})
and updates the current indentation so that each \S{do} statement is
indented with respect to this new value. 

It then calls a parser
combinator {\sf many} (line~\ref{line:many}), which is the \S{Parsec}
combinator for the Kleene-star operation, over the result of
\S{indentedDo}, i.e., $\mathsf{indentedDo}^{*}$.  The {\sf
  indentedDo} parser again performs a manual indentation check,
comparing the indentation value for the next lexeme against the
block-start indentation (set earlier by {\sf indentedDoBlock} at line
~\ref{line:setind}) and, if successful, runs the actual \S{do\_}
parser (line~\ref{line:originaldo}).  Finally, {\sf indentedDoBlock} resets
the indentation value to the value before the block
(line~\ref{line:reset}).

Unfortunately, it is non-trivial to reason that these manual checks
suffice to enforce the indentation sensitivity property we desire.
Since they are sprinkled throughout the implementation, it is easy to
imagine missing or misplacing a check, causing the parser to
misbehave. More significantly, the implementation make incorrect
assumptions about the effectful actions performed by the library that
are reflected in API signatures. In fact, the logic in the above code
has a subtle bug~\cite{indentation1} that manifests in the input
example shown in Figure~\ref{fig:inputexample}.

%\TODO{Indentation in this figure, use something other than listing}
Note that the indentation of the token `mplus' is such that it is
not a part of either \S{do} block; the implementation, however, parses
the last statement as a part of the inner do-block, thereby violating the
indentation rule, leading to the program being incorrectly parsed.

The problem lies in a mismatch between the contract provided by the
library's \S{indent} function and the assumptions made about its
behavior at the check at line~\ref{line:lo1} in the {\sf indentedDo}
parser (or similarly at line~\ref{line:indentGreater2}). 
Since checking indentation levels for each character is costly, \S{indent}
  is implemented (line~\ref{line:indentdef}) in a way that causes
  certain lexemes (user defined operators like `mplus') to be ignored
  during the process of computing the next indentation level. It uses
  a \textsf{lookAdheadMatches} parser to skip all lexemes that are
  defined as operators.  In this example, {\sf indent} does not check
the indentation of lexeme `mplus', returning the indentation of the
token \S{pure} instead.  Thus, the indentation of the last statement
is considered to start at \S{pure}, which incorrectly satisfies the
checks at line~\ref{line:lo1} or line~\ref{line:indentGreater2}, and
thus causes this statement to be accepted as part of
\S{indentedDoBlock}.

\begin{wrapfigure}{r}{.5\textwidth}
  \vspace*{-.2in}
\begin{lstlisting}[escapechar=\@,basicstyle=\small\sf,breaklines=true,language=Haskell, numbers=left, showspaces=false,
  showstringspaces=false, tabsize=2, breaklines=true,
  xleftmargin=5.0ex]
  expr = do 
            t <- term 
	          do 
              symbol "+" 
	            e <- expr 
	            pure  t + e
	       `mplus` pure t
\end{lstlisting}
\caption{An input expression that is incorrectly parsed by the implementation shown in Figure~\ref{fig:idris-original-do}.}
\label{fig:inputexample}
\vspace*{-.20in}
\end{wrapfigure}
Unfortunately, unearthing and preventing such bugs is challenging. We show how implementing the same parser in \name allows us to catch the bug and verify a correct version of the parser.
% challenging for two reasons. First, the specification of the required
% safety property, i.e. the {\it indentation of the next lexeme parsed
%   must be in the allowed range of indentation values}, relates two
% values in the mutable state, {\sf ist} and the input stream and thus
% requires stateful specifications which are beyond the capabilities of Haskell's
% core type system, to express. Second, the parser combinator library
% exposes effectful computation operating over internal states,
% exceptions, and non-deterministic choices (\S\small{<|>}), as seen, for
% example, in the definition of \S{do\_}; reasoning about the scope and
% nature of these effects is additionally complicated by the
% expressivity of the host language that weaves calls to these
% combinators through arbitrarily complex control- and dataflow.

Figure~\ref{fig:idris-do} shows a \name implementation for a portion
of the Idris {\sf doBlock} parser from
Figure~\ref{fig:idris-original-do} showing the implementation of three
parsers for brevity, {\sf doBlock}, {\sf indentedDo}, and {\sf
  indent}, along with other helper functions. The structure is similar to the original Haskell
implementation, even though the program uses ML-style operators for assignment
and dereferencing. For ease of presentation, we have written the
program using {\it do-notation} (\M{do}) as syntactic sugar for
\name's monadic bind combinator.
% @\texttt{\textcolor{gray}{
%  \begin{tabbing}                        
%  indentedDo : \label{line:indentedDospec} \\
%  $\mathsf{PE^{state}}$\,\= \{sel\= (h, ist) = I\}\\ 
%  \>\>              v : offsideTree I  \\
%  \{indentT (v) = pos (sel (h, inp)\}
%  \end{tabbing}}}@

\begin{figure}
  \begin{subfigure}[t]{0.4\textwidth}
  \begin{lstlisting}[escapechar=\@,basicstyle=\footnotesize\sf,numbers=left,language=ML,showspaces=false,showstringspaces=false,tabsize=1]
type @$\alpha$@ pdo  =  DoExp of @$\alpha$@ | DoExt of @$\alpha$@ | @\ldots@    
type pterm =  PDoBlock of ((pterm pdo) list)                        
let ist = ref 0
@$\ldots$@
@\scriptsize\texttt{\textcolor{blue}{
\begin{tabbing}
doBlock :\\ \label{line:doBlockSpec}
$\mathsf{PE^{stexc}}$\=\\
\>\{$\forall$ h, I. sel\= (h, ist) = I\} \\
\>\quad            $\nu$ : (offsideTree I) result  \\
\>\{$\forall$ h, $\nu$, h', I, I'.\=\\
\>\,\,\,\,( $\nu$ = Inl (\_) => (sel (h, ist) = I $\wedge$ \\
\>\,\,\,\,\,sel (h', ist) = I') => I' = I) \\ 
\>\,\,\,\,\,$\wedge$ $\nu$ = Inr (Err) => \\
\>\,\,\,\,\,(sel (h', inp) $\subseteq$ sel (h, inp)) \}
 \end{tabbing}}}@
let doBlock @\label{line:doblock}@ = 
    @\M{do}@ 
       dot <- reserved "do" @\label{line:do}@
       ds  <- indentedDoBlock @\label{line:ds}@
       return Tree {term = PDoBlock ds; @\label{line:l1}@
                   indentT = indentT (dot);
                   children = (dot :: ds) @\label{line:lcons}@}
@\scriptsize\texttt{\textcolor{gray}{
   \begin{tabbing}
     do\_ : $\mathsf{PE^{stexc}}$\=
\,\{$\forall$ h, I. sel\= (h, inp) = I\} \label{line:dospec}\\  
\>\quad              $\nu$ : tree result    \\
\>\{$\forall$ h, $\nu$, h', I, I'.\=\\
($\nu$ = Inl(\_) => \\
 \>\ indentT($\nu$)\= = pos (sel (h, inp)) \\
\>\>children ($\nu$) = nil ) \\
 $\wedge$ $\nu$ = Inr (Err) => \\
\>\,\,\,\,\,(sel (h', inp) $\subseteq$ sel (h, inp)) \}
   \end{tabbing}}}@                     
let do_ = @$\ldots$@
@\scriptsize\texttt{\textcolor{gray}{
   \begin{tabbing}
     lookAheadMatches : $\mathsf{PE^{pure}}$\=
\,\{true\} 
$\nu$ : bool\
\{[h'=h]\}
   \end{tabbing}}}@ 
lookAheadMatches p = 
        $\M{do}$ 
            match <- lookAhead (optional p)
            return (isJust match)
\end{lstlisting}
  \end{subfigure}
  \begin{subfigure}[t]{0.5\textwidth}
  \begin{lstlisting}[escapechar=\@,basicstyle=\footnotesize\sf,numbers=right,breaklines=true,firstnumber=20,language=ML,showstringspaces=false]
@\label{line:inferredindentDo}\scriptsize\texttt{\textcolor{gray}{
  \begin{tabbing}                        
    ind\=entedDo :\\
  \> $\mathsf{PE^{stexc}}$\,\= \{$\forall$ h, I.sel\= (h, ist) = I \}\\ 
  \>\>\quad   $\nu$ : tree result  \\
  \>\{$\forall$ h, $\nu$, h', I, I'.\=\\
 $\forall$ i :\= int.\=
   (i <= I $\Rightarrow$ sel (h', inp) $\subseteq$ sel (h, inp)) $\wedge$ \\
  \>\>\>(i > I $\Rightarrow$ \=indentT ($\nu$) = pos (sel (h, inp) $\wedge$ \\
  \>\>\>\>children ($\nu$) = nil\} 
  \end{tabbing}}}@
let indentedDo = @\label{line:idostart}@ 
    @\M{do}@ 
       allowed <- !ist
       i <- indent @\label{line:indentcall}@ 
       if (i <= allowed ) @\label{line:l3}@ then 
           fail ("end of block")
       else    
            do_ @\label{line:l4}@ 
@\scriptsize\texttt{\textcolor{gray}{  
\begin{tabbing}
sourceColumn : (char * int) list -> int \label{line:sc}
\end{tabbing}
}}@
let sourceColumn = @$\ldots$@
@\scriptsize\texttt{\textcolor{gray}{  
\begin{tabbing}
indent : \label{line:indentspec}
$\mathsf{PE^{state}}$\= \{true\}\,\\
  \>\quad      $\nu$ : int\, \\
  \>\{$\forall$ h, $\nu$, h'.\=\\

  \> sel (h', inp) $\subseteq$ sel (h, inp) \}
\end{tabbing}
}}@ 
let indent = 
    @\M{do}@
      if (lookAheadMatches (operator)) then 
        @\M{do}@ 
          operator
          return (sourceColumn !inp)
      else 
          return (sourceColumn !inp)
 \end{lstlisting}
  \end{subfigure}
  \caption{\name\ implementation and specifications for a portion of
    an Idris \S{Do}-block with indentation checks, \M{do} is a
    syntactic sugar for \name's monadic bind.  Specifications given in
    \textcolor{blue}{Blue} are provided by the parser writer;
    \textcolor{gray}{Gray} specifications are inferred by \name.}
  \label{fig:idris-do}
  \end{figure}

\paragraph{Specifying Data-dependent Parser Properties}
To specify an \emph{indentation-sensitivity} safety property, we first
define an inductive type for a parse-tree (\S{tree}) and refine this
type using a dependent function type, ({\sf offsideTree i}), that
specifies an indentation value for each parsed result.
\begin{lstlisting}[escapechar=\@,basicstyle=\small\sf,language=ML]
  type tree = Tree {term : pterm; indentT : int; children : tree list}
  type offsideTree i = Tree {term : pterm; @\label{line:tree}@ indentT : { v : int | v > i }; children : (offsideTree i) list}
 \end{lstlisting}
This type defines a tree with three fields:
\begin{itemize}
\item A \S{term} of type {\sf pterm}. 
%where \S{pterm} is the
%\name\ equivalent of the original parser implementation return type
%\S{PTerm} shown in Figure~\ref{fig:idris-original-do}.
\item The indentation (\S{indentT}) of a returned parse tree, the refinement constraints on {\sf indentT}
  requires its value to be greater than \S{i}.
\item A list of sub-parse trees (\S{children}) for each of the
  terminals and non-terminals in the current grammar rule's right-hand
  side, each of which must also satisfy this refinement.
\end{itemize}
\name\ additionally automatically generates {\it qualifiers} like, \S{indentT}, \S{children}, etc, for each of the
datatype's constructors and fields with the same name that can be used
in type refinements.
%% E.g. for the tree type, \name
%% generates three qualifiers {\it viz.} tree, indentT and children.
%% \begin{lstlisting}[escapechar=\@,basicstyle=\small\sf,language=ML]
%%  qualifier term : tree -> pterm
%%  qualifier indentT : tree -> int 
%%  qualifier children : tree -> tree list 
%% \end{lstlisting}
The type {\sf offsideTree i} is sufficient to specify pure functions
that return an indented tree, e.g.,
\begin{lstlisting}[escapechar=\@,basicstyle=\small\sf,language=ML]
  goodTree : (i : int) -> offsideTree i
\end{lstlisting}
However, such types are not sufficiently expressive to specify
stateful properties of the kind exploited in our example program. For
example, using this type, we cannot specify the required safety
property for {\sf doBlock} that requires ``{\it the indentation of the
  parse tree returned by {\sf doBlock} must be greater than the current
  value of \S{ist}}'' because \S{ist} is an effectful heap variable.

%\paragraph{Specifying effectful safety properties}
%To specify effectful properties, our specification language also
% provides a type for effectful computations. We use a specification
% monad of the form $\mathsf{PE^{\el}}$ \{ $\phi$ \} $\nu$ : $\tau$ \{
% $\phi'$ \} that is parameterized by the \emph{effect} of the
% computation $\el$ (e.g., \S{state}, \S{exc}, etc.), and Hoare-style
% pre- and post-conditions~\cite{htt,fstar,vcc}. Here, $\phi$ and
% $\phi'$ are first-order logical propositions over qualifiers applied
% to program variables and variables in the type context.  The
% precondition $\phi$ is defined over an abstract input heap {\sf h}
% while the postcondition $\phi'$ is defined over input heap {\sf h},
% output heap {\sf h'}, and the special result variable $\nu$ that denotes
% the result of the computation.  
We can specify a safety property for a {\sf doBlock} parser as shown
on line~\ref{line:doBlockSpec} in Figure~\ref{fig:idris-do}.  The type
specification in \textcolor{blue}{blue} are provided by the
programmer.  The type should be understood as follows: The effect label (\S{stexc}) defines that the possible effects produced by the parser include \S{state} and \S{exc}. The
precondition binds the value of the mutable state variable {\sf ist},
a reference to the current indentation level, to {\sf I} via the use
of the built-in qualifier \S{sel} that defines a select operation on
the heap~\cite{mccarthy}.  This binding is needed even though
\textsf{I} is never used in the precondition because the type for the
return variable (\S{offsideTree\ I}) is dependent on I.  The return
  type (\S{offsideTree\ I} result) obligates the computation to return a
  parse tree (or a failure) whose indentation must be greater than {\sf I}. The
  postcondition constraints that the final value of the indentation is
  to be reset to its value prior to the parse (a \emph{reset}
  property) when the parser succeeds (case $\nu$ = Inl (\_)) or that the input stream \S{inp} is monotonically consumed when the parser fails (case $\nu$ = Inr (Err)).  The types for other parsers in the figure can be
  specified as shown at lines ~\ref{line:dospec},
  ~\ref{line:inferredindentDo}, ~\ref{line:indentspec}, etc.; these
  types shown in \textcolor{gray}{gray} are automatically inferred by
   \name's type inference algorithm. 
%   For example, the type for the {\sf
%     do\_} parser (line ~\ref{line:dospec}) returns a base type {\sf
%     tree} and uses a integer-valued qualifier {\sf pos} over an
%   indexed list giving the head position of the list. Intuitively the
%   type says that the indentation of the returned tree is equal to the
%   index of the first character in the input, and there are no children
%   in the tree.  Similarly, the postcondition for parser
%   \textsf{indent} asserts the parser makes no modification to the
%   program heap, a claim consistent with its implementation that only
%   uses pure function \textsf{sourceColumn} to return the current
%   indentation position.

\paragraph{Revisiting the Bug in the Example}
The bug described in the previous paragraph is unearthed while
typechecking the {\sf indentedDo} implementation or the {\sf indentedDoBlock} implementation. We discuss the case for {\sf indentedDo} case here. To verify that
\S{doBlock} satisfies its specification, \name needs to prove
that the type inferred for the body of {\sf indentedDo}
(lines~\ref{line:idostart}-~\ref{line:l4}):

\begin{enumerate}
\item has a return type that is of the form, {\sf offsideTree I}.
  Concretely, the indentation of the returned tree must be greater
  than the initial value of {\sf ist} (i.e. indentT ($\nu$) > I).
  
\item asserts that the final value of {\sf ist} is equal to the
  initial value.
\end{enumerate}
Goal (1) is required because \textsf{indentedDo} is used by
\textsf{indentedDoBlock} (see Figure~\ref{fig:idris-original-do}),
which is then invoked by \textsf{doBlock}, where its result constructs
the value for \textsf{children}, whose type is \textsf{offsideTree I}
list.  Goal (2) is required because \S{doBlock}'s specified
post-condition demands it.
Type-checking the body for {\sf indentedDo} yields the type
shown at line~\ref{line:inferredindentDo}.
The two conjuncts in the post-condition correspond to the {\it then}
(failure case) and {\it else} (success case) branch in the parser's
body. 

%% shown at 
%%   \vspace*{-.19in}
%% \begin{lstlisting}[escapechar=\@, basicstyle=\small\sf,language=ML]
%% @\texttt{\textcolor{black}{
%%   \begin{tabbing}                        
%%    $\mathsf{PE^{state}}$\,\= \{sel\= (h, ist) = I \}\\ 
%%   \>\>   $\nu$ : tree option  \\
%%   \>\{$\forall$ i : int.\=
%%    (i <= I $\Rightarrow$ sel (h', inp) $\subseteq$ sel (h, inp)) $\wedge$ \\
%%   \>\>(i > I $\Rightarrow$ indentT ($\nu$) = pos (sel (h, inp) $\wedge$ children ($\nu$) = nil\} 
%%   \end{tabbing}}}@
%% \end{lstlisting}

The failure conjunct asserts that the input stream is consumed
monotonically if the indentation level is greater than
\textsf{ist}. The success conjunct is the post-condition of the
\textsf{do\_} parser. This inferred type is, however, too weak to
prove goal (1) given above, which requires the combinator to return a
parse tree that respects the offside rule.  The problem is that {\sf
  indent}'s type (line~\ref{line:indentspec}), inferred as:
\begin{lstlisting}[escapechar=\@,mathescape=true,basicstyle=\small\sf,language=ML]
@\texttt{\textcolor{black}{  
\quad\quad{\sf indent} : $\mathsf{PE^{state}}$
  \{true\} $\nu$\, :\, int\, \{ $\forall$ h, $\nu$, h'. sel (h', inp) $\subseteq$ sel(h, inp)
  \}}}@
\end{lstlisting}
does not allow us to conclude that
\S{indentedDo} satisfies the indentation condition demanded by
\S{doBlock}, i.e., that it returns a well-typed (\S{offsideTree}
I). This is because the type imposes no constraint between the integer
\textsf{indent} returns and the function's input heap, and thus offers no
guarantees that its result gives the position of the first lexeme of
the input list.

We can revise \textsf{indent}'s implementation such that it does not
skip any reserved operators and always returns the position of
the first element of the input list, allowing us to track the
indentation of every lexeme:
\begin{lstlisting}[escapechar=\@,basicstyle=\small\sf,language=ML]
@\scriptsize\texttt{\textcolor{black}{  
indent : $\mathsf{PE^{state}}$ \{true\} $\nu$ : int  
\{$\forall$ h, $\nu$, h'.$\nu$ = pos (sel (h, inp)) $\wedge$ sel (h', inp) $\subseteq$ sel (h, inp)\}}}@ 
let indent = @\M{do}@    
                    s <- !inp
                    return (sourceColumn s)
\end{lstlisting}
This type defines a stronger constraint, sufficient to type-check the revised implementation and raise a type error for the original.
For this example, Morpheus generated {\sf 33}
Verification Conditions (VCs) for the revised successful case and {\sf
  6} VCs for the failing case. We were able to discharge these VCs to the SMT Solver
Z3~\cite{z3}, yielding a total overall verification time of {\sf
  10.46} seconds in the successful case, and {\sf 2.06} seconds in the
case when type-checking failed.

%% \ADD{The above given type differs from the original type for {\sf indent} and defines a stronger constraint on its output. It requires that the value returned by the {\sf indent} function is the position of the first character in the input stream.
%% Note that, the original implementation for {\sf indent} will be ill-typed against this stronger type while the revised implementation typechecks against it. 
%% Finally, typechecking the {\sf indentedDo} implementation using this
%% strengthened specification for {\sf indent}}
%% %\vspace*{-.2in}
%% % \begin{lstlisting}[escapechar=\@,basicstyle=\small\sf,language=ML]
%% % @\texttt{\textcolor{black}{  
%% % \quad\quad   indent : $\mathsf{PE^{state}}$ \{true\} $\nu$ : int  \{ $\nu$ = pos (sel (h, inp) $\wedge$ sel (h, ist) = sel (h', ist)\}}}@ 
%% % \end{lstlisting}
%% is successful.  
%% For this example, Morpheus generated {\sf 33}
%% Verification Conditions (VCs) for the revised successful case and {\sf
%%   6} VCs for the failing case. We were able to discharge these VCs to the SMT Solver
%% Z3~\cite{z3}, yielding a total overall verification time of {\sf
%%   10.46} seconds in the successful case, and {\sf 2.06} seconds in the
%% case when type-checking failed.

This example highlights several key properties of \name verification:
The specification language and the type system allows verifying
interesting properties over inductive data types (e.g., the {\sf
  offsideTree} property over the parse trees).  It also allows
verifying properties dependent on state and other effects such as the
{\it input consumption} property over input streams ({\sf inp}).
Secondly, the annotation burden on the programmer is proportional to
the complexity of the top-level safety property that needs to be
checked. Finally, the similarities between the Haskell implementation
and the \name implementation minimize the idiomatic burden placed on
\name\ users.

\section{Related Work}
\label{sec:related}
%\subsection{Parser Correctness Verification}

\textbf{Parser Verification.}  Traditional approaches to parser
verification involve mechanization in theorem provers like Coq or
Agda~\cite{rocksalt,totalparsercomb,jason,TRX,llparser,costar,compcertParser}. These
approaches trade-off both automation and expressiveness of the grammar
they verify to prove full correctness. Consequently, these approaches
cannot verify safety properties of data-dependent parsers, the subject
of study in this paper. For instance, RockSalt~\cite{rocksalt} focuses
on regular grammars, while ~\cite{TRX,jason} present interpreters for
parsing expression grammars (without nondeterminism) and limited
semantic actions without data dependence.  Jourdan et al.~\cite{lrcoq}
gives a certifying compiler for LR(1) grammars, which translates the
grammar into a pushdown automaton and a certificate of language
equivalence between the grammar and the automaton.  More recently {\sf
  CoStar}~\cite{costar} presents a fully verified parser for the {\sf
  ALL(*)} fragment mitigating some of the limitations of the above
approaches. However, unlike \name, {\sf CoStar} does not handle
data-dependent grammars or user-defined semantic actions.

Deductive synthesis techniques for parsers like
Narcissus~\cite{narcissus} and ~\cite{everparse} focus mainly on
tag-length-payload, binary data formats.  Narcissus~\cite{narcissus}
provides a Coq framework (an \textsf{encode\_decode} tactic) that can
automatically generate correct-by-construction encoders and decoders
from a given user format input, albeit for a restricted class of
parsers.  Notably, the system is not easily extensible to complex
user-defined data-dependent formats such as the examples we discuss in
\name.  This can be attributed to the fact that the underlying
\textsf{encode\_decode} Coq tactic is complex and brittle and may require manual proofs to verify a new format.  
%More importantly, to
% extend Narcissus to a user-provided data format, the user must provide
% the data format specification, an encoder and a decoder program for
% the data, along with a correctness proof for the new format.  These
% components are then fed into the underlying tactic of the framework.
In contrast, \name enables useful verification capabilities for a larger class of parsers,
albeit at the expense of automatic code generation and full correctness. Writing a safe parser implementation for a user-defined
format in \name is no more difficult than manually building the parser
in any combinator framework with the user only having to provide an additional safety specification. 
% \name mostly automatically infers 
% types for the parser, which can then be easily composed with other
% parsers in a modular way. 
EverParse~\cite{everparse} likewise focuses mainly on
binary data formats, guaranteeing full-parser correctness, albeit with
some expressivity limitations. For example, it does not support user-defined semantic actions or global data-dependences for general
data formats. Compared to these efforts, the properties \name\ can
validate are more high-level and general. E.g., ``non-overlapping of two lists of
strings'' in a C-decl parser; ``layout-sensitivity
properties'', etc,. Verifying these properties requires
reasoning over a challenging combination of rich algebraic data types,
mutable states, and higher-order functions.

% On the other hand, both EverParse and Narcissus can, in some
% instances, provide richer guarantees than \name. For example,
% both provide verification capabilities
% not only for parsing, but also for serialization, with formal proofs
% of mutual inverses; ~\cite{everparse} additionally provides a proof of
% uniqueness of the representation ({\it malleability}). In comparison,
% \name\ allows automated verification of user-specified safety properties over higher-level input formats equipped with stateful
% specifications, as well as richer data structure invariants, 
% but does not prove inversion properties for parsers and serializers.
%% Building support for low-level binary-formats in \name. will allow us 
%% to implement some parsers in these works with verified user-specified safety properties.
~\cite{krishnaswami} also explore types for parsing, defining a core
type-system for context-free expressions. However, their goals are
orthogonal to \name\ and are targeted towards identifying expressions
that can be parsed unambiguously.

\textbf{Data-dependent and Stateful Parsers} \name allows writing
parsers for data-dependent and stateful parsers. There is a long line
of work aimed at writing such
parsers~\cite{yakker,indentation1,AfroozehAli2015Optr,stateful-parser}.
None of these efforts, however, provide a mechanism to reason
about the parsers they can express. Further, many of these
systems are specialized for a particular class/domain of problems, such
as ~\cite{yakker} for data-dependent grammars with trivial semantic
actions, or ~\cite{indentation1} for indentation sensitive grammars,
etc.  \name\ is sufficiently expressive to both write parsers and
grammars discussed in many of these approaches, as well as verifying
interesting safety properties.  Indeed, several of our benchmarks are
selected from these works.  In contrast, systems such as~\cite{yakker}
argue about the correctness of the input parsed against the underlying
CFG, a property challenging to define and verify as a \name safety
property, beyond simple string-patterns and regular expressions.  We
leave the expression of such grammar-related properties in \name\ as a
subject for future work.

%\subsection{Transducers for Parsing}
% \textbf{Transducers for Parsing.} A PST is an extension of a Finite
% Symbolic Transducer(FST) ~\cite{stalgo}, with $L$ as the
% theory of labels and a refinement type-based verification mechanism.  The
% variant of Symbolic Transducers closest to PSTs is Symbolic
% Transducers with registers (ST)~\cite{fusion}.  Neither of these
% transducer languages, however, provide support for semantic actions
% attached to final states nor facilitate compositional reasoning
% essential to parser combinator frameworks. There have been efforts
% towards applying flat and tree symbolic transducers for parsing, e.g.,
% (Drex)~\cite{drex} and (Fast)~\cite{fast}, which have
% decidable equivalence checking; however, these are strictly less expressive
% than PSTs. For instance, one cannot express an effectful XPath parser
% expressing using Fast or effectful custom patterns using Drex.
% Yakker~\cite{yakker} defines data-dependent transducers with
% associated stacks to model call and return semantics of subparsers.
% Their transducers differ significantly from PSTs, which are {\it symbolic} transducers~\cite{stalgo} as
% compared to the extended {\it finite} transducer in Yakker. This
% difference enables \name\ to have an expressive refinement typed
% language for transition labels, yielding a lightweight verification
% system.  
% This expressiveness is particularly useful in modeling
%user-defined semantic actions with effectful semantics and rich shape.

%%%
\textbf{Refinement Types.} Our specification language and type system
builds over a refinement type system developed for functional
languages like Liquid Types~\cite{liquidoriginal} or Liquid
Haskell~\cite{VS+14}.  Extending Liquid Types with {\it bounds}
~\cite{liquidextended} provides some of the capabilities required to
realize data-dependent parsing actions, but it is non-trivial to
generalize such an abstraction to complex parser combinators found in
\name with multiple effects and local reasoning over states and
effects.

\textbf{Effectful Verification}  Our work is
also closely related to dependent-type-based verification approaches
for effectful programs based on monads indexed with either pre- and
post-conditions~\cite{htt,ynot} or more recently, predicate monads
capturing the weakest pre-condition semantics for effectful
computations~\cite{fstar}. As we have illustrated
earlier, the use of expressive and general dependent types, while
enabling the ability to write rich specifications (certainly richer
than what can be expressed in \name), complicates the ability to
realize a fully automated verification pathway. 

Verification using natural proofs~\cite{natural} is based on a
mechanism in which a fixed set of proof tactics are used to reason
about a set of safety properties; automation is achieved via a search
procedure over in this set.  This idea is orthogonal to our approach
where we rather utilize the restricted domain of parsers to remain in
a decidable realm.  Both our effort and these are obviously
incomplete.  Another line of work verifying effectful specifications
use characteristic formulae~\cite{characteristic}; although more
expressive than \name types, these techniques do not lend themselves
to automation.

% 
% \textbf{Local Effectful Reasoning.} Our approach for local reasoning is inspired by the 
% Region logic based reasoning~\cite{regionlogic} and {\it dynamic-frames}~\cite{dynamicframes}. 
% However, we utilize the simplified memory abstraction in our DSL to generate an automated machinery for 
% local reasoning. 

\textbf{Local Reasoning over Heaps} Our approach to controlling
aliasing is distinguished from substructural typing techniques such as
the ownership type system found in Rust~\cite{rustbelt}.  Such type
systems provide a much richer and more expressive framework to reason
about memory and effects, and can provide useful guarantees like
memory safety and data-race freedom etc.  Since our DSL is targeted at
parser combinator programs which generally operate over a
much simplified memory abstraction, we found it unnecessary to
incorporate the additional complexity such systems introduce.  The
integration of these richer systems within a refinement type framework
system of the kind provided in \name is a subject we leave for future
work.

% Our approach for local reasoning over effects and specification monad
% morphisms is an application of ideas discussed in
% \citep{param-monad-effect} and closely related to
% ~\cite{lightweightmonadicML}, which gives a general theory of
% combining monads in ML-like language, albeit not in the context of
% refinement types.  Such ideas have been used to extend
% $\mathsf{F}^{*}$ to have multiple {\it Dijkstra
%   Monads}~\cite{multimonfstar}; however, given the generality of their
% core language, the burden to define the semantics for each effect is
% left to the programmer. In comparison, \name\ utilizes domain specific
% information to remove this annotation burden entirely.

%% \textbf{Indentation Sensitive Parsers} There are a number of works
%% implementing real-world, indentation-sensitive parsers using Parser
%% Combinators~\cite{indentation1,indentation2} as well as data-dependent
%% grammars and automaton~\cite{yakker}. Unfortunately, none of these
%% works provide safety guarantees as provided by \name. We believe,
%% \name libraries can be reasonably extended using these approaches
%% allowing \name to handle grammar sizes akin to real world parsers.

\textbf{Parser Combinators} There is a long line of work implementing
Parser Combinator Libraries and DSLs in different
languages~\cite{parsecclones}. These also include those which provide
a principled way for writing stateful parsers using these
libraries~\cite{indentation1,stateful-parser}.  As we have discussed,
none of these libraries provide an automated verification machinery to
reason about safety properties of the parsers.  However, since they
allow the full expressive power of the host language, they may, in
some instances, be more expressive than \name.  For example,
\name\ does not allow arbitrary user-defined higher-order functions
and builds only on the core API discussed earlier. This may require a
more intricate definition for some parsers compared to traditional
libraries.  For example, traditional parser combinator libraries
typically define a higher-order combinator like {\sf
  many\_fold\_apply} with the following signature and use this
combinator to concisely define a {\it Kleene-star} parser:
\begin{lstlisting}[escapechar=\@,basicstyle=\small\sf,language=ML]
   many_fold_apply : f : ('b -> 'a -> 'b) -> (a : 'a) -> (g : 'a -> 'a) -> p : ('a, 's) t -> ('b, 's) t
   let many p = many_fold_apply (fun xs x -> x :: xs) [] List.rev p
\end{lstlisting}
Contrary to this, in \name, we need to define Kleene-star using
a more complex, lower-level fixpoint combinator.

%% presented in the paper, however we provide these derived combinators
%% as a library with \name, thus making this complexity insignificant for
%% the user.
 
\section{Conclusions}
\label{sec:conc}

This paper presents \name, a deeply-embedded DSL in OCaml that offers
a restricted language of composable effectful computations tailored
for parsing and semantic actions and a rich specification language
used to define safety properties over the constituent parsers
comprising a program.  \name\ is equipped with a rich refinement
type-based automated verification pathway.  We demonstrate \name's
utility by using it to implement a number of challenging parsing
applications, validating its ability to verify non-trivial correctness
properties in these benchmarks.

\bibliographystyle{ACM-Reference-Format}
\citestyle{acmauthoryear}
\bibliography{paper}
%\input{sp-appendix}

%%
%% The acknowledgments section is defined using the "acks" environment
%% (and NOT an unnumbered section). This ensures the proper
%% identification of the section in the article metadata, and the
%% consistent spelling of the heading.
%% \begin{acks}
%% To Robert, for the bagels and explaining CMYK and color spaces.
%% \end{acks}

%%
%% The next two lines define the bibliography style to be used, and
%% the bibliography file.
%% \bibliographystyle{ACM-Reference-Format}
%% \bibliography{sample-base}

%%
%% If your work has an appendix, this is the place to put it.
\appendix
\section{Supplemental Material for the Main Paper.}
\section{Evaluation Rules for base-expressions}
\begin{figure}[h]
\begin{flushleft}
\fbox{
        $(\mathcal{H}; \mathsf{\eip}) \Downarrow (\mathcal{H'}; \vp) $ 
           
} 
\bigskip
\end{flushleft}
\begin{minipage}{0.6\textwidth}
  \inference[{\sc P-deref}]{\mathcal{H} (\rp) = {\sf v}} {(\mathcal{H}; \S{deref} \ \rp) \Downarrow (\mathcal{H}; \S{\vp})} 
\end{minipage}
\hfill
\begin{minipage}{0.6\textwidth}
\inference[{\sf P}-ref]{(\mathcal{H}; \mathsf{\eip}) \Downarrow (\mathcal{H}; v) \\
			(\mathcal{H}; \B{let}\ \rp\ = \S{ref}\ \eip\ ) \Downarrow (\mathcal{H}[\rp \mapsto v]; \rp) \\
			(\mathcal{H}[\rp \mapsto v]; \eip_b) \Downarrow (\mathcal{H'}; {\sf v'})
			}{(\mathcal{H}; \B{let}\ \rp\ = \S{ref}\ \eip\ \B{in}\ \S{e}_b \ ) \Downarrow (\mathcal{H'}; {\sf v'})}  
\end{minipage}
\bigskip
\begin{minipage}{0.6\textwidth}
\inference[{\sf P-Assign}]{ (\mathcal{H}; \mathsf{\eip}) \Downarrow (\mathcal{H}; v)}
                                  {(\mathcal{H}; \rp\ := \eip) \Downarrow (\mathcal{H}[\rp \mapsto v]; v) } 
\end{minipage}\\[10pt]
\begin{minipage}{0.4\textwidth}
  \inference[{\sf P-App}]{\eip_f = \lambda (\S{x : \tau_1}).\eip\} \\
			  \\
			   (\mathcal{H}; [{\sf v}/{\sf x}]\eip \Downarrow (\mathcal{H'}; {\sf v'})}
                             {(\mathcal{H}; \eip_f\ \xp_a \ ) \Downarrow (\mathcal{H'}; {\sf v'})}
\end{minipage}
\begin{minipage}{0.4\textwidth}
  \inference[{\sf P-return}]{(\mathcal{H}; \eip) \Downarrow (\mathcal{H}; {\sf v})}
                              {(\mathcal{H}; \mathbf{return}\, \eip) \Downarrow (\mathcal{H}; {\sf v}) } 
\end{minipage}\\[10pt]
\begin{minipage}{0.4\textwidth}
  \inference[{\sf P-let}]{ (\mathcal{H}; \eip_1) \Downarrow (\mathcal{H}; {\sf v}) \\
			  (\mathcal{H}; [{\sf v }/ \xp]\eip_2) \Downarrow (\mathcal{H}; {\sf v'})}
		{(\mathcal{H};\mathbf{let}\, \xp\, =\, \eip_1\, \mathbf{in}\, \eip_2) \Downarrow (\mathcal{H}; {\sf v'})} 
\end{minipage}
\begin{minipage}{0.4\textwidth}
  \inference[{\sf P-TypApp}]{}
		{(\mathcal{H};{\Lambda \alpha. \eip}[\S{t}]) \Downarrow (\mathcal{H}; [\S{t}/ \alpha]\eip)} 
\end{minipage}\\[10pt]

\begin{minipage}{0.4\textwidth}
  \inference[{\sf P-match}]{(\mathcal{H};\eip) \Downarrow (\mathcal{H}; \S{v}) & \S{v} = \mathsf{D_i}\ 
                           \overline{\mathsf{\alpha}_{k}} \overline{\mathsf{x}_{j}} \\
                           (\mathcal{H}; \S{e_i}) \Downarrow (\mathcal{H'}; \S{v_i})}
		{(\mathcal{H};{\bf match} \ \eip \ {\bf with} \ \mathsf{D_i}\ 
                           \overline{\mathsf{\alpha}_{k}} \overline{\mathsf{x}_{j}} \rightarrow e_i) \Downarrow (\mathcal{H'}; \S{v_i})} 
\end{minipage}\\[10pt]

\begin{flushleft}
  {\bf Frame Typing Rule}\quad \fbox{\small $\Gamma \vdash$ $\eip$ : $\sigma$}

  \bigskip
  \end{flushleft}
  
\begin{minipage}{0.4\textwidth}
  \inference[{\sc T-frame}]{ \Gamma \vdash \eip : \S{PE^{\el}}\, \{ \phi \}\, \nu  :  \S{t}
  \ \{ \phi' \} & {\bf Locs (\phi_r)} \cap ({\bf Locs (\phi)} \cup {\bf Locs (\phi')}) = \emptyset}
  { \Gamma \vdash \eip : \S{PE^{\el}}\, \{ \phi_r \wedge \phi \}\, \nu  :  \S{t}\ \{ \phi_r \wedge \phi' \} }
  \end{minipage}
  
\caption{Evaluation rules for $\lambda_{sp}$ base expressions, a few trivial cases are skipped and the {\sc T-frame} rule.}
\label{fig:semantics-base}
\end{figure}

\section{Properties of Type System}

\paragraph{Soundness}
Informally the soundness theorem argues that if for some \name\ expression \eip, our type system
associates a type schema $\forall \overline{\alpha}$. $\mathsf{PE^{\el}}$ \{$\phi_1$\} $\nu$ : t \{$\phi_2$\}, then 
evaluating \eip\ in some heap $\mathcal{H}$ satisfying $\phi_1$ upon termination
produces a result of type t and a new heap $\mathcal{H'}$ satisfying $\phi_2$($\mathcal{H}$, $\nu$, $\mathcal{H'}$). 

\begin{definition}[Heap and Heap Interepretation]
A heap $\mathcal{H}$ is a concrete store mapping locations \rp\ to values.  
In order to relate it to the logical heaps ({\sf h, h'}) we use in our specification language, we define the following 
heap interinterpretation function:
  \[ [.]  = \forall h. empty (h)  \]
  \[  [\mathcal{H}, (\rp \mapsto {\sf v})]  = {\sf update}  [\mathcal{H'}]  [\mathcal{H}]  \rp \ {\sf v}  \]
  \[ [\dots (\rp \mapsto {\sf v})] = {\sf sel} [\mathcal{H}] \ \rp \ {\sf v}  \]
\end{definition}

\begin{definition}[Environment Entailment $\Gamma \models \phi$]
  Given $\Gamma$ = \dots , $\overline{\phi_i}$, the entailment of a formula $\phi$ under $\Gamma$ is defined as 
   ($\bigwedge_{i}^{} \phi_i$)  $ \implies \phi$ 
\end{definition}

Using the above definitions for the Heap interpretation and environment entailment, 
we define the following notion of {\it Well-typed} Heap analogous to the standard notion of {\it well-typed stores}.

\begin{definition}[Well typed Heap]
  A concrete heap $\mathcal{H}$ is well-typed under a $\Gamma$, written as $\Gamma \vdash$ $\mathcal{H}$
  if following two conditions hold. 
  \begin{itemize}
    \item $\forall$ \rp, (\rp\ $\mapsto$ {\sf v}) $\in$ $\mathcal{H}$ $\implies$ 
      $\Gamma$ $\models$ ({\sf dom} [$\mathcal{H}$] \ \rp $\wedge$ {\sf sel} [$\mathcal{H}$] \rp\ {\sf v})  
    \item $\forall$ \rp, (\rp\ $\mapsto$ {\sf v}) $\in$ $\mathcal{H}$, $\Gamma \vdash$ \rp\ : {\sf ref t} 
    $\implies$ $\Gamma \vdash$ {\sf v} : \{ $\nu$ : t | $\phi$ \} for some $\phi$.
  \end{itemize}
  
\end{definition}

In the theorems below, we write $\Gamma \models$ $\phi$($\mathcal{H}$) which extends the notion of semantic entailment of a formula over an abstract heap $\Gamma \models$ $\phi$ ({\sf h}) to a concrete heap using the Heap Interepretation function and the well-typed {\it Heap} ($\Gamma \vdash \mathcal{H}$).

\begin{lemma}[Preservation of Types under Substition]
  \label{lem:substitution}
  If $\Gamma$, {\sf x :} $\tau$  $\vdash \eip $ : $\forall \alpha. \sigma$, and $\Gamma \vdash \S{s} : \tau$ then 
    $\Gamma \vdash$ $[\S{s}$/$\S{x}]\eip$ : $[\S{s}$/$\S{x}]\forall \alpha. \sigma$ 
\end{lemma}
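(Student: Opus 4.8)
The plan is to prove the lemma by structural induction on the derivation of $\Gamma, \xp : \tau \vdash \eip : \forall\alpha.\sigma$, case-splitting on the final typing rule applied. I would first adopt Barendregt's variable convention so that $\xp$ is chosen distinct from every binder encountered---the term variables bound by $\lambda$, $\mu$, \B{let}, and \B{match}, the type variable $\alpha$, and crucially the logical binders ($\S{h}, \S{h'}, \nu, \S{h}_i, \S{y}$) that appear in the Hoare pre- and post-conditions. This guarantees that the refinement-aware substitution $[\S{s}/\xp]$ defined in Section~\ref{sec:spec} is capture-avoiding and commutes with the construction of the inferred refinements in each rule. Since the declared type $\tau$ of $\xp$ does not itself mention $\xp$, we have $[\S{s}/\xp]\tau = \tau$, so in the variable case (\textsc{T-var} with $\S{y}=\xp$) the hypothesis $\Gamma \vdash \S{s} : \tau$ delivers exactly the required conclusion, while the case $\S{y}\neq\xp$ follows by weakening. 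The structural cases for functions, type abstraction (\textsc{T-typFun}, whose side condition $\alpha \notin \mathrm{FV}(\Gamma)$ persists because $\mathrm{FV}(\S{s}) \subseteq \mathrm{dom}(\Gamma)$), \B{let}, constructor application, \S{return}, and the stateful primitives \textsc{T-deref}, \textsc{T-assign}, \textsc{T-ref} then go through routinely by applying the induction hypothesis to each subderivation in the appropriately extended environment and reassembling the conclusion with the same rule, using that substitution distributes over the refinement constructors exactly as the three defining equations for $[\xp_a/x]$ on types prescribe.

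Before the main induction I would discharge two supporting facts, each by its own induction. The first is that \emph{substitution preserves semantic entailment}: if $\Gamma, \xp : \tau \models \phi$ and $\Gamma \vdash \S{s} : \tau$, then $\Gamma \models [\S{s}/\xp]\phi$; this holds because, following the definition of environment entailment, the judgment reduces to validity of an EUFLIA implication, and substituting a well-typed term for a variable preserves validity in this theory. The second is a companion \emph{subtyping substitution lemma}: if $\Gamma, \xp : \tau \vdash \sigma_1 <: \sigma_2$ then $\Gamma \vdash [\S{s}/\xp]\sigma_1 <: [\S{s}/\xp]\sigma_2$, proved by induction on the subtyping derivation, with \textsc{T-Sub-Base} and \textsc{T-Sub-Comp} appealing to the entailment fact and \textsc{T-Sub-Arrow} using the contravariant and covariant induction hypotheses. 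Together these make the subsumption case (\textsc{T-sub}) of the main induction immediate.

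The main obstacle lies in the dependent and effectful rules whose conclusions already contain a substitution or assemble composite assertions from the pre/post-conditions of subderivations: \textsc{T-App}, \textsc{T-match}, and especially \textsc{T-p-bind} and \textsc{T-p-choice}. For \textsc{T-App} the conclusion type is $[\xp_a/x]\,\S{PE}^{\el}\{\phi\}\,\nu:\S{t}\,\{\phi'\}$, so after applying the induction hypothesis I must show that the outer substitution $[\S{s}/\xp]$ commutes with the inner dependent substitution $[\xp_a/x]$---a standard substitution-commutation identity that holds precisely because the variable convention keeps $\xp$, $x$, and $\xp_a$ distinct and avoids capture. The genuinely delicate reasoning is in \textsc{T-p-bind}, whose inferred pre- and post-conditions are built from $\phi_1, \phi_{1'}, \phi_2, \phi_{2'}$ together with the freshly introduced intermediate heap $\S{h}_i$ and the case split on whether the first parser yields \S{Err}. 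I would establish a small commutation claim: substituting into the constituent formulas and \emph{then} assembling the composite condition yields the same formula as assembling first and substituting, relying on freshness of $\S{h}_i, \S{h}, \S{h'}, \nu'$, and $\S{y}$ from $\xp$ and from $\mathrm{FV}(\S{s})$. With this claim, the induction hypothesis applied to the subderivations for $p$ and $\eip$ reconstructs the \textsc{T-p-bind} conclusion for $[\S{s}/\xp](p\ \textnormal{>>=}\ \eip)$. The \textsc{T-match} and \textsc{T-p-choice} cases are analogous, using that substitution distributes over the conjunctions and disjunctions forming the unified conditions, and \textsc{T-p-fix} follows from the induction hypothesis on the body together with the side condition $\xp \notin \mathrm{FV}(\phi,\phi')$, which substitution cannot disturb. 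I expect the bookkeeping for these composite-condition cases---rather than any deep semantic argument---to be where essentially all of the real work resides.
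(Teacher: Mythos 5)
Your proposal takes essentially the same approach as the paper: induction on the typing derivation, using the refinement-aware definition of substitution from Section~\ref{sec:spec}. In fact the paper's own proof is only a one-line appeal to that induction, so your elaboration --- the variable convention, the auxiliary entailment- and subtyping-substitution lemmas, and the commutation argument for the composite conditions in \textsc{T-App} and \textsc{T-p-bind} --- supplies exactly the details the paper leaves implicit, and is correct.
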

\begin{proof}
  Following is the definition of substitution for refined type
  given in Section 5.1 in the main paper.
  \[
\begin{array}{rl}
  [x_a/x]\{ \nu : \S{t} | \phi \} &= \{ \nu : \S{t} | [x_a/x]\phi \}\\
  [x_a/x](y : \tau) \rightarrow \tau' &= (y : [x_a/x]\tau) \rightarrow [x_a/x]\tau', y \neq x\\
  [x_a/x]\mathsf{PE}^{\el} \{ \phi_1 \} \{ \nu : \S{t} \} \{ \phi_2 \} &=
      \mathsf{PE}^{\el} \{ [x_a/x]\phi_1 \} \{ \nu : \S{t} \} \{ [x_a/x]\phi_2 \}
\end{array}
\]
The proof for the lemma is by induction on the Typing derivations using the above definition of substitution.
 
\end{proof}

\begin{lemma}[Cannonical Form for Values]
  \label{lem:cannonical}
  \begin{itemize}
    \item If \S{v} is a value of type bool then \S{v} is either {\sf true} or {\sf false}
    \item If \S{vs} is a value of type exc then \S{v} is {\sf Err}
    \item If  \S{v} is a value of type unit then \S{v} is ().
    \item If  \S{v} is a value of type (x : {$\tau$}) -> $\tau$ then \S{v} = $\lambda$ (x:$\tau$). \eip  
    %\item If \S{v} is a value of type (x : {$\tau$}) -> $\tau$ then \S{v} = $\lambda$ (x:$\tau$). \eip  
  \end{itemize}
\end{lemma}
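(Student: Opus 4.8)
The plan is to prove the lemma by induction on the derivation of $\Gamma \vdash \S{v} : \tau$, instantiating $\tau$ to each of the four types in turn. The key observation is that \S{v} ranges over the value grammar
\[ \vp ::= \cp \mid \lambda(\xp:\tau).\eip \mid \Lambda(\alpha).\eip \mid \mathsf{D_i}\,\overline{t_k}\,\overline{\vp_j}, \]
so for each target type it suffices to determine which of these four syntactic forms can be assigned that type and to rule out the remaining forms by inversion on the typing rules of Figures~\ref{fig:typing-base} and~\ref{fig:typing-parserss}.

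First I would dispatch the value-introduction rules directly. A constant is typed by the (standard, figure-elided) constant rule with the base type of the constant, so the booleans are exactly \S{true} and \S{false}, the only value of type \S{exc} is \S{Err}, and the only value of type \S{unit} is \S{()}. A $\lambda$-abstraction is introduced only by {\sc T-fun} at a function type $\tau_1 \rightarrow \tau_2$; a type abstraction only by {\sc T-typFun} at a scheme $\forall\alpha.\sigma$; and a constructor application $\mathsf{D_i}\,\overline{t_k}\,\overline{\vp_j}$ only by {\sc T-capp} at a user-defined datatype $\mathsf{TN}$. As these four top-level type shapes are pairwise disjoint, each target type of the lemma admits exactly one compatible value form, which yields its stated shape.

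The main obstacle is the subsumption rule {\sc T-sub}, since the last step of a value's derivation may instead be $\Gamma \vdash \S{v} : \sigma_1$ with $\Gamma \vdash \sigma_1 <: \tau$. To handle this I would first establish that subtyping is \emph{shape-preserving}: inspecting the subtyping rules, {\sc T-Sub-Base} relates a refinement $\{\nu:\S{t}\mid\phi_1\}$ only to a refinement over the \emph{same} base type \S{t}, {\sc T-Sub-Arrow} relates function types only to function types, and {\sc T-Sub-Comp} relates computation types only to computation types (while {\sc T-Sub-Schema} and {\sc T-Sub-TVar} act only on schemes and type variables). Hence a subtype never changes the top-level type constructor, nor the underlying base type for refinements. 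It follows that if $\tau$ is (a refinement over) \S{bool}, \S{exc}, or \S{unit}, or a function type, then so is $\sigma_1$, so I can apply the induction hypothesis to the premise $\Gamma \vdash \S{v} : \sigma_1$ and obtain the required syntactic form, closing the {\sc T-sub} case and completing the induction.
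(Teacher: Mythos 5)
Your proposal is correct and follows essentially the same route as the paper, whose entire proof is the one-line remark that the result follows ``by case analysis of grammar rules in $\lambda_{sp}$''; you simply carry out that case analysis explicitly as an induction on the typing derivation. The one substantive addition is your treatment of {\sc T-sub} via a shape-preservation observation about the subtyping rules, which the paper's terse proof silently elides but which is needed for the argument to go through.
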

\begin{proof}
  Proof is by case analysis of grammar rules in $\lambda_{sp}$ definition.
\end{proof}
\begin{definition}[{\sf consistent} $\Gamma$ $\Gamma$']
  $\forall$ x $\in$ dom ($\Gamma$), such that if x $\in$ dom ($\Gamma'$) then $\Gamma \vdash$ x : $\sigma$ $\implies$ $\Gamma' \vdash$ x : $\sigma$ $\wedge$ 
  $\forall \phi$. $\Gamma \vDash \phi$ $\implies$ $\Gamma' \vDash \phi$.
\end{definition}

\begin{lemma}[$\Gamma$ Weakening]
  \label{lem:weakening}
  If $\Gamma \vdash \eip : \forall \alpha. \sigma$ and $\exists \Gamma'$ such that $\Gamma \subseteq \Gamma'$ 
  and ({\sf consistent} $\Gamma$ $\Gamma'$) then $\Gamma' \vdash \eip : \forall \alpha. \sigma$
\end{lemma}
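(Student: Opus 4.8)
The plan is to proceed by structural induction on the typing derivation of $\Gamma \vdash \eip : \forall\alpha.\sigma$, carried out simultaneously with an auxiliary induction establishing that the subtyping judgment is preserved as well, i.e. that $\Gamma \vdash \sigma_1 <: \sigma_2$ together with ({\sf consistent} $\Gamma$ $\Gamma'$) and $\Gamma \subseteq \Gamma'$ imply $\Gamma' \vdash \sigma_1 <: \sigma_2$. The two judgments must be treated together, since rule {\sc T-sub} appeals to subtyping, so the typing induction depends on subtyping preservation, which in turn bottoms out at entailment obligations transferred through consistency.

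Before the main induction I would record one invariant-maintenance fact: consistency is stable under \emph{identical context extension}. Concretely, if ({\sf consistent} $\Gamma$ $\Gamma'$) holds then so do ({\sf consistent} $(\Gamma, \xp : \tau)$ $(\Gamma', \xp : \tau)$) and ({\sf consistent} $(\Gamma, \phi)$ $(\Gamma', \phi)$). The type-preservation clause is immediate: the freshly added binding yields the same type on both sides, and for pre-existing variables it follows from the hypothesis. The entailment clause is even easier, because $\Gamma \subseteq \Gamma'$ already makes the propositions carried by $\Gamma'$ a superset of those carried by $\Gamma$, so by Definition of $\Gamma \models \phi$ as $(\bigwedge_i \phi_i) \Rightarrow \phi$, adding the same proposition to both antecedents preserves entailment uniformly. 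This fact is precisely what lets me push the induction hypothesis through every context-growing rule.

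With these in hand, subtyping preservation is a short induction on the subtyping derivation: the base case {\sc T-Sub-Base} and the consequence premises of {\sc T-Sub-Comp} reduce exactly to obligations $\Gamma \models \phi$, which transfer to $\Gamma'$ by the entailment clause of consistency (for {\sc T-Sub-Comp} the right-hand entailment is taken under $\Gamma, \phi_2$, handled by the extension fact), while {\sc T-Sub-Arrow}, {\sc T-Sub-Schema}, and {\sc T-Sub-TVar} are purely structural and discharged by the inductive hypotheses. The main induction then walks the rules of Figures~\ref{fig:typing-base} and~\ref{fig:typing-parserss}. The only genuinely non-structural case is {\sc T-var}: here $\Gamma(\xp) = \sigma$, and since $\Gamma \subseteq \Gamma'$ gives $\xp \in \mathrm{dom}(\Gamma')$, the type-preservation clause of consistency delivers $\Gamma' \vdash \xp : \sigma$ directly. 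Every other rule is compositional: I apply the induction hypothesis to each typing premise, invoke subtyping preservation on any subtyping premise (the {\sc T-sub} case), transfer entailment premises via consistency, and reassemble the identical conclusion, using the extension fact for the rules that enlarge the environment ({\sc T-fun}, {\sc T-let}, {\sc T-ref}, {\sc T-match}, {\sc T-p-bind}, {\sc T-p-fix}). Since the refinements and effect labels in each conclusion are fixed syntactically by the rule, no recomputation is needed.

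I expect the main obstacle to be bookkeeping rather than anything conceptual: correctly threading the consistency invariant through the context-extending rules, and in particular verifying its \emph{entailment} clause under the extensions introduced by {\sc T-ref}, {\sc T-p-bind}, and the consequence side-condition of {\sc T-Sub-Comp}, where the environment is augmented with fresh heap variables and propositions relating intermediate heaps. Establishing the identical-extension fact once, cleanly, is what reduces the remainder of the argument to routine case analysis.
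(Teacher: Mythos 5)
Your proof is correct and is, in substance, the argument the paper intends: the paper's entire proof of this lemma is the single sentence ``Follows from the definition of {\sf consistent} $\Gamma$ $\Gamma'$'', and your structural induction --- with the auxiliary subtyping-preservation induction and the identical-extension stability of consistency --- is exactly the detail that one-liner elides. The two clauses of consistency you lean on (type agreement on the common domain for {\sc T-var}, and entailment transfer for the semantic premises of {\sc T-Sub-Base} and {\sc T-Sub-Comp}) are precisely what the definition was engineered to provide, so nothing in your plan conflicts with the paper's development.
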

\begin{proof}
  Follows from the definition of {\sf consistent} $\Gamma$ $\Gamma'$
\end{proof}

\begin{lemma}[Inversion of the Typing Relations]
  \label{lem:inversion}
  Given a Typing judgement of the form- 
  \[\inference[]{\Gamma \vdash \eip_1 : \tau_1 & \dots \Gamma_n \vdash \eip_n : \tau_n}
    {\Gamma \vdash \eip : \tau}. 
  \] 
  Given $\Gamma \vdash \eip : \tau$, the following holds: ($\Gamma \vdash$ $\eip : \tau$ $\Leftrightarrow$ $\Gamma_i \vdash  \eip_1 : \tau_1$, \dots $\Gamma_n \vdash \eip_n : \tau_n$).
\end{lemma}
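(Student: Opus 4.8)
The plan is to prove the lemma by induction on the structure of the typing derivation of $\Gamma \vdash \eip : \tau$, equivalently by case analysis on the last rule applied. The biconditional splits into two directions. The right-to-left direction (the premises $\Gamma_i \vdash \eip_i : \tau_i$ imply $\Gamma \vdash \eip : \tau$) is immediate: it is precisely an instance of the rule whose form is given in the statement, so nothing remains beyond re-applying that rule. The content of the lemma therefore lies in the left-to-right (inversion) direction, where from a derivation of $\Gamma \vdash \eip : \tau$ we must recover derivations of each premise.

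First I would observe that for every expression constructor -- a variable, an abstraction, an application, a $\mathbf{let}$, a $\mathbf{match}$, a $\mathbf{return}$, and each parser primitive $\S{eps}$, $\bot$, $\S{char}\,\eip$, $p \,>>=\, \eip$, $p <\mid> p$, and $\mu\,(\xp:\tau).p$ -- there is exactly one \emph{syntax-directed} rule in Figures~\ref{fig:typing-base} and~\ref{fig:typing-parserss} whose conclusion matches that head constructor. Consequently, if every derivation ended in a syntax-directed rule, inversion would be immediate: the head constructor of $\eip$ would uniquely determine the last rule applied, and its premises are exactly the $\Gamma_i \vdash \eip_i : \tau_i$ we seek. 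The argument would then close by a direct appeal to the induction hypothesis on the immediate sub-derivations.

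The main obstacle is that the system contains two rules that are \emph{not} syntax-directed: the subsumption rule {\sc T-sub} and the framing rule {\sc T-frame}. A derivation of $\Gamma \vdash \eip : \tau$ may therefore end in an arbitrary interleaving of these rules before reaching the syntax-directed rule fixed by $\eip$'s head constructor. To handle this I would first establish the auxiliary facts that the subtyping relation $<:$ is reflexive and transitive -- reflexivity is immediate from {\sc T-Sub-TVar} and from {\sc T-Sub-Base} with $\phi_1 = \phi_2$, while transitivity follows by composing the entailments $\Gamma \vDash \phi_1 \Rightarrow \phi_2$ used in {\sc T-Sub-Base}/{\sc T-Sub-Comp} together with transitivity of the effect ordering $\leq$ on the join-semilattice -- and that {\sc T-frame} applications compose and commute past {\sc T-sub} on a computation type $\S{PE}^{\el}\{\phi\}\,\nu:\S{t}\,\{\phi'\}$, the disjoint-location side condition being preserved because subtyping introduces no new locations. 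These facts let me collapse any trailing block of {\sc T-sub} and {\sc T-frame} steps into a single {\sc T-sub} step, so that every derivation takes the canonical shape of a syntax-directed rule followed by at most one subsumption. I would then peel off that final subsumption and apply the induction hypothesis to the preceding sub-derivation, absorbing the subtyping adjustment into the $\tau_i$. I expect the bookkeeping for {\sc T-frame} -- in particular showing that it commutes with subsumption on computation types while preserving its disjointness side condition -- to be the most delicate part of the argument; the purely syntactic cases for the remaining base and parser constructors are then routine.
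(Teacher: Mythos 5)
Your proposal is correct in outline but does substantially more work than the paper, which disposes of this lemma in a single sentence: ``The proof immediately follows from the definition of typing rules.'' In effect the paper treats the schematic biconditional as definitional, implicitly reading the system as syntax-directed so that the last rule in any derivation of $\Gamma \vdash \eip : \tau$ is determined by the head constructor of $\eip$ and inversion is immediate. You correctly observe that this is not literally true in the presence of {\sc T-sub} and {\sc T-frame}, and you supply the standard remedy: establish reflexivity and transitivity of subtyping, show that framing composes and commutes past subsumption, normalize every derivation into a syntax-directed step followed by at most one subsumption, and only then invert. That is the argument one would need for a fully rigorous proof, and it is a genuinely different (and more careful) route than the paper's; what the paper's reading buys is brevity, at the cost of a statement that, taken literally, can fail once non-syntax-directed rules intervene --- the recovered premises may only match the displayed $\tau_i$ up to subtyping, exactly the ``absorb the subtyping adjustment into the $\tau_i$'' caveat you flag. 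The one caution on your side is that the commutation of {\sc T-frame} past {\sc T-Sub-Comp} is not automatic: subsumption weakens the precondition and strengthens the postcondition in ways that interact with the framed conjunct $\phi_r$ and with the disjointness side condition, so this step needs to be carried out rather than asserted. As it stands your sketch is a strengthening of, not a deviation from, what the paper actually proves.
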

\begin{proof}
  The proof immediately follows from the definition of typing rules.
\end{proof}
\bigskip

\begin{lemma}[Uniqueness]
\label{lem:uniqueness}
  Forall all well-typed term $\eip$, $\Gamma \vdash \eip : \sigma$ and well-typed heaps $\mathcal{H}$ and $\mathcal{H'}$, $\Gamma \vdash \mathcal{H}$ and $\Gamma \vdash \mathcal{H'}$, such that $(\mathcal{H}; \mathsf{\eip}) \Downarrow (\mathcal{H'}; v)$, for all $\lp$ we have:
  \begin{itemize}
      \item $\lp$ $\mapsto$ $\eip \in \mathcal{H}$ $\implies$ $\lp$ $\mapsto$ $v \in \mathcal{H'}$
      \item $\lp \notin \S{dom} (\mathcal{H})$ $\implies$ $\exists \eip'. \lp$ $\mapsto$ $\eip' \in \mathcal{H'}$
  \end{itemize}
  where $\lp$ $\mapsto$ $\eip$ denotes that $\lp$ is a {\it unique reference} to $\eip$.
  \\
  i.e. $\forall$ $\mathcal{H}$, any well typed \name\ expression in the initial heap pointed to by a unique reference $\lp$, upon evaluation to a value $v$ and an output heap $\mathcal{H'}$ stays uniquely pointing in the final heap, and any new reference created in the final heap uniquely points to a \name expression.
\end{lemma}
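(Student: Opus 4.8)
The plan is to proceed by induction on the derivation of the big-step evaluation judgment $(\mathcal{H}; \eip) \Downarrow (\mathcal{H'}; \vp)$, using the rules in Figures~\ref{fig:semantics-parser} and~\ref{fig:semantics-base}, and to show that each rule preserves the invariant that every location in the heap is referenced by a unique name. The essential ingredient is the stratification restriction on $\lambda_{sp}$: locations $\rp \in \mathit{RefVars}$ are syntactically distinct from values $\vp \in \mathit{Value}$, can only be introduced by the $\B{let}\ \rp = \S{ref}\ \eip$ form, and can neither be passed as function arguments nor returned as results. Consequently, no evaluation step can substitute a location for a variable, and hence no step can create a fresh name aliasing an existing cell. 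This observation is what makes both conclusions go through.

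First I would dispatch the base cases---rules {\sc P-eps}, {\sc P}-$\bot$, {\sc P-deref}, {\sc P-return}, and the two character rules---where the heap is either unchanged or only the input location $\S{inp}$ is updated in place; in all of these the set of locations and their unique-reference status is identical in $\mathcal{H}$ and $\mathcal{H'}$, so both bullets hold trivially. The two interesting local rules are {\sc P}-ref and {\sc P-Assign}. For {\sc P}-ref, the operational semantics allocates $\rp$ fresh, so $\rp \notin \S{dom}(\mathcal{H})$ and $\rp \mapsto v$ is by construction the unique binding for $\rp$; applying the induction hypothesis to the evaluation of the body $\eip_b$ in $\mathcal{H}[\rp \mapsto v]$ then yields the claim and, in particular, establishes the second bullet (newly allocated locations point uniquely). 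For {\sc P-Assign}, the location $\rp$ already lies in $\S{dom}(\mathcal{H})$ and the rule only rewrites its contents, leaving every name---$\rp$ included---uniquely bound.

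The remaining cases are the compositional rules ({\sc P-bind-success}, {\sc P-bind-err}, {\sc P-let}, {\sc P-App}, {\sc P-match}, {\sc P-fix}, and the two choice rules), which thread evaluation through one or more intermediate heaps. Here I would apply the induction hypothesis to each sub-derivation and compose the results transitively: if uniqueness is preserved from $\mathcal{H}$ to the intermediate heap and from the intermediate heap to $\mathcal{H'}$, then it is preserved from $\mathcal{H}$ to $\mathcal{H'}$. To invoke the induction hypothesis on a sub-derivation one needs the intermediate heaps to be well-typed; this follows from the preservation component of the soundness argument (Theorem~\ref{thm:soundness}), and the relevant structural facts about the sub-derivations are extracted with the inversion lemma (Lemma~\ref{lem:inversion}).

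I expect the main obstacle to be the substitution steps inside these compositional rules---e.g. $[\vp_1/x]\eip'$ in {\sc P-bind-success} and $[\vp/x]\eip$ in {\sc P-App}---where one must argue that substituting a value for a term variable cannot duplicate or introduce a reference and thereby break uniqueness. This is precisely where the stratification restriction is indispensable: because the substituted object is a value rather than a location, and because the typing rule {\sc T-App} already forces arguments to base type, no location can leak into the term through substitution. Once this is made explicit, the rest of the argument is routine bookkeeping over intermediate heaps.
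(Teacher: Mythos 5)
The paper never actually proves this lemma: in the supplemental material it is stated bare, with no accompanying \texttt{proof} environment, and the text moves directly on to the soundness lemma for pure terms. So there is no authorial argument to compare yours against; I can only assess your proposal on its own terms.

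On those terms your sketch is essentially the right argument, and it identifies the one idea that matters: the stratification of $\mathit{RefVars}$ away from $\mathit{Value}$, so that locations are introduced only by $\B{let}\ \rp = \S{ref}\ \eip$, are never substituted for term variables, and hence can never acquire a second name. Induction on the evaluation derivation with that observation, plus transitive composition through intermediate heaps in the {\sc P-bind}/{\sc P-let}/{\sc P-App} cases, is exactly what a proof here should look like, and your appeal to preservation (Theorem~\ref{thm:soundness}) to keep intermediate heaps well-typed is the right way to make the IH applicable. Two small points you should tighten. First, freshness of $\rp$ in the {\sc P}-ref case is not guaranteed by the operational rule itself (it just writes $\mathcal{H}[\rp \mapsto v]$); it comes from the well-typedness hypothesis via the $\neg\,\S{dom(h,\rp)}$ precondition of {\sc T-ref}, so say so explicitly rather than attributing it to the semantics. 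Second, your substitution discussion covers $[\vp/x]$ for values, but {\sc P-fix} substitutes a parser term $\mu\,\xp{:}\sigma.p$ for $\xp$; the same stratification argument applies (the substituted object is still not a location), but the case should be mentioned. Finally, note that the second bullet of the lemma as literally written ($\lp \notin \S{dom}(\mathcal{H}) \implies \lp \in \S{dom}(\mathcal{H'})$ for \emph{all} $\lp$) is too strong to be provable; you are implicitly proving the intended reading, namely that every location \emph{allocated during} the evaluation is uniquely bound in $\mathcal{H'}$, and it would be worth flagging that restriction of the claim explicitly.
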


To prove soundness of \name typing, we first prove a soundness lemma for pure expressions (i.e. expressions with non computation type).
\begin{lemma}[Soundness Pure-terms]
 If $\Gamma$ $\vdash$ \eip : \{ $\nu$ : t | $\phi$ \} then:
 \begin{itemize}
   \item \textnormal{Either \eip\ is a value with $\Gamma \models$ $\phi$ ( \eip )} 
   \item \textnormal{OR Given there exists a $\S{v}$ and $\mathcal{H'}$, such that ($\mathcal{H}$; \eip) $\Downarrow$ ($\mathcal{H}$; \S{v}) and $\Gamma \vdash \mathcal{H}$ then $\Gamma \vdash$ \S{v} : t and $\Gamma \models$ $\phi$ ($\S{v}$) }
\end{itemize}   
\end{lemma}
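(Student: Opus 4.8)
The plan is to proceed by induction on the derivation of $\Gamma \vdash \eip : \{\nu : \S{t} \mid \phi\}$, case-splitting on the last typing rule used. The first observation, justified by the Inversion lemma, is that only a small set of rules can conclude a \emph{base refinement} type (as opposed to an arrow or a computation type $\S{PE}^{\el}$): namely {\sc T-var}, {\sc T-capp}, {\sc T-typApp}, and {\sc T-sub}. In particular, the effectful and applicative forms ({\sc T-App}, {\sc T-deref}, {\sc T-assign}, {\sc T-match}, {\sc T-return}) all carry computation types and therefore fall outside this lemma. This confines the analysis to values, variables, constructor applications, type applications, and their subsumed forms.

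For the \textbf{value cases} --- constants and constructor applications $\mathsf{D}_i\,\overline{t_k}\,\overline{\vp_j}$ --- the expression \eip\ is already a value, placing us in the first disjunct. Here I would appeal to the Canonical Forms lemma to fix the shape of the value and read the refinement $\phi$ directly off the concluding rule: for a constant the rule pins $\nu$ to that constant, and for {\sc T-capp} the refinement is assembled from the constructor's signature together with the induction hypotheses applied to the argument values $\vp_j$. In each case $\Gamma \models \phi(\eip)$ then follows from the definition of environment entailment. The {\sc T-var} case is analogous: since $\Gamma(\xp) = \{\nu : \S{t}\mid\phi\}$, the refinement is already recorded in the context, so $\Gamma \models \phi(\xp)$ holds essentially by definition of $\Gamma \models$.

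The two genuinely \emph{inductive} cases are {\sc T-sub} and {\sc T-typApp}. For {\sc T-sub}, inverting through {\sc T-Sub-Base} yields a subderivation $\Gamma \vdash \eip : \{\nu : \S{t}\mid\phi_1\}$ together with the entailment obligation $\Gamma \models \phi_1 \Rightarrow \phi$; applying the induction hypothesis gives either a value with $\Gamma \models \phi_1(\eip)$ or an evaluation $(\mathcal{H};\eip)\Downarrow(\mathcal{H};\vp)$ with $\Gamma \models \phi_1(\vp)$, and composing with the implication discharges the goal for $\phi$. For {\sc T-typApp}, the expression steps by {\sc P-TypApp} to a type-substituted term while leaving $\mathcal{H}$ unchanged; I would retype that reduct using the Preservation-of-Types-under-Substitution lemma and then close the case with the induction hypothesis. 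Throughout the second disjunct, the claim that the heap is unchanged is read directly off the relevant evaluation rules, none of which modify $\mathcal{H}$, and the passage from the bound variable $\nu$ to the concrete result $\vp$ is again mediated by the substitution lemma, with the well-typedness hypothesis $\Gamma \vdash \mathcal{H}$ supplying the interpretation of any heap-dependent qualifiers occurring in $\phi$.

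I expect the main obstacle to be the careful bookkeeping in the {\sc T-sub} and substitution steps: establishing $\Gamma \models \phi(\vp)$ from $\Gamma \models \phi_1(\vp)$ requires that the entailment obligations discharged by subtyping compose correctly with the substitution $[\vp/\nu]$, and one must be precise that replacing the refinement variable $\nu$ by the concrete result value is sound --- exactly the content of the Preservation-of-Types-under-Substitution lemma. A secondary, more routine concern is verifying that the rule enumeration above is genuinely exhaustive, which rests on the Inversion lemma to ensure that no rule concluding a base refinement type has been overlooked.
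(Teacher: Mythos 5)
Your overall strategy matches the paper's: induction on the typing derivation, restricting attention to the rules that can conclude a base refinement type, and discharging each case with the Inversion, Canonical Forms, and Preservation-under-Substitution lemmas. Your treatment of {\sc T-var}, {\sc T-capp}, {\sc T-typApp}, and the constant axioms tracks the paper's proof closely, and your inclusion of {\sc T-sub} (which the paper's own case analysis silently omits) is a reasonable addition rather than a deviation.

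There is, however, a genuine gap in your case enumeration: you claim the analysis "confines to values, variables, constructor applications, type applications, and their subsumed forms," but the rule {\sc T-let} also concludes an arbitrary $\sigma'$, which may be a base refinement type $\{\nu : \S{t} \mid \phi\}$, and a $\mathbf{let}$-expression is not a value. The paper's proof has an explicit {\sc T-let} case, handled by applying the induction hypothesis to the bound value $\vp$, using the Substitution Lemma to justify typing $[\vp/\xp]\eip_2$, applying the induction hypothesis again to the body, and then invoking the evaluation rule {\sc P-let}. Your proof as written would simply fail to cover well-typed terms of the form $\mathbf{let}\ \xp = \vp\ \mathbf{in}\ \eip_2$. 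The repair is routine and uses exactly the machinery you already cite, but the exhaustiveness claim you lean on --- and flag only as a "secondary, more routine concern" --- is precisely where the omission lives, so you should re-derive the list of candidate rules from the grammar of conclusions rather than from an informal reading of which syntactic forms "look pure."
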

\begin{proof}
  The proof proceeds by induction on the derivation of Typing rules $\Gamma \vdash$ \eip\ : \{ $\nu$ : t | $\phi$ \}:
  \begin{itemize}
    \item The case for constants like T-True, T-false, T-zero, etc. is trivially true as these rules are axioms.
    \item Case {\sc T-capp } : Given $\Gamma \vdash D_i\ \overline{\S{t}_k} \overline{\S{v}_j} : [\overline{\S{t / \alpha }}] [\overline{\S{v_j / x_j}}] \tau$
      \begin{enumerate}
        \item \eip is a value, thus we proove (PG1).
        \item Using {\it Substition Lemma}, we get $\Gamma \models$ $\phi$ ($D_i\ \overline{\S{t}_k} \overline{\S{v}_j} : [\overline{\S{t / \alpha }}] [\overline{\S{v_j / x_j}}]$) 
      \end{enumerate}
    \item Case {\sc T-fun} is not the form (\eip : \{ $\nu$ : t | $\phi$ \}) thus the requirement for the theorem is vacuosly satisfied.
    \item Case {\sc T-typApp} : ${\Gamma \vdash {\Lambda \alpha. \eip}[\S{t}] : [\S{t}/\alpha]\{ \nu : t | \phi \}}$
          \begin{enumerate}
            \item Using Inversion Lemma ${\Gamma \vdash \Lambda \alpha. \eip : \forall \alpha. \{ \nu : t | \phi \}}$
            \item Using {\sc T-fun} we have $\Gamma, \alpha \vdash \eip : \{ \nu : t | \phi \}$.
            \item Using Using IH, we must have $(\mathcal{H};{\eip} \Downarrow (\mathcal{H}; \S{v})$ and $\Gamma, \alpha \vdash \S{v} : t$ and $\Gamma, \alpha \models \phi (\S{v})$
            \item Applying Substition Lemma thus we have $\Gamma, \S{t}, \models [\S{t}/\alpha]\phi (\S{v})$ \dots (Proof-body) 
            \item Using the {\sc P-typApp} we have $(\mathcal{H};{\Lambda \alpha. \eip}[\S{t}]) \Downarrow (\mathcal{H}; [\S{t}/ \alpha]\eip)$ and using the above (Proof-body)
            we have $\Gamma \S{t}, \models [\S{t}/\alpha]\phi (\S{v})$ giving us PG2
          \end{enumerate}
    \item Case {\sc T-let} : $\Gamma \vdash \mathbf{let}\, \xp\, =\, \eip_1\, \mathbf{in}\, \eip_2\, :\, \sigma'$
          \begin{enumerate}
            \item By IL $\Gamma \vdash \eip_1\ : \forall\alpha.\sigma$ and $\Gamma, \xp\ : \forall\alpha.\sigma \vdash \eip_2 : \sigma'$.
            \item Using IH $\exists$ a transition $(\mathcal{H}; \eip_1) \Downarrow (\mathcal{H}; {\sf v})$ and $\Gamma \vdash {\sf v}$ : $\forall\alpha.\sigma$
            \item Since the basetype for {\sf v} is same as {\sf x}, thus the substitution operation $[{\sf v }/ \xp]\eip_2$ is valid.
            \item Again using IH on the secodn judgement in IL above we have $(\mathcal{H}; [{\sf v }/ \xp]\eip_2) \Downarrow (\mathcal{H}; {\sf v'})$ and $\Gamma \vdash {\sf v'} : \sigma'$. \dots (IH2)
            \item Using above arguments, the preconditions for {\sc P-let} are valid, thus {\sc P-let} is applicable, giving us i.e. $(\mathcal{H};\mathbf{let}\, \xp\, =\, \eip_1\, \mathbf{in}\, \eip_2) \Downarrow (\mathcal{H}; {\sf v'})$
            \item Finally IH2 directly gives us PG2, i.e $\Gamma \vdash {\sf v'} : \sigma'$
          \end{enumerate}
  %   \item Case {\sc T-fun}
    \item Case {\sc T-var} : $\Gamma \vdash \S{x} : \sigma$.
          \begin{enumerate}
            \item Using IL $\Gamma (\S{x}) = \sigma$
            \item PG1 and PG2 directly hold using the IH for {\sf x} already in $\Gamma$.
          \end{enumerate}
        
   \end{itemize}
  
\end{proof}

\begin{theorem}[Soundness \name]
  \label{thm:soundness}
  Given a specification $\sigma$ = $\forall \overline{\alpha}$. $\mathsf{PE^{\el}}$
  $\{\phi_1\}$ $\nu$ : {\sf t} $\{\phi_2\}$ and a \name\ expression \eip, such
  that under some $\Gamma$, $\Gamma \vdash$ \eip : $\sigma$, then if
  there  exists some well-typed heap $\mathcal{H}$ such that $\Gamma \models \phi_1 (\mathcal{H})$
  then:
  \begin{itemize}
    \item \textnormal{Either \eip \ is a value, and:} 
    \begin{enumerate}
      \item $\Gamma , \phi_1$ $\models$ $\phi_2$ ($\mathcal{H},\ \eip$,\ $\mathcal{H}$)
    \end{enumerate}  
    \item \textnormal{OR Given there exists a $\mathcal{H'}$ and \S{v} such that $\Gamma \vdash \mathcal{H}$ ($\mathcal{H}$; \eip) $\Downarrow$ ($\mathcal{H'}$; \S{v}), then\\
    $\exists$ $\Gamma'$, $\Gamma \subseteq \Gamma'$ and ({\sf consistent $\Gamma$ $\Gamma'$}), such that:} 
    \begin{enumerate}
      \item $\Gamma'$ $\vdash$ $\S{v} : \S{t}$. 
      \item $\Gamma', \phi_1$ ($\mathcal{H}$) $\models$$\phi_2$ ($\mathcal{H},\ \S{v}$,\ $\mathcal{H'}$) 
       \end{enumerate}
  \end{itemize}
  \end{theorem}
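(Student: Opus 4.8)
The plan is to read the statement as a partial-correctness soundness result for the Hoare-style refinement type system with effects, and to treat its two disjuncts separately. Disjunct (1) concerns the case where $\eip$ is already a value and no reduction is required; here I would appeal to the Pure-terms soundness lemma, observing that the only leaf computation rules fixing such a shape (\textsc{T-return}, \textsc{T-p-eps}, \textsc{T-p-bot}) force a postcondition of the reflexive form $\mathsf{h'}=\mathsf{h}\wedge\phi$ (or $\nu=\mathsf{Err}$), so that $\phi_2(\mathcal{H},\eip,\mathcal{H})$ collapses to the base refinement $\phi$ on $\eip$, which the Pure-terms lemma already establishes. Disjunct (2) is the substantive part. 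Rather than inducting on the typing derivation alone, I would induct on the \emph{height of the big-step derivation} $(\mathcal{H};\eip)\Downarrow(\mathcal{H'};\vp)$, using Inversion (Lemma~\ref{lem:inversion}) on $\Gamma\vdash\eip:\sigma$ at each node to line the applicable typing rule up with the evaluation rule actually taken, and maintaining throughout the invariant $\Gamma\vdash\mathcal{H}$. The supporting machinery is: Preservation under Substitution (Lemma~\ref{lem:substitution}) for the application, let, constructor, and fixpoint cases; Canonical Forms (Lemma~\ref{lem:cannonical}) to expose the shape of values before reducing; $\Gamma$-Weakening with consistency (Lemma~\ref{lem:weakening}) to lift judgments into the enlarged environment $\Gamma'$; and Uniqueness (Lemma~\ref{lem:uniqueness}) to re-establish $\Gamma'\vdash\mathcal{H'}$ and the non-aliasing invariant when the environment grows.

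For the leaf parser rules I would discharge the postcondition directly against each evaluation rule, taking $\Gamma'=\Gamma$ since no locations are allocated. The \textsc{T-p-char} case splits against two evaluation rules: under \textsc{P-char-true} the consumed head satisfies the $\mathsf{Inl}$ disjunct with $\mathsf{inp}$ advanced to its tail, and under \textsc{P-char-false} the $\mathsf{Inr}$ disjunct holds with $\mathsf{inp}$ unchanged; in both cases the heap-interpretation of $\mathcal{H}$ and $\mathcal{H'}$ validates the $\mathsf{sel}/\mathsf{upd}$ constraints. The \textsc{T-deref} and \textsc{T-assign} cases follow from the definition of well-typed heap and the interpretation of $\mathsf{sel}$ and $\mathsf{dom}$. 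For \textsc{T-match} and \textsc{T-p-choice} the evaluation rule commits to a single branch; the \emph{conjunctive} precondition of the composite type supplies that branch's precondition, the inductive hypothesis yields its postcondition ($\phi_{i'}$, resp.\ $\phi_i'$), and this discharges the \emph{disjunctive} postcondition of the composite.

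The central case is monadic bind, \textsc{T-p-bind} against \textsc{P-bind-success} and \textsc{P-bind-err}. For the success rule I would first apply the inductive hypothesis to the sub-derivation for $p$, obtaining the intermediate heap $\mathcal{H'}$, a value $\vp_1$, an extended environment $\Gamma_1\supseteq\Gamma$, and the postcondition $\phi_{1'}$. I then identify the abstract intermediate heap variable $\mathsf{h_i}$ of the rule with the interpretation of $\mathcal{H'}$, so $\phi_{1'}(\mathsf{h},x,\mathsf{h_i})$ holds; combined with the implication in the inferred precondition this establishes the continuation's precondition $\phi_2\,\mathsf{h_i}$, licensing a second use of the hypothesis on $[\vp_1/x]\eip'$ in a further-extended $\Gamma_2\supseteq\Gamma_1$. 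Taking $\Gamma'=\Gamma_2$ and checking that consistency is preserved along $\Gamma\subseteq\Gamma_1\subseteq\Gamma_2$, the composite $\mathsf{Inl}$ postcondition (the conjunction of $\phi_{1'}$ and $\phi_{2'}$ threaded through $\mathsf{h_i}$) follows; \textsc{P-bind-err} uses only the first hypothesis and discharges the $\mathsf{Inr}$ disjunct. The reference case \textsc{T-ref}/\textsc{P-ref} is where $\Gamma'$ strictly extends $\Gamma$: Uniqueness (Lemma~\ref{lem:uniqueness}) supplies a fresh $\rp$ outside $\mathrm{dom}(\mathcal{H})$ (validating $\neg\mathsf{dom}(\mathsf{h},\rp)$), records $\mathsf{sel}(\mathsf{h'},\rp)=\vp$ in the updated heap, and preserves well-typedness, after which the body is handled by the hypothesis. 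The fixpoint case \textsc{T-p-fix}/\textsc{P-fix} unfolds $\mu\,x.p$ to $[\mu\,x.p/x]p$, which retains type $\tau$ by substituting the recursion variable (legitimate since $x\notin FV(\phi,\phi')$, via Lemma~\ref{lem:substitution}). Finally, subsumption \textsc{T-sub}/\textsc{T-Sub-Comp} is handled by using $\phi_2\Rightarrow\phi_1$ to transport the starting heap into the subtype's precondition, invoking the hypothesis at $\sigma_1$, and transporting the result back via result subtyping $\tau_1<:\tau_2$ and $\phi_{1'}\Rightarrow\phi_{2'}$ (under $\phi_2$), carrying the effect ordering $\el_1\le\el_2$ as an over-approximation.

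I expect the main obstacle to be the bind case in conjunction with reference allocation: the difficulty is not any single rule but the bookkeeping of simultaneously (i) reconciling the concrete intermediate heap $\mathcal{H'}$ with its abstract image $\mathsf{h_i}$ so that the first parser's postcondition genuinely discharges the second's precondition, and (ii) threading and composing the growing, consistent environments $\Gamma\subseteq\Gamma_1\subseteq\Gamma_2$ while keeping the non-aliasing and well-typed-heap invariants intact across both sub-evaluations. A secondary but genuine subtlety is the fixpoint case: because the unfolded term $[\mu\,x.p/x]p$ is not a subterm of the typing derivation, this case is precisely what forces the induction to be taken on the evaluation derivation, whose sub-derivation for the unfolding is strictly shorter, rather than on the typing derivation.
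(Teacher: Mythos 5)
Your proposal is correct, and the supporting machinery you invoke (Pure-terms soundness for the value disjunct, Inversion, Substitution, Canonical Forms, Weakening with {\sf consistent}, Uniqueness for re-establishing well-typedness of the extended heap, and the identification of the abstract intermediate heap $\mathsf{h_i}$ with the interpretation of the concrete intermediate heap in the bind case) is exactly the machinery the paper's proof uses. The one genuine difference is the induction metric. The paper inducts on the \emph{typing derivation}, proving for every rule except {\sc T-p-fix} a strengthened statement that also establishes progress (existence of $\mathcal{H'}$ and $\vp$), and then handles {\sc T-p-fix} separately by falling back on the theorem's own termination hypothesis --- it never confronts the fact that the unfolded body $[\mu\,x.p/x]p$ has no smaller typing derivation, and its appeal to the ``IH'' in that case is really an appeal to the given evaluation. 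You instead induct on the height of the big-step evaluation derivation, using inversion to align each evaluation node with its typing rule; this makes the fixpoint case go through uniformly (the sub-derivation for the unfolding is strictly shorter) and yields a cleaner, purely partial-correctness argument, at the cost of losing the progress information the paper gets for free on the non-recursive fragment. Your treatment is also slightly more complete in that you explicitly discharge {\sc T-sub} via {\sc T-Sub-Comp} (transporting the precondition forward and the postcondition back under the effect ordering), a case the paper's case analysis omits. Both routes are sound; yours is the more standard one for big-step semantics with general recursion, and it correctly diagnoses why the paper's chosen metric is awkward at {\sc T-p-fix}.
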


%\AM{Some Mathcal Heaps have funny depictions}
\begin{proof}
  The proof proceeds by induction on the derivation of Typing rules. Forall typing rules other than the {\sc T-p-fix}, we prove a much stronger argument, where we also show the progress, i.e. we prove that $\exists$ $\mathcal{H'}$ and \S{v} such that  ($\mathcal{H}$; \eip) $\Downarrow$ ($\mathcal{H'}$; \S{v}). For {\sc T-p-fix}, we assume such a $\mathcal{H'}$ and \S{v} to be given, obviating the need to reason about non-terminating programs. \\
  
  $\Gamma \vdash$ \eip\ : $\forall \alpha$. $\mathsf{PE^{\el}}$ $\{\phi_1\}$ $\nu$ : {\sf t} $\{\phi_2\}$:
  \begin{itemize}
    \item Case {\sc T-eps } : Given $\Gamma \vdash \S{eps} : \S{PE^{pure}}\, \{ \forall \S{h}.\, \S{true} \}\, \nu : \S{unit} 
      \ \{ \forall \S{h}, \nu, \S{h'}. \S{h' = h} \}$.
      \begin{enumerate}
        \item The evaluation rule {\sc P-eps} is applicable, thus $\exists \mathcal{H}$, such that $(\mathcal{H}; {\sf eps}) \Downarrow (\mathcal{H}; {\sf unit})$, hence 
        proving (G4).
        \item Using Cannonical lemma (~\ref{lem:cannonical}), we have {\sf ()} : {\sf unit} and Weakening Lemma 
        (~\ref{lem:weakening}), $\Gamma \vdash$ () : {\sf unit}, thus proving (G4.1).
        \item From {\sc P-eps} $\mathcal{H'}$ is same as $\mathcal{H}$, Thus using {\it Heap Interepretation} function 
        [$\mathcal{H'}$] = [$\mathcal{H}$] thus satisfying the post-condition \{ \S{h' = h} \} giving us (G4.2).
      \end{enumerate} 
    \item Case {\sc T-bot} : $\Gamma \vdash \S{\bot} : \S{PE^{exc}}\, \{ \forall \S{h}.\, \S{true} \}\, \nu  :  \S{exc}
     \ \{ \forall \S{h}, \nu, \S{h'}. \S{h' = h} \wedge \nu = \S{Err} \}$.
      \begin{enumerate}
        \item The evaluation rule {\sc P-bot} is applicable, thus $\exists \mathcal{H}$, such that $(\mathcal{H}; {\sf \bot}) \Downarrow (\mathcal{H}; {\sf Err})$, hence 
        proving (G4).
        \item Using Cannonical lemma we have {\sf Err} : {\sf exc} and using Weakening Lemma(~\ref{lem:weakening}), $\Gamma \vdash$ Err : {\sf exc}, thus proving (G4.1).
        \item From {\sc P-bot} $\mathcal{H'}$ is same as $\mathcal{H}$, Thus using {\it Heap Interepretation} function 
        [$\mathcal{H'}$] = [$\mathcal{H}$] thus satisfying the first conjunct of the post-condition (i.e. \{ \S{h' = h} \}) further
        using the Soundness of heap typing and {\sc P-BOT} we get us ({\sf G4.2}). 
      \end{enumerate}  
    \item Case {\sc T-P-char} : Given ${\begin{array}{@{}c@{}}
        \Gamma \vdash \S{char}\ \eip: \S{PE}^{\S{state}\, \sqcup\, \S{exc}} 
         \{ \forall \S{h}. \S{true} \}\, \nu\, :\, \mathsf{char\ result}\, \{ \phi_2 \} \\ 
          \end{array}}$  where $\phi_2 = \forall \S{h, \nu, h'}. \forall \S{x.} \\ 
          (\S{Inl(v) = x} \implies \S{x} = `c' \wedge \S{upd(h', h, inp, tail (inp))}) \wedge \\
          (\S{Inr(v) = x} \implies \S{x = Err}  \wedge  \S{sel(h,inp)} = \S{sel(h',inp)})$
         \begin{enumerate}
           \item By Inversion Lemma(~\ref{lem:inversion}) we have $\Gamma \vdash$ \eip\ : $\{ \nu' :\S{char} \mid \nu' = `c' \}$.
           \item Using the soundness result for Pure-terms(~\ref{lem:pure-soundness}) we have $\Gamma \models$ [$\nu'$ = `c'] and $\mathcal{H'}$ = $\mathcal{H}$.
           \item Doing a Case split on the two evaluation rules {\sc P-CHAR-True} and {\sc P-CHAR-False} :
                \begin{itemize}
                  \item Case {\sc P-char-true}:
                   $(\mathcal{H}; {\sf char} \ \eip) \Downarrow (\mathcal{H}[{\sf inp} \mapsto]; \emph{`c'})$
                   a) Using Cannonical Form Lemma, $\Gamma \vdash$ $\emph{`c'}$ : {\sf char}, \\
                   b) Using Soundness of Heap typing and definition of list constructor \\
                     $\Gamma \models$ $\S{upd([\mathcal{H'}], \mathcal{[H]}, inp, tail (inp))}$
                  \item Case {\sc P-char-false} :
                  $(\mathcal{H}; \eip) \Downarrow (\mathcal{H'}; {\sf Err}))$
                  a) Using Cannonical Form Lemma, $\Gamma \vdash$  {\sf Err} : {\sf exc}, \\
                  b) Using Soundness of Heap typing \\
                  $\Gamma \models$ $\S{sel([\mathcal{H}], inp) = sel(\mathcal{[H]}, inp)}$
               
                \end{itemize}
           \item Using Previous two cases and the definition of Sum type {\sf t result}, we get the required Goals (G4.1 and G4.2)      
         \end{enumerate} 
      \item Case {\sc T-P-choice} : $\begin{array}{@{}c@{}}  
                  \Gamma \vdash (p_1 \textnormal{<|>}  p_2) : 
                   \mathsf{PE}^{\el\, \sqcup\, \mathsf{nondet}} \ \{ (\phi1 \wedge \phi_2) \}
                                     \, \nu : \tau\,
                                     \{ (\phi_1'  \lor \phi_2') \} 
                   \end{array}$
              \begin{enumerate}
                \item By Inversion Lemma on the conclusion we have: 
                \item $\Gamma \vdash p_1\, :\,\mathsf{PE}^{\el}\, \{ \phi_1 \}\, \nu_1\, :\, \tau\, \{ \phi_1' \}$
                \item $\Gamma \vdash p_2\, :\,\mathsf{PE}^{\el}\, \{ \phi_2 \}\, \nu_2\, :\, \tau\, \{ \phi_2' \}$
                \item By Induction Hypothesis on the above two entailment rules we get the following 
                
                \item Using (G4 and G4.2) on Choice 1,  $\exists \mathcal{H'}_l$,  $\Gamma \phi_1 (\mathcal{H})$ $\models \{ \phi_1' \} (\mathcal{H}, \nu_1, \mathcal{H'}_l)$
                \item Similarly Using (G4 and G4.2) on Choice 2,  $\exists \mathcal{H'}_l$,  $\Gamma \phi_1 (\mathcal{H})$ $\models \{ \phi_1' \} (\mathcal{H}, \nu_1, \mathcal{H'}_r)$
                \item Using previous two points $\Gamma, (\phi_1 \wedge \phi_2) (\mathcal{H})$ $\models$ $\{ \phi_1' \} (\mathcal{H}, \nu_1, \mathcal{H'}_l)$ $\vee$
                $\{ \phi_2' \} (\mathcal{H}, \nu_2, \mathcal{H'}_r)$.
                \item Equivalently using distribution over disjunctions we get $\{ (\phi1 \wedge \phi_2) \}
                \, \nu : \tau\, \{ (\phi_1'  \lor \phi_2') \}$ giving us (G4.2)
                \item (G4.1) holds directly from the Induction Hypothesis. 
              \end{enumerate} 
      \item Case {\sc T-p-bind} : $\begin{array}{@{}c@{}}
        \Gamma' \vdash p\ \textnormal{>>=}\ \eip : 
        \mathsf{PE}^{\el} \ \{ \forall \mathsf{h}.\, \phi_1 \, \mathsf{h}\,  \wedge 
                                                 \phi_{1'} (\mathsf{h}, x, \mathsf{h_i})  => \phi_2 \ \mathsf{h_i} \} 
                    \ \nu'\,:\, \tau' \  \\
            \{ \forall \mathsf{h}, \nu', \mathsf{h'}. \phi_{1'} (\mathsf{h}, x, \mathsf{h_i}) \wedge 
                                                 \phi_{2'} ({\mathsf{h_i}, \nu', \mathsf{h'}}) \} 
         \end{array}$
         \begin{enumerate}
           \item By Inversion Lemma we have:
           \item $\Gamma\ \vdash\ p\, :\, \mathsf{PE}^{\el} \ \{ \phi_1 \}\, \nu\, :\, \S{t}  \{ \phi_{1'} \}$
           \item $\Gamma\ \vdash\ \eip\, :\, (\xp : \tau) \rightarrow \mathsf{PE}^{\el} \
           \{ \phi_2 \} \ \nu' : \S{t'} \ \{ \phi_{2'} \}$
           \item By IH (G4 and G4.1, G4.2) hold for the first judgement, thus $\exists \mathcal{H}_i$, 
                    $\Gamma, \phi_1 (\mathcal{H}) \models$ $\phi_{1'} (\mathcal{H}, \nu, \mathcal{H}_i)$
                  and $\Gamma \vdash$ $\nu$ : t \dots (IH1)
           \item Using (T-fun) and IH on the second judgement, $\Gamma, \S{x} : \tau$, if there exists some heap $\mathcal{H}_j$ such that $\phi_2$ ($\mathcal{H}_j$) then
           $\exists \mathcal{H'}$ such that $\Gamma, \S{x} : \tau, \phi_2 (\mathcal{H}_j) \models$ $\phi_{2'} (\mathcal{H}_j, \nu', \mathcal{H'})$ \dots (IH2)
           \item Using two IH above, we have the sufficient conditions to apply Bind evaluation rules {\sc P-bind-success} and {\sc P-bind-err}, we prove goals G4, G4.1 and G4.2 for each 
           of these cases:
              \begin{itemize}
                \item Case {\sc P-bind-err}, $(\mathcal{H}; p \textnormal{>>=} \eip) \Downarrow (\mathcal{H'}; {\sf Err})$, giving us G4
                  \begin{enumerate}
                    \item Using the definition of sum type {\sf t result}, the post condition for this case is handled in the second conjunct in the post-condition for {\sc T-p-bind}.
                    \item Using IH again for the first judgement in the antecedent of {\sc T-p-bind} $\Gamma, \phi_1 (\mathcal{H}) \models$ $\phi_{1'} (\mathcal{H}, \nu, \mathcal{H}_i)$ thus 
                    we have {\sf x = Err} => $\phi_{1'} (\mathcal{H}, \nu, \mathcal{H}_i)$ \dots {Proof-Err} 

                  \end{enumerate} 
                \item Case {\sc P-bind-succes}, $(\mathcal{H}; p \textnormal{>>=} \eip) \Downarrow (\mathcal{H''}; v_2)$, giving us a post heap $\mathcal{H''}$
                  \begin{enumerate}
                    \item Using the definition of sum type {\sf t result}, the post condition for this case is handled in the first conjunct in the post-condition for {\sc T-p-bind}.
                    \item Using the (IH2) argument, we need a an intermediate heap $\mathcal{H}_j$ such that $\phi_2$ ($\mathcal{H}_j$).
                    \item Given the pre-condition for {\sc T-p-bind} we have $\phi_{1'} (\mathsf{h}, x, \mathsf{h_i})  => \phi_2 \ \mathsf{h_i}$, thus we can use 
                    $\mathcal{H}_i$ as required $\mathcal{H}_j$, consequently, (IH2) implies $\Gamma, \S{x} : \tau, \phi_2 (\mathcal{H}_i) \models$ $\phi_{2'} (\mathcal{H}_j, \nu', \mathcal{H''})$ \dots (Proof-Succ)
                    \item Using IH2 we also get the G4.1 for the success branch evaluation.
                  \end{enumerate}
                      
              \end{itemize}
           \item Using (Proof-Err) and (Proof-Succ) above and the definition of sum type {\sf t result} (G4.2) for the {\sc T-p-bind} is implied by the two cases in the post-condition of {\sc T-p-bind}.
         \end{enumerate}    
      \item Case {\sc T-fix} : $\Gamma \vdash \mu\, \mathsf{x}\, :\, (\mathsf{PE}^{\el}\, \{ \phi \}\, \nu\, :\, \S{t}\, \{ \phi' \}).\, p\, :\, \mathsf{PE}^{\el}\, \{ \phi \}\, \nu\, :\, \S{t}\, \{ \phi' \} $
      \begin{enumerate}
            \item By Inversion Lemma we have $\Gamma, \mathsf{x}\, :\, (\mathsf{PE}^{\el}\, \{ \phi \}\, \nu\, :\, \S{t}\, \{ \phi' \})\, \vdash\ p\, : \mathsf{PE}^{\el}\, \{ \phi \}\, \nu\, :\, \S{t}\, \{ \phi' \}$ \dots (IL1)
            \item Using types for {\sf x} in $\Gamma$ and $\mu\, \mathsf{x : \dots}.\, p$, and the Substitution Lemma~\ref{lem:substitution}, the substitution $\mathsf{x:\sigma}. p/\S{x}]p$ is well-formed. 
            %\AM{Define this}
            \item Using IH, we are Given $\exists$ a heap $\mathcal{H'}$ and a value $v$ such that 
            $(\mathcal{H}; [\mu \mathsf{x:\sigma}.p/\S{x}]p) \Downarrow (\mathcal{H'}; v)$. lets call this argument \dots (E1)
            \item Thus, the preconditions for rule {\sc p-fix} hold and it can be applied, giving us $(\mathcal{H}; \mu \mathsf{x:\sigma}. p) \Downarrow (\mathcal{H'}; v)$ (giving us G4)
        
            \item Using IH on the judgement from the (IL1) we get $\Gamma, {\sf x} : (\mathsf{PE}^{\el}\, \{ \phi \}\, \nu\, :\, \S{t}\, \{ \phi' \})\, \phi (\mathcal{H}) \models$ 
            $\phi'$ ($\mathcal{H}$, $\S{v}$, $\mathcal{H'}$) and $\Gamma \vdash \nu$ : {\sf t} 
           
            \item Using Subtitution Lemma $\Gamma \vdash [\mathsf{x:\sigma}.p/\S{x}]p : [\mathsf{x:\sigma}.p/\S{x}] \mathsf{PE}^{\el}\, \{ \phi \}\, \nu\, :\, \S{t}\, \{ \phi' \} $ \dots (J1)
            \item Using Inversion Lemma on the original judgement for {\sf T-p-fix}, we ${\sf x} \notin FV(\phi, \phi')$.
            \item Thus, $[\mathsf{x:\sigma}.p/\S{x}] \mathsf{PE}^{\el}\, \{ \phi \}\, \nu\, :\, \S{t}\, \{ \phi' \} $ reduces to $\mathsf{PE}^{\el}\, \{ \phi \}\, \nu\, :\, \S{t}\, \{ \phi' \}$ using definition of substitution in Types.
            \item Thus from this and (J1) we have  $\Gamma \vdash [\mathsf{x:\sigma}.p/\S{x}]p : \mathsf{PE}^{\el}\, \{ \phi \}\, \nu\, :\, \S{t}\, \{ \phi' \} $ \dots (J2)
            \item Using (E1) and (J2) and the IH, $\Gamma, \phi (\mathcal{H}) \models \phi' (\mathcal{H}, v, \mathcal{H'})$ and $\Gamma \vdash v : \S{t}$
            \item This prooves the Goals G4.1 and G4.2
            
          \end{enumerate}
      \item Case {\sc T-app} : $\Gamma \vdash \eip_f\ \xp_a\, : [\xp_a/x]\S{PE}^{\el} \{\phi\}\ \nu\ : \S{t}\ \{\phi'\}$
          \begin{enumerate}
            \item Using Inversion Lemma, we have $\Gamma \vdash \eip_f : (x\, :\, \{ \nu : \S{t} \mid \phi_x\}) \rightarrow \S{PE}^{\el} \{\phi\}\ \nu\ : \S{t}\ \{\phi'\}\ $
            and  $\Gamma \vdash \xp_a : \{ \nu : \S{t} \mid \phi_x\} $
            \item Using Cannonical Form Lemma for arrow type, we must have $\eip_f$ = $\lambda (\S{x} : \{ \nu : \S{t} \mid \phi_x\}).\eip $
            \item Using Soundness Lemma for pure term typing $\Gamma \vdash \xp_a : \{ \nu : \S{t} \mid \phi_x\} $ we have $(\mathcal{H}; \xp_a) \Downarrow (\mathcal{H}; {\sf v})$
            and $\Gamma \vdash {\sf v} : \{ \nu : \S{t} \mid \phi_x\}$
            \item Using the Typing rule {\sc T-fun}, we get $\Gamma, (x\, :\, \{ \nu : \S{t} \mid \phi_x\}) \vdash $  $\eip$ : $\S{PE}^{\el} \{\phi\}\ \nu\ : \S{t}\ \{\phi'\} $  
            \item Using Induction Hypothesis on the above Type for $\eip$ we must have if $\Gamma,  (x\, :\, \{ \nu : \S{t} \mid \phi_x\}) \models \{\phi\} (\mathcal{H})$ then 
            $\exists \mathcal{H'}$, such that  $(\mathcal{H}; \eip \Downarrow (\mathcal{H'}; {\sf v'})$ 
            \item and from G4.1 and G4.2 $\Gamma,  (x\, :\, \{ \nu : \S{t} \mid \phi_x\}) , \{\phi\} (\mathcal{H}) \models \{\phi'\} (\mathcal{H}, \nu', \mathcal{H'})$ and 
            $\Gamma,  (x\, :\, \{ \nu : \S{t} \mid \phi_x\}) \vdash \nu' : \S{t}$.
            \item Applying Substition Lemma on the above two points, $\exists$ $\mathcal{H'}$, such that  $(\mathcal{H}; [\xp_a/x]\eip \Downarrow (\mathcal{H'}; {\sf v'})$
            \item and $\Gamma,  (\xp_a\, :\, \{ \nu : \S{t} \mid \phi_x\}) , [\xp_a/x]\{\phi\} (\mathcal{H}) \models [\xp_a/x]\{\phi'\} (\mathcal{H}, \nu', \mathcal{H'})$ and 
            $\Gamma,  (\xp_a\, :\, \{ \nu : \S{t} \mid \phi_x\}) \vdash \nu' : \S{t}$.
            \item The above gives G4.1 and G4.2  
          \end{enumerate}
      \item Case {\sc T-retrun} : $\Gamma \vdash \mathbf{return}\, \eip :
                  \mathsf{PE}^{\S{pure}} \{ \forall \S{h}. \S{true} \}\, \nu :\, \S{t} \
                  \{ \forall \S{h}, \nu, \S{h'}. \S{h' = h} \wedge \phi \}$
            \begin{enumerate}
              \item Using Inversion Lemma, we have $\Gamma \vdash \eip\ : \{ \nu : \S{t} \mid \phi \}$
              \item Using Soundness lemma for pure terms on the above judgement, we have $(\mathcal{H}; \eip) \Downarrow (\mathcal{H}; {\sf v})$
              and $\Gamma \vdash {\sf v} : \{ \nu : \S{t} \mid \phi\}$ \dots (Proof-pure)
              \item Thus we can apply {\sc P-return} giving $(\mathcal{H}; \mathbf{return}\, \eip) \Downarrow (\mathcal{H}; {\sf v})$ giving us (G4)
              \item Using above evaluation we have $\mathcal{H'} = \mathcal{H}$, this and using (Proof-pure) we get $\Gamma, \S{true} \models$ $\mathcal{H'} = \mathcal{H} \wedge \phi (\nu)$, giving
              us G4.2.
              \item G4.1 follows directly from (Proof-pure)

            \end{enumerate}
      \item Case {\sc T-match} : $\Gamma \vdash {\bf match} \ \vp \ {\bf with} \ \mathsf{D_i}\ 
                                \overline{\mathsf{\alpha}_{k}} \overline{\mathsf{x}_{j}} \rightarrow e_i : \S{PE}^{\S{\el}} \{ \forall\ \S{h}. 
                                \bigwedge_{i}^{} (\vp\ = \mathsf{D_i} \ \overline{\mathsf{\alpha}_{k}} \overline{\mathsf{x}_{j}}) =>  
                                \phi_i\}\ \nu\ : \S{t}\ \{ 
                                \forall\ \S{h}, \nu', \S{h'}.                           
                                \bigvee_{i}^{} \phi_{i'}\}\ $ 
             \begin{enumerate}
              \item The soundness argument is presented for soem {\sf i} and then generalized for each $\mathsf{D_i}$.    
              \item Using Inversion Lemma $\Gamma \vdash \vp : \tau_0$
               \item By Soundness Pure Lemma ($\mathcal{H}$; \eip) $\Downarrow$ ($\mathcal{H}$; \S{v}) and $\Gamma \vdash \S{v} : \tau_0$
               \item By Inversion Lemma again $\Gamma_{i} = \Gamma, \overline{\alpha_k}, \overline{\mathsf{x}_{j} :\tau_j}$ and 
               $\Gamma_i \vdash \mathsf{D_i} \ \overline{\mathsf{\alpha}_{k}} \overline{\mathsf{x}_{j}} : \tau_0$
                \item Using the pre-condition of the {\sc T-match} and the given conditions for the theorem, we extract the pre-consition component for the 
                $\mathsf{D_i}$, thus we have $(\vp\ = \mathsf{D_i} \ \overline{\mathsf{\alpha}_{k}} \overline{\mathsf{x}_{j}})$.
                \item $\Gamma_{iext}$ = $\Gamma_i$, $(\vp\ = \mathsf{D_i} \ \overline{\mathsf{\alpha}_{k}} \overline{\mathsf{x}_{j}})$
                \item Using Inversion Lemma once again we have $\Gamma_{i} \vdash e_i : \S{PE}^{\S{\el}} \{\phi_i\}\ \nu\ : \S{t}\ \{\phi_{i'}\}\ $ \dots {\sf (J1)}
                \item Using IH for the above judgement we get $\exists \mathcal{H'}$ such that $(\mathcal{H}; \S{e_i}) \Downarrow (\mathcal{H'}; \S{v_i})$ 
                \item Using above conclusions, we have the required pre-conditions for the application of the evaluation rule {\sf P-match}, thus we get $\exists \mathcal{H'}$ such that 
                $(\mathcal{H};{\bf match} \ \vp \ {\bf with} \ \mathsf{D_i}\ \overline{\mathsf{\alpha}_{k}} \overline{\mathsf{x}_{j}} \rightarrow e_i) \Downarrow (\mathcal{H'}; \S{v_i})$ 
                \item Now using IH against {\sf (J1)} again, we get $\Gamma_{iext}, \phi_i (\mathcal{H}) \models \phi_i' (\mathcal{H}, \S{v_i} \mathcal{H'})$ and $\Gamma_i \vdash \S{v_i} : \S{t}$
                \item Finally the above only holds for the given assumption $(\vp\ = \mathsf{D_i} \ \overline{\mathsf{\alpha}_{k}} \overline{\mathsf{x}_{j}})$ in $\Gamma_{iext}$, thus 
                we can move the assumption to the pre-condition we get, $\Gamma_{i}, ((\vp\ = \mathsf{D_i} \ \overline{\mathsf{\alpha}_{k}} \overline{\mathsf{x}_{j}}) => \phi_i (\mathcal{H})) \models \phi_i' (\mathcal{H}, \S{v_i} \mathcal{H'})$
                \item The above give (G4.2) for some $\mathsf{D_i}$.
                \item Now generalizing this for each {\sf i} we get : $\bigcup_{i}^{} \Gamma_i, \bigwedge_{i}^{} 
                                                                  ((\vp\ = \mathsf{D_i} \ \overline{\mathsf{\alpha}_{k}} \overline{\mathsf{x}_{j}}) => \phi_i(\mathcal{H})) 
                                                                  \models  \bigvee_{i}^{} \phi_{i}' (\mathcal{H}, \S{v_i} \mathcal{H'})$ 

                \item The above gives us (G4.2) and (G4.1) holds directly as $\Gamma_i \vdash \S{v_i} : \S{t}$ for each {\sf i}
             \end{enumerate}                   
      \item Case {\sc T-deref} : $\Gamma \vdash \S{deref} \ \rp : \{ \forall\ \S{h}. \S{dom (h,\rp)} \}\ \nu'\ : \S{t}\ \{ \forall\ \S{h}, \nu', \S{h'}. \S{sel(h,\rp)}\ = \nu' \wedge \S{h = h'} \} $
             \begin{enumerate}
               \item Using IL $\Gamma \vdash \rp\ : \S{PE}^{\S{state}} \{\phi_1\}\ \nu\ : \S{t\ ref}\ \{\phi_2\}\ $
               \item By IH on the above judgement we have $\Gamma \vdash \rp\ : \S{t\ ref}\  $ \dots (IH1)
               %\AM{This appears unnatural, we need a statement that base type for $\rp$ us t, change the origianl 
               %theorem statement} \dots (IH1)
                \item Using the pre-condition for {\sc T-deref} $\Gamma \models \S{dom (h,\rp)}$.
                \item Using the Given Heap Soundness over the logical entailments $\Gamma \models \S{dom (h,\rp)}$ implies $\exists \S{v}$. $\mathcal{H} (\rp) = \S{vp}$. \dots (Heap-map)
                \item Using the above argument, {\sc T-deref} is applicable $(\mathcal{H}; \S{deref} \ \rp) \Downarrow (\mathcal{H}; \S{vp})$.
                \item Using the given Heap Soundness and (Heap-map) we get $\S{sel(\mathcal{H},\rp)} = \S{\vp}$ and using the above evaluation step $\mathcal{H'}$ = $\mathcal{H}$.
                \item Thus $\Gamma, \S{dom (\mathcal{H},\rp)} \models \S{sel(\mathcal{H},\rp)} = \S{\vp} \wedge \mathcal{H'}$ = $\mathcal{H}$ giving us (G4.2)
                \item Again using the given Heap Soundness and (IH1), we get $\Gamma \vdash S{\vp} : \S{t}$ giving us (G4.1) 
             \end{enumerate} 
      \item Case {\sc T-assign} : $ \Gamma \vdash \rp\ \mathsf{:=}\ \eip\  : \{ \forall \S{h}.\S{dom(h,\rp)} \}\ \nu'\ :\ \S{t}\ \{ \forall\ \S{h},\nu',\S{h'}.\S{sel(h',\rp)}\ =\ \nu' \wedge\ \phi(\nu') \} $ 
              \begin{enumerate}
                 \item Using IL $\Gamma \vdash\ e\ :\ \{ \nu\ :\  \S{t} \mid\ \phi \}$. \dots (IH1)
                 
                 \item Applyin Soundness Lemma for pure terms on the above judgement 
                 we have $(\mathcal{H}; \mathsf{\eip}) \Downarrow (\mathcal{H}; \S{v})$ and $\Gamma \models \phi$
                 
                 (\S{v}) (Using the definition of $\Gamma \models \phi$) 
                 %\AM{A clear definition of $\models$ is MUST}.
                 \item Using the above argument, the preconditions for {\sc P-assign} are valid thus we get $(\mathcal{H}; \rp\ := \S{v}) \Downarrow (\mathcal{H}[\rp \mapsto \S{v}]; \S{v})$.
                 \item The above reduction has to component, a) The location $\rp$ is updated and the remaining heap $\mathcal{H}$ remains the same. 
                 \item Using the given Heap Soundness we get, let $\mathcal{H'}$ = $(\mathcal{H}[\rp\ \mapsto \S{v})$ then $\S{sel(\mathcal{H'},\rp)}\ =\ \S{v}$ holds. \dots (HS1)
                 
                 \item Using the Definition of {\sc T-frame}, we also get the for all $\phi_r$, such that 
                 ${\bf Locs (\phi_r)} \cap ({\bf Locs (\phi)} \cup {\bf Locs (\phi')}) = \emptyset$ we have ($\phi_r \wedge \phi'$) $\mathcal{H'}$ \S{v} $\mathcal{H'}$. \dots (HS2)  
                 \item Using HS1 ad HS2 we get that the post-condition $\Gamma, \phi_r \wedge \phi' \models$ $\phi_r \wedge \phi'$).This along with IH1 above gives us G4.2
                 \item The above gives G4.2, while G4.1 holds directly from IH1.
              \end{enumerate}
      \item Case {\sc T-ref} : $\Gamma\ \vdash\ \B{let}\ \rp\ = \S{ref}\ \eip\ \B{in}\ \S{e}_b\ :\ \sigma$
             \begin{enumerate}
               \item Using Inversion Lemma we have $\Gamma\ \vdash\ \vp\ :\ \{\ \nu\ :\ \S{t} \mid\ \phi\ \}$ \dots ({\sf IL1})
               \item and $\Gamma \vdash \rp\ : \S{PE}^{\S{state}} \{ \forall\ \S{h}. \neg\ \S{dom(h,\rp)} \}\ 
               \nu'\ : \S{t\ ref}   \{ \forall\ \S{h,\nu',h'}.  \S{sel(h',\rp)}\ =\ \vp\ \wedge\\
                \phi(\vp) \wedge\ \S{dom(h',\rp)} \} \dots ({\sf IL2})$
               \item and $\Gamma, \rp\ : \S{PE}^{\S{state}} \{ \forall\ \S{h}. \neg\ \S{dom(h,\rp)} \}\ 
               \nu'\ : \S{t\ ref}   \{ \forall\ \S{h,\nu',h'}.  \S{sel(h',\rp)}\ =\ \vp\ \wedge\\
                \phi(\vp) \wedge\ \S{dom(h',\rp)} \} \vdash\ \S{e}_b\ : \S{PE}^{\S{\el}} \{\S{dom(h,\rp)}\}\ \nu\ : \S{t}\ \{\phi_b'\}\  $ (\dots {\sf IL3})
                \item By IH on (IL1) we have $\Gamma \models \phi (\vp)$
                \item By IH on (IL2) , if $\exists \mathcal{H}$ such that $\neg\ \S{dom(\mathcal{H},\rp)}$  then $\exists \mathcal{H}_i$, such that
                $(\mathcal{H}; \B{let}\ \rp\ = \S{ref}\ \eip\ ) \Downarrow (\mathcal{H}_i[\rp \mapsto v]; \rp)$.
                \item Given the pre-condition and the statement of the theorem $\Gamma \models \neg\ \S{dom(h,\rp)} \mathcal{H}$ thus the right hand side of the above implication 
                holds.
                \item Completing the IH argument on (IL2), $\Gamma, \neg\ \S{dom(\mathcal{H},\rp)} \mathcal{H} \models \S{dom(\mathcal{H}_i,\rp)} \wedge \phi (\vp)$. \dots ({\sf IH2})
                \item Using  IH on (IL3), if $\exists \mathcal{H}_j$, such that  $Gamma \models \S{dom(\mathcal{H}_j,\rp)}$ then $(\mathcal{H}[\rp \mapsto v]; \eip_b) \Downarrow (\mathcal{H'}; {\sf v'})$ 
                \item Using (IH2), we can substitute $\mathcal{H}_i$ for $\mathcal{H}_j$ in the above statement, thus $(\mathcal{H}[\rp \mapsto v]; \eip_b) \Downarrow (\mathcal{H'}; {\sf v'})$ 
                \item Completing the IH argument on (IL3), $\exists \mathcal{H_i}$ such that $Gamma,\S{dom(\mathcal{H}_i,\rp)} \models \phi_b' (\mathcal{H}_i, \S{v'}, \mathcal{H'}) \wedge \S{dom(\mathcal{H}_i,\rp)}$ 
                \item Using the given Heap Soundness %Lemma ~\ref{lem}\AM {for every heap in environment, there is a H in gamma} 
                $\Gamma, \S{h_i : heap}, \S{dom(\mathcal{H}_i,\rp)} \models \phi_b' (\mathcal{H}_i, \S{v'}, \mathcal{H'}) \wedge \S{dom(\mathcal{H}_i,\rp)}$  and 
                $\Gamma, \S{h_i : heap} \vdash \S{v'} : S{t}$  \dots ({\sf IH3})
                \item Thus, the preconditions for {\sf P-ref} holds, applying it we have $(\mathcal{H}; \B{let}\ \rp\ = \S{ref}\ \eip\ \B{in}\ \S{e}_b \ ) \Downarrow (\mathcal{H'}; {\sf v'})$
                \item Using transitive reasoning over (IH2) and (IH3) we get  $\Gamma, \S{h_i : heap}, \neg\ \S{dom(\mathcal{H},\rp)} \models \phi_b' (\mathcal{H}_i, \S{v'}, \mathcal{H'}) \wedge \S{dom(\mathcal{H}_i,\rp)} $ 
                giving us (G4.2) for the {\sc T-ref} typing rule.
                \item (G4.1) follows directly from IH3.
			
             \end{enumerate}
      
  \end{itemize}
  
\end{proof}

\paragraph{Decidability}
Propositions in our specification language are first-order formulas in
the theory of Equality {\sf +} Uninterpreted Functions {\sf +} Linear Integer Arithmetic
(EUFLIA)~\cite{eufa}. 

The subtyping judgment in $\lambda_{sp}$ relies on the semantic
entailment judgment in this theory. Thus, decidability of type
checking in $\lambda_{sp}$ reduces to decidability of semantic
entailment in EUFLIA.  Although semantic entailment is undecidable for
full first-order logic, the following lemma argues that the
verification conditions generated by \name typing rules always
produces a logical formula in the Effectively Propositional
(EPR)~\cite{epr,smt-eps} fragment of this theory consisting of formulae with
prenex quantified propositions of the forms $\exists^{*}$
$\forall^{*}$ $\phi$.  Off-the-shelf SMT solvers (e.g., Z3) are
equipped with efficient decision procedures for EPR
logic~\cite{smt-eps}, thus making typechecking decidable in \name.

\begin{definition}
  We define two judgments:
  \begin{itemize}
    \item $\vdash$ $\Gamma$ {\sf EPR} asserting that all
    propositions in $\Gamma$ are of the form $\exists^{*}$ $\forall^{*}$
    $\phi$ where $\phi$ is a quantifier free formula in {\sf EUFLIA}.
    \item $\Gamma$ $\vdash$ $\phi$ {\sf EPR}, asserting that under a given $\Gamma$,
    semantic entailment of $\phi$ is always of the form $\exists^{*}$
    $\forall^{*}$ $\phi'$.  
  \end{itemize}
\end{definition}

\begin{lemma}[Grounding]{\label{lem:grounding}}
If $\Gamma$ $\vdash$ {\sf e} : $\tau$,
then $\vdash$ $\Gamma$ {\sf EPR} and if\ $\Gamma$ $\vDash$ $\phi$ then
$\Gamma$ $\vdash$ $\phi$ {\sf EPR}
\end{lemma}

\begin{proof}
  
  \begin{itemize}
    \item Proof for: \\
    If $\Gamma$ $\vdash$ {\sf e} : $\tau$, then $\vdash$ $\Gamma$ {\sf EPR} \\
    If $\Gamma$ $\vdash$ {\sf e} : $\tau$,
        then $\vdash$ $\Gamma$ EPR uses finite induction on typing rules which add a formula $\phi$ in $\Gamma$. intuitively we prove that For all rules 
        iff $\vdash \Gamma'$ {\sf EPR} for union of enviornments $\Gamma'$ in rule's antecedents then $\vdash \Gamma$ {\sf EPR} in the consequence. 
        
        \begin{itemize}
          
          \item Case {\sc T-p-bind} : This rule extends the environment with two variables {\sf x} and intermediate heap $\mathsf{h_i}$, since the language of specification has no 
          existential quantifier $\exists$, all formulas in this extended $\Gamma$ are of the form $\exists$ {\sf x}, $\mathsf{h_i}$. $\forall$ , hence $\vdash$ $\Gamma$
          
          \item Case {\sc T-match} : This rule also extends the environment with variables for constructor arguments, here again the argument for {\sf T-p-bind} holds.
          
        %  \item Case {\sc T-ref}
          \item For all other rules $\Gamma$ $\subseteq$ $\Gamma'$ thus the argument trivially holds.
        
        \end{itemize}
    \item Proof for: \\
     If\ $\Gamma$ $\vDash$ $\phi$ then $\Gamma$ $\vDash$ $\phi$ {\sf EPR} \\
    if $\Gamma$ $\vDash$ $\phi$ then this must be created using subtyping rules 
    as these are the only rules which translate syntactic typing to semantic entailment in logic. 
    Now using the fact that: 
    \begin{enumerate}
      \item all specifications in \name are either quantifier free or can use universal quantifiers,  
      \item from first part of the proof we know that existentials reside in $\Gamma$, 
      we get that for any subtyping entailment of the form $\Gamma, \exists^{*} \forall^{*} \dots $ $\vDash$ $\phi$ $\implies$ $\phi'$, 
      ($\phi$ $\implies$ $\phi'$), is free from existentials.
      \item Using the definition of $\Gamma \vDash \phi$ the above translated to the formula of the form $\exists^{*}$ $\forall^{*}$ $\wedge$ $\phi$ $\implies$ $\phi'$ . 
      \item The above implication is in EPR.
    \end{enumerate}
     
\end{itemize}
  
\end{proof}

\begin{theorem}[Decidability \name]
  \label{thm:decidability}
  Typechecking in \name\ is decidable.
  \end{theorem}

\begin{proof}
Follows from Grounding lemma and decidability of {\sf EPR} fragment in EUFLIA.
\end{proof}

\section{Other Supplemental Items}
\subsection{Implementation}
An anonymized implementation repository and benchmarks  are available at: 
https://anonymous.4open.science/r/morpheus-DEF4/README.md

\subsection{Derived Combinators}
Following a non-exhaustive list of commonly used Derived combinators available in \name
\begin{lstlisting}[escapechar=\@,basicstyle=\small\sf,breaklines=true,language=ML]
 let any l = List.fold_left (fun acc pi -> acc <|> pi) l

 let map f p = (p >>= \x f x)
 
 let (>>) e1 e2 = e1 >>= \_ e2 
 
 let (<<) e1 e2 = e1 >>= \x. e2 >>= return x 
 
 let option e = (e >>= \r. return Some r) <|> (eps >>= \_ retrun None)  
 
 let star e = fix (\e_star : @$\tau$@. 
		map (\_ -> []) eps 
		<|>
                (e >>= \x. 
                e_star >>= \xs. return (x :: xs)) 
 
 let plus e = e >> (star e)
 
 let count n p = fix (\countnp : @$\tau$@. 
		if (n <= 0) then 
		 map (\_ -> []) eps 
		else
		  (p >>= \x. 
		  countnp (n-1) >>= \xs. return (x :: xs)) 
\end{lstlisting}

\subsection{Benchmark Grammars}
Following are the grammars for the Benchmark applications:
\begin{enumerate}
 \item {\bf PNG-chunk}
\begin{lstlisting}[escapechar=\@,basicstyle=\small\sf,breaklines=true]
 png : header . many chunk 
 chunk : length . typespec . content . Pair (length,content)
 length : number 
 typespec : char 
 content : char*
\end{lstlisting}
 \item {\bf PPM}
  
\begin{tabbing}
\small
 {\sf ppm} : {\sf ``P''} . {\sf versionnumber} . {\sf header} . {\sf data} \\
 {\sf versionnumber} :  digit \\
 {\sf header } : width = number . height = number . max = number \\
 {\sf data } : rows* [length (rows) = height]\\
 {\sf row } : rgb* [length (rgb) = width]  \\
 {\sf rgb } : r = number . g = number  . b= number [r < max, g < max, b < max] \\ 
 {\sf number} :  digit* \\
 
 \end{tabbing}
\item {\bf Haskell-case exp}

\begin{lstlisting}[escapechar=\@,basicstyle=\small\sf,breaklines=true]
caseexp : offside ('case' . exp . 'of') . offside (align alts) 
alts :  (alt) . (alt)* 
alt : pat ralt;
ralt : ('->' exp)
pat : exp
exp : varid
varid : [a-z, 0-9]*
\end{lstlisting}

\item {\bf Python-while-block}
\begin{lstlisting}[escapechar=\@,basicstyle=\small\sf,breaklines=true]
while_stmt: offsie 'while' . offside test . offside ':' . offside suite
suite: offside NEWLINE . offside stmt+
test: expr op expr
expr : identifier
stmt: small_stmt NEWLINE
small_stmt: expr op expr
op : > | < | = 
\end{lstlisting}

\item {\bf xauction}
\begin{lstlisting}[escapechar=\@,basicstyle=\small\sf,breaklines=true]
  listing :  sellerinfo .  auctioninfo
  sellerinfo :  sname . srating 
  auctioninfo  : bidderinfo+  
  bidderinfo  : bname . brating
  sname :  "<name>" name "</name>"
  srating :  "<rating>" number "</rating>"
  bname :  "<name>" name "</name>"
  brating : "<rating>" number "</rating>"

  name : [a-z].[a-z,0-9]*
  number : [0-9]+
\end{lstlisting}

\item {\bf xprotein} where {\sf proteins} is a global list of parsed proteins. 
\begin{lstlisting}[escapechar=\@,basicstyle=\small\sf,breaklines=true]

proteindatabase  = database  proteinentry+
database =  <database> uid </database>
proteinentry =  <ProteinEntry> header  protein 	skip* </ProteinEntry> 
header = <header> . uid .</header>
uid = number
protein = <protein> name . id . </protein> [@$\neg$@ (name @$\in$@ proteins)] {proteins.add name}
\end{lstlisting}

\item {\sf health}
\begin{itemize}
 \item Following is the custom-stateful regex patter-matcher
\begin{lstlisting}[escapechar=\@,basicstyle=\small\sf,breaklines=true]
Custom Regex pattern = 
	<skip> ([^,]*, {4}) 
		(<?round-off> cancer-deaths) 
			@$\lambda$@ x. (<skip>[,*,] {2}) 
				(<?check-less-than x> cancer-deaths-min) 
					@$\lambda$@ y. 
					(<?check-greater-than x> cancer-deaths-max) @$\lambda $@ z.(<Triple {x;y;z}> [*\n])
\end{lstlisting}
\item 
Following is the grammar capturing the above pattern:
\begin{lstlisting}[escapechar=\@,basicstyle=\small\sf,breaklines=true]
csvhealth : count 4 (skip) . x = cancer-deaths . (count 2 skip) . y = cancer-deaths-min [y < x] . z = cancer-deaths-max [z > x]   
skip : [a-z]*.','
cancer-deaths : number 
cancer-deaths-min : number 
cancer-deaths-max : number
\end{lstlisting}

\end{itemize}

\item {\bf streams}

\begin{lstlisting}[escapechar=\@,basicstyle=\small\sf,breaklines=true]
streamicc :  t = tagentry . chunk (t)
tagentry : signature .  offset  .  size
signature : number
offset : number 
size : number
chunk (t) : s = GetStream . s1 = Take (t.sz) s . SetStream s1 . Tag (t.signature) . s2 = Drop sz s . Setstream s2
  
GetStreamm : !inp
SetStream (s1) : inp := s1
Tag (choice) : tag-left [choice=0]| tag-right[choice=1]
tag-left :x = number [x = 0] 
tag-right : x = number [x =1]
\end{lstlisting}

\item {\bf c typedef}

\begin{lstlisting}[escapechar=\@,basicstyle=\small\sf,breaklines=true]
decl := "typedef" . 
		typeexpr . 
		id=rawident [@$\neg$@ id @$\in$@ (!identifiers)]
		{types.add id} 

typename := x = rawident [x @$\in$@ (!types)]{return x}
typeexp := "int" | "bool"
expr :=  id=rawident {identifiers.add id ; return id}
program := many decl . many expr
\end{lstlisting}

\end{enumerate}

\subsection{Specification Monad Morphism}

\begin{lstlisting}[escapechar=\@,basicstyle=\small\sf,language=ML]
@${\mathcal{T}_{\mathsf{pure}}}^{\mathsf{state}}$@ ({ @$\nu$@ : t | @$\phi$@}) = @$\mathsf{PE}^{\mathsf{state}}$@ {true}  @$\nu$@ : t { @$\phi$@ @$\wedge$@ h' = h}

@${\mathcal{T}_{\mathsf{pure}}}^{\mathsf{exc}}$@ ({ @$\nu$@ : t | @$\phi$@}) = @$\mathsf{PE}^{\mathsf{exc}}$@ {true}  @$\nu$@ : t result { x = Inl (@$\nu$@) @$\wedge$@ @$\phi$@[x/@$\nu$@] @$\wedge$@ h' = h}

@${\mathcal{T}_{\mathsf{state}}}^{\mathsf{stexc}}$@ (@$\mathsf{PE}^{\mathsf{state}}$@ {@$\phi$@}  @$\nu$@ : t { @$\phi'$@}) = @$\mathsf{PE}^{\mathsf{stexc}}$@ {@$\phi$@}  @$\nu$@ : t result { x = Inl (@$\nu$@) @$\wedge$@ @$\phi'$@[x/@$\nu$@]}


@${\mathcal{T}_{\mathsf{pure}}}^{\mathsf{nondet}}$@ (pure { v : t | @$\phi$@}) = 
	@$\mathsf{PE}^{\mathsf{nondet}}$@ {true}  v : t { @$\phi$@ @$\wedge$@ h' = h}


@${\mathcal{T}_{\mathsf{el}}}^{\mathsf{el \sqcup nondet}}$@ @$\mathsf{PE}^{\mathsf{el}}$@ {@$\phi$@}  v : t { @$\phi'$@} = 
	@$\mathsf{PE}^{\mathsf{el \sqcup nondet}}$@ {@$\phi$@}  v : t  { @$\phi'$@}

@${\mathcal{T}_{\mathsf{stnon}}}^{\mathsf{parser}}$@ @$\mathsf{PE}^{\mathsf{parser}}$@ {@$\phi$@}  v : t { @$\phi'$@} = 
	@$\mathsf{PE}^{\mathsf{parser}}$@ {@$\phi$@}  v : t result  { x = Inl (v) @$\wedge$@ @$\phi'$@}

\end{lstlisting}

% \begin{lstlisting}[escapechar=\@,basicstyle=\small\sf,language=ML]
%   @${\mathcal{T}_{\mathsf{pure}}}^{\mathsf{exc}}$@ ({ @$\nu$@ : t | @$\phi$@}) = @$\mathsf{PE}^{\mathsf{exc}}$@ {true}  @$\nu$@ : t result { x = Inl (@$\nu$@) @$\wedge$@ @$\phi$@[x/@$\nu$@] @$\wedge$@ h' = h}
%   \end{lstlisting}
%   The ${\mathcal{T}_{\mathsf{pure}}}^{\mathsf{exc}}$ is more interesting
%   - it lifts a pure specification to an exception effect, and thus also
%   needs to change the result type to the special result type and
%   specifically qualifies the post-heap to ensure that it always returns
%   a successful result, with the refinement from the pure specification
%   still valid on the successful return; it also enforces the stability
%   of the heap as before.
  
%   \begin{lstlisting}[escapechar=\@,basicstyle=\small\sf,language=ML]
%   @${\mathcal{T}_{\mathsf{state}}}^{\mathsf{stexc}}$@ (@$\mathsf{PE}^{\mathsf{state}}$@ {@$\phi$@}  @$\nu$@ : t { @$\phi'$@}) = @$\mathsf{PE}^{\mathsf{stexc}}$@ {@$\phi$@}  @$\nu$@ : t result { x = Inl (@$\nu$@) @$\wedge$@ @$\phi'$@[x/@$\nu$@]}
%   \end{lstlisting}

% Finally, ${\mathcal{T}_{\mathsf{state}}}^{\mathsf{stexc}}$ lifts a
% non-exceptional effect specification to a specification with both
% state and exception effects. To achieve this, it needs to perform a
% morphism similar to the lifting from \S{pure} to \S{exc}; however, the
% postcondition $\phi'$ now is possibly effectful and heap stability
% does not necessarily hold.

\subsection{Fine-Grained Effects}
\label{sec:discussion}

\label{sec:mod-effect}

Because our language has multiple effects, we use a localized
effect typing system~\citep{param-monad-effect,lightweightmonadicML,multimonfstar}
to reason locally over computations with different effects. Thus, a
type-schema specification for a $\lambda_{sp}$ expression \eip\ has
an effect-label annotation $\el$ capturing the scope of \eip's
effect. Moreover, the pre- and post-conditions for \eip's
specification define relations between its own output and these effects.

For example, an \textit{exception-free} expression is not forced to
mention an exception effect, and a pure transition with no effect
should not mention how the state changes. However, given a single
specification monad, the annotation burden of differentiating
and enumerating these effects is problematic. To illustrate, 
consider the following simple \name program.
\begin{lstlisting}[escapechar=\@,basicstyle=\small\sf,language=ML]
   char 'A' >>= @$\lambda$@ x. char 'B' >>= @$\lambda$@ y. return [x] ++ [y]
\end{lstlisting}
that monadically sequences two character parsers
storing their outputs in two lists and then invoking a pure list
append function (++) to append them. In the absence of
effects, the expected type for append can be defined using a qualifier {\sf len} for list's length as follows\footnote{In the remainder of the paper, we elide
  explicit quantification of \S{h}, \S{h'} and $\nu$ in pre- and post-conditions in specifications to ease
  readability.}:
\begin{lstlisting}[escapechar=\@,basicstyle=\small\sf,language=ML]
   ++ : l1 : @$\alpha$@ list  -> l2 : @$\alpha$@ list -> {v : (@$\alpha$@ list) | len @$\nu$@ = len l1 + len l2}
\end{lstlisting}
However, since the characters parsers have state and exception effects, the
type for the subexpression {\sf\small (char 'A' >>= $\lambda$ x. char
  'B')} synthesized using the typing rule for the bind combinator
would be:
\begin{lstlisting}[escapechar=\@,basicstyle=\small\sf,language=ML]
   (char 'A' >>= $\lambda$ x. char 'B') : @$\mathsf{PE^{state \sqcup exc}}$@ {true}  @$\nu$@ : char result 
                    { ( @$\nu$@ = Inl (v1) =>  x = 'A' @$\wedge$@ v1 = 'B' @$\wedge$@ len (sel (h', inp)) = len (sel h inp) @$-$@ 2  
                                       @$\wedge$@ (@$\nu$@ = Inr (Err) => len (sel (h', inp)) = len (sel h inp) @$-$@ 1) }
\end{lstlisting}
This type makes it impossible to bind this subexpression with the
later subexpression {\sf\small ($\lambda$ y. return [x] ++ [y])}, as
their effect labels do not match; see typing rule ({\sc T-p-bind}). Thus the above expression becomes {\it ill-typed} and cannot be written in \name. 
To allow this very trivial binding, we need to manually strengthen the
type of the append function, so that the type of the application term
{\sf\small ([x] ++ [y])} synthesized using the {\sc T-app} rule
matches the effect label of the subexpression ((char 'A' >>= $\lambda$ x. char 'B')):
\begin{lstlisting}[escapechar=\@,basicstyle=\small\sf,language=ML]
   (++) : l1 : @$\alpha$@ list  -> l2 : @$\alpha$@ list -> @$\mathsf{PE^{state \sqcup exc}}$@ {true}    @$\nu$@ : (@$\alpha$@ list) result       
                                         { (Inl x = (@$\nu$@) => sel h' inp = sel h inp @$\wedge$@ sel h' ist = sel h ist @$\wedge$@ 
                                           len x= len l1 + len l2)  @$\wedge$@ (@$\nu$@ = Inr (Err) => h' = h }
\end{lstlisting}

% Intuitively, this specification lifts the type of the pure function
% to a specification where both {\sf state} and {\sf exc} effects can
% occurs by identifying changes to the heap in both the successful and
% exceptional return case. The append function itself however being pure
% makes no assumptions on the heap, making this type overly constraining
% and leading to a type-checking strategy that would be infeasible for
% all practical purposes.

Unfortunately, manually lifting effect labels for each expression in
this way is impractical. To solve this, we {\it weaken} the typing
rule for monadic bind and allow for {\it effect-local} reasoning to
obviate the need for annotating effect-behavior outside of an
expression's effect scope. To support such reasoning, we need some
additional machinery. First, we require an {\it effect-label} lattice
with an ordering relation ($\leq$). The lattice is presented as a
Hasse diagram in Figure~\ref{fig:lattice} and defines the join
\begin{wrapfigure}{r}{.40\textwidth}
\begin{center}
    \centering
     \includegraphics[width=1.0\linewidth]{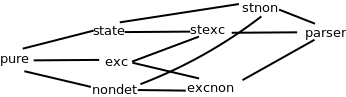}
    
\end{center}
\caption{The effect lattice, a powerset lattice over elements state, exc and nondet. Ordered left-to-right, the bottom element is pure, while the top element is parser. stnon is a shorthand for ({state $\sqcup$ nondet}), stexc = ({state $\sqcup$ exc}), excnon = ({exc $\sqcup$ nondet}) }
\label{fig:lattice}
\end{wrapfigure}
operation ($\sqcup$) on effect-labels used in our typing rules. The
least element of the lattice is the pure effect, while the top element
is the parser effect that subsumes all possible effects expressible in
the language. Second, we define a finite set of specification
monad morphisms (similar to the notion of Haskell monad
transformers~\cite{monadx}) between any two specification monads
parameterized with different effect labels. We represent such
morphisms using a function ${\mathcal{T}_{\el_1}}^{\el_2}$ that
transforms a specification monad parameterized with $\el_1$, to a
specification monad parameterized with $\el_2$ given $\el_1 \leq
\el_2$.

% 
% @${\mathcal{T}_{\mathsf{pure}}}^{\mathsf{nondet}}$@ (pure { v : t | @$\phi$@}) = 
% 	@$\mathsf{PE}^{\mathsf{nondet}}$@ {true}  v : t { @$\phi$@ @$\wedge$@ h' = h}
% 
% 
% @${\mathcal{T}_{\mathsf{el}}}^{\mathsf{el \sqcup nondet}}$@ @$\mathsf{PE}^{\mathsf{el}}$@ {@$\phi$@}  v : t { @$\phi'$@} = 
% 	@$\mathsf{PE}^{\mathsf{el \sqcup nondet}}$@ {@$\phi$@}  v : t  { @$\phi'$@}
% 
% @${\mathcal{T}_{\mathsf{stnon}}}^{\mathsf{parser}}$@ @$\mathsf{PE}^{\mathsf{parser}}$@ {@$\phi$@}  v : t { @$\phi'$@} = 
% 	@$\mathsf{PE}^{\mathsf{parser}}$@ {@$\phi$@}  v : t result  { x = Inl (v) @$\wedge$@ @$\phi'$@}
For example, the ${\mathcal{T}_{\mathsf{pure}}}^{\mathsf{state}}$ lifts a pure
specification to a specification capturing a {\sf state} effect
(i.e. a specification monad parameterized with effect-label {\sf
  state}) with a trivial pre-condition and a post-condition that
captures refinements in the pure specification and establishes
equivalence of the heap in the pre- and post-states~\footnote{The full list of these
morphisms is available in the supplementary material}.
\begin{lstlisting}[escapechar=\@,basicstyle=\small\sf,language=ML]
   @${\mathcal{T}_{\mathsf{pure}}}^{\mathsf{state}}$@ ({ @$\nu$@ : t | @$\phi$@}) = @$\mathsf{PE}^{\mathsf{state}}$@ {true}  @$\nu$@ : t { @$\phi$@ ($\nu$) @$\wedge$@ h' = h}
\end{lstlisting}
Finally, we also define two new typing rules in
Figure~\ref{fig:lifted-typing}; a lifting rule {\sc T-l-id} asserting
that if an expression has a type $\tau$ =
$\mathsf{PE}^{\mathsf{\el_1}}$\{$\phi$\} v : t \{$\phi'$\} under some
$\Gamma$, and $\el_1 \leq \el_2$ then the expression also has a lifted
type
${\mathcal{T}_{\mathsf{\el_1}}}^{\mathsf{\el_2}}$($\tau$). 
Rule {\sc T-l-bind} defines a {\it weakening} of  {\sc
  T-p-bind} (Figure 6 in the main paper) that lifts the type of
each of the arguments to (>>=) into a join effect ({\el}), and then
defines the binding semantics over these lifted types in a fashion similar to the
original rule.
\begin{figure}
\begin{minipage}[t]{.3\textwidth}
\inference[{\sc T-l-id}]{
      \tau = \mathsf{PE}^{\el} \ \{ \phi \}\, \nu\, :\, \S{t}  \{ \phi' \} & 
      \Gamma \vdash\, \eip\, :\, \tau  & 
      \tau' = \mathcal{T}_{\mathsf{\el}}^{\h{\mathsf{\el}}} (\tau)}
	    {  \Gamma \vdash\, \eip\, :\, \tau' }
\end{minipage}\\[10pt]
\hfill
\bigskip	     
\begin{minipage}[t]{.5\textwidth}
\inference[{\sc T-l-bind}]{
      \tau_1 = \mathsf{PE}^{\el_1} \ \{ \phi_1 \}\, \nu\, :\, \S{t}  \{ \phi_{1'} \} & 
      \tau_2 = \mathsf{PE}^{\el_2} \
              \{ \phi_2 \} \ \nu' : \S{t'} \ \{ \phi_{2'} \} &
       \hat{\mathsf{\el}} = \mathsf{\el_1} \sqcup \mathsf{\el_2} \\      
      \Gamma\ \vdash\ p\, :\, \tau_1  &
      \Gamma\ \vdash\ \eip\, :\, (\xp : \tau) \rightarrow \tau_2 \\
      \mathcal{T}_{\mathsf{\el_1}}^{\h{\mathsf{\el}}} (\tau_1) = \mathsf{PE}^{\h{\mathsf{\el}}} \ \{ \h{\phi_1} \}\, \nu\, :\, \h{\S{t}}  \{ \h{\phi_{1'}} \} &  \mathcal{T}_{\mathsf{\el_2}}^{\h{\mathsf{\el}}} (\tau_2) = \mathsf{PE}^{\h{\mathsf{\el}}} \ \{ \h{\phi_2} \}\, \nu'\, :\, \h{\S{t'}}  \{ \h{\phi_{1'}} \}  \\
      \Gamma' = \Gamma, \xp : \tau, \mathsf{h_i : heap} \quad\ \mathsf{h_i}\ \mbox{fresh}}
	    { \begin{array}{@{}c@{}}
		  \Gamma' \vdash p\ \textnormal{>>=}\ \eip : 
		  \mathsf{PE}^{\h{\el}} \ \{ \forall \mathsf{h}.\, \phi_1 \, \mathsf{h}\,  \wedge 
                                               \h{\phi_{1'}} (\mathsf{h}, x, \mathsf{h_i})  => \h{\phi_2} \ \mathsf{h_i} \} \\ 
  			           \mkern20mu\nu'\,:\, \h{\S{t'}}\ \S{result} \\  
				  \mkern125mu\{ \forall \mathsf{h}, \nu', \mathsf{h'}. 
				  ({x} \neq \S{Err} => \nu' = \S{Inl}\ \S{y} \wedge \h{\phi_{1'}} (\mathsf{h}, x, \mathsf{h_i})\\
                                   \mkern385mu\wedge\ \h{\phi_{2'}} ({\mathsf{h_i}, \S{y}, \mathsf{h'}}))\ \wedge\\
                                   \mkern235mu({x} = \S{Err} => \nu' = \S{Inr}\ {\S{Err}} \wedge \h{\phi_{1'}} (\mathsf{h}, x, \mathsf{h_i}))                           \} 
	     \end{array}\hfill}
\end{minipage}\\[10pt]	     
\caption{Typing semantics for lifting local effects}
\label{fig:lifted-typing}
\end{figure}

% These types allow us to use
% \name\ combinators to combine expressions with asymmetrical
% specification monads like the append examples shown above defining
% their state semantics in a fashion similar to how monad transformers
% allow combining asymmetrical monads in Haskell. However, manually
% applying these morphisms is burdensome and infeasible, thus the
% effect-label join operation $\sqcup$ (see
% Figure~\ref{fig:typing-parserss}) automatically inserts these during
% type checking between any two expressions by transforming each to their
% join effect-label.
Revisiting our append example above, the given expression can be now correctly typed. Rather than manually strengthening the type for
append, \name uses these typing rules to do lifting, and synthesize a type for the overall expression (simplified for elucidation). 
\begin{lstlisting}[escapechar=\@,basicstyle=\small\sf,language=ML]
   char 'A' >>= @$\lambda$@ x. char 'B' >>= @$\lambda$@ y. return [x] ++ [y] : 
   @$\mathsf{PE^{state \sqcup exc}}$@ {true} @$\nu$@ : (char list) result 
             { ( @$\nu$@ = Inl (v1) =>  x = 'A' @$\wedge$@ y = 'B' @$\wedge$@ len (sel (h', inp)) = len (sel h inp) @$-$@ 2 @$\wedge$@ len (v1) = 2 
    @$\wedge$@ (@$\nu$@ = Inr (Err) => len (sel (h', inp)) = len (sel h inp) @$-$@ 1) @$\vee$@ len (sel (h', inp)) = len (sel h inp) @$-$@ 2)}
\end{lstlisting}

\end{document}

% --- supplement: supplemental/morpheus-supplement.tex ---

%%
%% The "title" command has an optional parameter,
%% allowing the author to define a "short title" to be used in page headers.
\title{\name: Automated Type-Based Verification of Parser Combinator
  Programs
  }
\subtitle{Supplemental Material with the main submission
  }
%%
%% The "author" command and its associated commands are used to define
%% the authors and their affiliations.
%% Of note is the shared affiliation of the first two authors, and the
%% "authornote" and "authornotemark" commands
%% used to denote shared contribution to the research.
\author{Anonymous}
\authornote{Both authors contributed equally to this research.}
\email{trovato@corporation.com}
\orcid{1234-5678-9012}
\author{Anonymous}
\authornotemark[1]
\email{webmaster@marysville-ohio.com}
\affiliation{%
  \institution{Institute for Clarity in Documentation}
  \streetaddress{P.O. Box 1212}
  \city{Dublin}
  \state{Ohio}
  \country{USA}
  \postcode{43017-6221}
}

%%
%% By default, the full list of authors will be used in the page
%% headers. Often, this list is too long, and will overlap
%% other information printed in the page headers. This command allows
%% the author to define a more concise list
%% of authors' names for this purpose.
\renewcommand{\shortauthors}{Trovato and Tobin, et al.}

%%
%% The abstract is a short summary of the work to be presented in the
%% article.
\begin{abstract}
  
\end{abstract}

%%
%% The code below is generated by the tool at http://dl.acm.org/ccs.cfm.
%% Please copy and paste the code instead of the example below.
%%
\begin{CCSXML}
<ccs2012>
 <concept>
  <concept_id>10010520.10010553.10010562</concept_id>
  <concept_desc>Computer systems organization~Embedded systems</concept_desc>
  <concept_significance>500</concept_significance>
 </concept>
 <concept>
  <concept_id>10010520.10010575.10010755</concept_id>
  <concept_desc>Computer systems organization~Redundancy</concept_desc>
  <concept_significance>300</concept_significance>
 </concept>
 <concept>
  <concept_id>10010520.10010553.10010554</concept_id>
  <concept_desc>Computer systems organization~Robotics</concept_desc>
  <concept_significance>100</concept_significance>
 </concept>
 <concept>
  <concept_id>10003033.10003083.10003095</concept_id>
  <concept_desc>Networks~Network reliability</concept_desc>
  <concept_significance>100</concept_significance>
 </concept>
</ccs2012>
\end{CCSXML}

\ccsdesc[500]{Computer systems organization~Embedded systems}
\ccsdesc[300]{Computer systems organization~Redundancy}
\ccsdesc{Computer systems organization~Robotics}
\ccsdesc[100]{Networks~Network reliability}

%%
%% Keywords. The author(s) should pick words that accurately describe
%% the work being presented. Separate the keywords with commas.
%\keywords{datasets, neural networks, gaze detection, text tagging}

%%
%% This command processes the author and affiliation and title
%% information and builds the first part of the formatted document.
\maketitle
% \section{Supplemental Material for the Main Paper}
% \label{sec:supplementary}

\section{$L$ : The Underlying Language of Transducer Labels}
% A Symbolic Transducer~\cite{transducer} is generally parameterized by a language/theory of \emph {labels}, which provides guards and transition terms in the transducer.
% Traditionally, this label theory is a class of formulas and pure lambda terms, however, this will be too lacking for most parsing tasks which have internal and global states and semantic actions. Thus, in PST we enrich the \emph{labels} to include terms from an effectful, refinement typed language called $L$.
% $L$ is a standard first-order
% imperative language. It supports algebraic datatypes relevant to
% parsing and semantic actions, such as tuples, lists, and trees. Its
% computational core admits a simple form of structural recursion over
% instances of these types. 
% The syntax is similar to a traditional refinement typed lambda calculus~\ref{liquid} with references, with a 
% few important restrictions and additions. For instance, $L$ language does not support higher-order functions, 
% it does not allow general recursive functions and rather provides a primitive recursion using a {\sf fold} operation, etc.
% Expressions in $L$ are strongly typed, using the type and specification language in \name\, given in Section~\ref{sec:morphtype}.
% The core syntax and typing semantics for $L$ is available in the supplemental material.
% (refer Section~\ref{sec:appendix}, Table~\ref{table:Ssigma}).

\begin{table}[ht]
\centering % used for centering table
\small \begin{tabular}{c c l} % centered columns (4 columns)
&x, y, z, r, $\nu$ & $\in$ Program Variables, c $\in$ constants \\
& $\mathsf{r}$ $\in$ Heap Variables & \\
&& first order functions and expressions  \\
fun (f)& := & $\lambda$ x : t .  e \\ 
value (v) & :=  & C $\overline{{\sf x}}$ $\mid$ c \\ 
exp (e) & := &  v $\mid$ {\sf x}\\ 
&& $\mid$  f v $\mid$ let x = e in e  $\mid$ fold f v v \\
&& $\mid$ match v with C $\overline{\mathsf{e}}$ -> e else e \\
&& $\mid$ !r $\mid$ r := v  $\mid$ ref v 
\end{tabular}%
\caption{Core syntax for $L$-expressions} % title of Table
\label{table:Ssigma} % is used to refer this table in the text
\end{table} 

% 
% \begin{table}[h]
% \centering % used for centering table
% \small \begin{tabular}{c c l} % centered columns (4 columns)
% &$\mathsf{x_{l}}$, $\mathsf{y_{l}}$, $\mathsf{z_{l}}$, $\nu$ & $\in$ Program Variables, $\mathsf{c_{l}}$ $\in$ constants \\
% & $\mathsf{r_{l}}$ $\in$ Heap Variables & \\
% && first order functions and expressions  \\
% fun ($\mathsf{f_{l}}$)& := & $\lambda$ $\mathsf{x_l}$ : $\tau$ . $\mathsf{e_l}$ \\ 
% value ($\mathsf{v_l}$) & :=  & C $\overline{\mathsf{x_{l}} : \tau}$  $\mid$  $\mathsf{c_l}$\\ 
% exp ($\mathsf{e_l}$) & := &  $\mathsf{v_l}$ \\ 
% &&  $\mid$$\mathsf{x_l}$  $\mid$  $\mathsf{f_l}$ $\mathsf{v_l}$ $\mid$ let $\mathsf{x_l}$ = $\mathsf{e_l}$ in $\mathsf{e_l}$  $\mid$ fold $\mathsf{f_l}$ $\mathsf{v_l}$ $\mathsf{v_l}$ \\
% && $\mid$ match $\mathsf{v_l}$ with C $\overline{\mathsf{e_l}}$ -> $\mathsf{e_l}$ else $\mathsf{e_l}$ \\
% && $\mid$ !$\mathsf{r_l}$ $\mid$ $\mathsf{r_l}$ := $\mathsf{e_l}$  $\mid$ $\mathsf{r_l}$ := ref $\mathsf{v_l}$ 
% \end{tabular}%
% \caption{Syntax for $L$} % title of Table
% \label{table:Ssigma} % is used to refer this table in the text
% \end{table} 

$L$ has standard runtime semantics, closely resembling a functional language with references like ML without higher-order functions and excludes references as values. Thus references cannot be passed or returned as function argument or return value.

\subsection{Typing Semantics for $L$ expressions}
Fig.~\ref{fig:Ssemantics-lexp} presents typing semantics for $L$-expressions. Each typing judgment is of the form 
$\Sigma \vdash$ {\sf e}  : $\tau$, saying, in a typing environment $\Sigma$, an $L$ expression has a type $\tau$.

\begin{figure}[h]
\begin{flushleft}
\textbf{$L$ expression Typing }\,\fbox{\small
        $\Sigma \vdash$ {\sf e} : $\tau$ 
           
}
\end{flushleft}
\begin{minipage}{0.3\textwidth}
\small \begin{center}
\inference[T-ref]{\Sigma \vdash {\sf v} : t} 
				 {
				\Sigma \vdash :  {\sf ref \ v} :  {\sf ref} \  t
			         }
\end{center}
\end{minipage}
\begin{minipage}{0.5\textwidth}
\small \begin{center}
\inference[T-app]{\Sigma \vdash {\sf f} : ({\sf x} : t \rightarrow \textnormal{state} \ \{ \phi \} \ t1  \{ \phi' \} & \Sigma \vdash {\sf v1} : t } 
				{\Sigma \vdash {\sf f} \ {\sf v}  : \textnormal{state} \ \{ \phi[{\sf v1}/{\sf x}] \} \ t1  \{ \phi'[{\sf v1}/{\sf x}] \} }
\end{center}
\end{minipage}

\bigskip	
\begin{minipage}{0.4\textwidth}
\small \begin{center}
\inference[T-match]{\Sigma \vdash {\sf v} : \tau & \Sigma \vdash {\sf C} {\overline{ {\sf e}}} : \tau \\
		    \Sigma_{true} = \Sigma, {\sf v} :\tau, {\sf v} = {\sf C} {\overline{{\sf e}}} & 
		    \Sigma_{false} = \Sigma, {\sf v}:\tau, {\sf v} \neq {\sf C} {\overline{{\sf e}}} \\
		     \Sigma_{true} \vdash {\sf e1} : \tau_m & \Sigma_{false} \vdash {\sf e2} : \tau_m}
				{\Sigma \vdash \textnormal{ match v with C} {\overline{{\sf e}}} \rightarrow \textnormal{e1 else e2} : \tau_m }
\end{center}
\end{minipage}
\begin{minipage}{0.5\textwidth}
\small \begin{center}
\inference[T-let]{\Sigma \vdash e1 : \textnormal{state} \ \{ \phi1 \} \ t1  \{ \phi1' \} \\ 
				 \Sigma, x : t1 \vdash e2 : \textnormal{state} \ \{ \phi2 \} \ t2 \ \{ \phi2' \}}
				{
				\Sigma \vdash \textnormal{let x = e1 in e2} :  
				  \textnormal{state} \ \{ \mathsf{bind_{pre}} \} \  t2  \{\mathsf{bind_{post}} \}       
			         }
\end{center}
\end{minipage}
\bigskip
\begin{minipage}{0.5\textwidth}

% \bigskip
% \small \begin{center}
% \inference[T-frame]{\Sigma \vdash {\sf e1} : \textnormal{state} \ \{ \phi \} \ t1  \ \{ \phi' \}  \\
% 				  {\sf footprint} (\phi1) \ \cap  \ {\sf footprint} (\mathsf{\phi_{frame}}) = \{ \} }
% 				{
% 				\Sigma \vdash :  {\sf e1} : \textnormal{state} \ \{ \mathsf{\phi_{frame}} \wedge \phi \} \ t1  \ \{ \mathsf{\phi_{frame}} \wedge \phi' \}
% 			         }
% \end{center}
\small \begin{center}
\inference[T-fun]{\Sigma, {\sf x} : \tau \vdash e : \textnormal{state} \ \{ \phi \} \ t1  \{ \phi' \}} 
				{
				\Sigma \vdash \textnormal{$\lambda$ x : t}. {\sf e} : (x : \tau) \rightarrow \textnormal{state} \ \{ \phi \} \ t1  \{ \phi' \} }
\end{center}
\end{minipage}
\begin{minipage}{0.4\textwidth}
\small \begin{center}
\inference[T-deref]{\Sigma \vdash {\sf r} : {\sf ref} t} 
				 {
				\Sigma \vdash :  {\sf ! \ r} :  t
			         }
\end{center}
\end{minipage}
\bigskip
\begin{minipage}{0.4\textwidth}

\small \begin{center}
\inference[T-assign]{\Sigma \vdash {\sf r} : {\sf ref} t & \Sigma \vdash {\sf v} : t} 
				 {
				\Sigma \vdash :  {\sf r := v} : {\sf unit}
			         }
\end{center}
\end{minipage}
\begin{minipage}{0.5\textwidth}
\small \begin{center}
\inference[T-fold]{\Sigma \vdash {\sf f} : ( x: t  \rightarrow ({\sf state} \{ \phi \} \ y : t' \ \{ \phi'\})  
				  \\ \Sigma \vdash {\sf ac} : t' & \Sigma \vdash {\sf l} : t &
				   \Sigma \vdash {\sf Inv} \\
				   \Sigma \vdash \mathsf{Inv_{pre}} \wedge \mathsf{Inv_{ind}} \wedge \mathsf{Inv_{post}}  } 
				 {
				\Sigma \vdash :  {\sf fold \ f \ ac \ l} :  ({\sf state} \{ {\sf Inv \ h} \} \ {\sf v} : t' \ \{ {\sf Inv \ h'}\})
			         }
\end{center}
\end{minipage}

\caption{Typing Semantics for $L$ expressions}
\label{fig:Ssemantics-lexp}
\end{figure}

\pagebreak
\section{\name\ expression Dynamic Semantics}

\begin{figure}[h]
\begin{flushleft}
\fbox{\small
        $(\mathcal{H}; \mathsf{e}) \Rrightarrow (\mathcal{H'}; \mathsf{v}) $ 
           
} 
\bigskip
\end{flushleft}
\begin{minipage}{0.4\textwidth}
\small \begin{center}
\inference[{\sf CS}-$\iota$]{ (\mathcal{H}; {\sf AP}) \Rightarrow^{*} (\mathcal{H'}, {\sf v})}  
				  {(\mathcal{H}; \iota \mathsf{AP}) \Rrightarrow (\mathcal{H'}; {\sf v}) }

\end{center}
\end{minipage}
\hfill
\begin{minipage}{0.5\textwidth}
\small \begin{center}
\inference[{\sf CS-bind}]{ (\mathcal{H}; \mathsf{CP_1}) \Rrightarrow (\mathcal{H'}; {\sf v1}) \\
				   (\mathcal{H'}; \mathsf{CP_2}[\ \mathsf{v1} / \mathsf{x}]\ \Rrightarrow (\mathcal{H''}; {\sf v2})}  
				  {(\mathcal{H}; (\mathsf{CP_1} \textnormal{>>=} (\lambda {\sf x : t}. \mathsf{CP_2}))) \Rrightarrow (\mathcal{H''}; {\sf v2})}

\end{center}
\end{minipage}
\hfill
\bigskip
\begin{minipage}{0.5\textwidth}
\small \begin{center}
\inference[{\sf CS-seq}]{ (\mathcal{H}; \mathsf{CP_1}) \Rrightarrow (\mathcal{H'} ;{\sf v1}) \\
				  (\mathcal{H'}[\ {\sf inp} \mapsto {\sf v1}]\ ;\mathsf{CP_2}) \Rrightarrow (\mathcal{H''}; {\sf v2})} 
				  {(\mathcal{H};(\mathsf{CP_1} \textnormal{o} \mathsf{CP_2})) \Rrightarrow (\mathcal{H''}; {\sf v2})}

\end{center}
\end{minipage}
\begin{minipage}{0.4\textwidth}
\small \begin{center}
\inference[{\sf CS-Err}]{ (\mathcal{H}; \mathsf{CP}) \Rrightarrow (\mathcal{H'}; {\sf Err})  &
			 \odot = \{ \textnormal{o}, \textnormal{>>=} \}}
				  {(\mathcal{H}; (\mathsf{CP} \ \odot \ \mathsf{CP_i})) \Rrightarrow (\mathcal{H'}; {\sf Err})) }

\end{center}
\end{minipage}
\bigskip
\begin{minipage}{0.5\textwidth}
\small \begin{center}
\inference[{\sf CS-L}]{ (\mathcal{H}; \mathsf{CP_1}) \Rrightarrow (\mathcal{H'}; {\sf v1}) }
				  {(\mathcal{H}; (\mathsf{CP_1} \mid \mathsf{CP_2})) \Rrightarrow (\mathcal{H'}; {\sf v1})) }

\end{center}
\end{minipage}
\bigskip
\begin{minipage}{0.4\textwidth}
\small \begin{center}
\inference[{\sf CS-R}]{ (\mathcal{H}; \mathsf{CP_2}) \Rrightarrow (\mathcal{H''}; {\sf v2}) }
				  {(\mathcal{H}; (\mathsf{CP_1} \mid \mathsf{CP_2})) \Rrightarrow (\mathcal{H''}; {\sf v2})) }

\end{center}
\end{minipage}
\begin{minipage}{0.5\textwidth}
\small \begin{center}
\inference[{\sf CS-loop}]{ (\mathcal{H}; \mathsf{CP}) \Rrightarrow (\mathcal{H'}; {\sf v}) & {\sf v} \neq \mathsf{Err} \\
		(\mathcal{H'} ; {\sf f \ v \ b}) \Downarrow (\mathcal{H'}; {\sf b'}) \\ 
		(\mathcal{H'} ; {\sf g \ b'}) \Downarrow (\mathcal{H'}; {\sf false}) \\
		(\mathcal{H'}; (\mathsf{foldT \ {\sf CP} \ f \ g \ b'})) \Rrightarrow (\mathcal{H''}; {\sf v'})}  
		{(\mathcal{H}; (\mathsf{foldT \ {\sf CP} \ f \ g \ b})) \Rrightarrow (\mathcal{H''};{\sf v'})}

\end{center}
\bigskip
\end{minipage}
\begin{minipage}{0.4\textwidth}
\small \begin{center}
\inference[{\sf CS-app}]{ (\mathcal{H}; \mathsf{CP}[\ \mathsf{v} / \mathsf{x}]) \ \Rrightarrow (\mathcal{H'}; {\sf v1})}  
				  {(\mathcal{H}; (\lambda {\sf x : t}. \mathsf{CP}) ({\sf v})) \Rrightarrow (\mathcal{H'}; {\sf v1})}

\end{center}
\end{minipage}

\bigskip
\bigskip
\begin{minipage}{0.5\textwidth}
\small \begin{center}
\inference[{\sf CS-break}]{ (\mathcal{H}; \mathsf{CP}) \Rrightarrow (\mathcal{H'}; {\sf v}) & {\sf v} \neq \mathsf{Err} \\
		      (\mathcal{H'} ; {\sf f \ v \ b}) \Downarrow (\mathcal{H'}; {\sf b'}) \\ 
		      (\mathcal{H'} ; {\sf g \ b'}) \Downarrow (\mathcal{H'}; {\sf true})}       
		    {(\mathcal{H}; (\mathsf{foldT \ {\sf CP} \ f \ g \ b})) \Rrightarrow (\mathcal{H'};{\sf b'})}

\end{center}
\end{minipage}
\begin{minipage}{0.4\textwidth}
\small \begin{center}
\inference[{\sf CS-backtrack}]{ (\mathcal{H}; \mathsf{CP}) \Rrightarrow (\mathcal{H'};{\sf v}) & {\sf v} = \mathsf{Err} }
				{(\mathcal{H}; (\mathsf{foldT \ {\sf CP} \ f \ g \ b})) \Rrightarrow (\mathcal{H'}; {\sf b})}

\end{center}
\end{minipage}
\bigskip
\begin{minipage}{0.4\textwidth}
\small \begin{center}
\inference[{\sf CS-Let}]{ (\mathcal{H}; \mathsf{CP_2}[\ \mathsf{CP_1} / \mathsf{x}]) \ \Rrightarrow (\mathcal{H'}; {\sf v})}  
				  {(\mathcal{H}; {\sf let} \ {\sf x} \ = \ \mathsf{CP_1} \ {\sf in} \ \mathsf{CP_2}  \Rrightarrow (\mathcal{H'}; {\sf v})}

\end{center}
\end{minipage}

\caption{Evaluation rules for Morpheus Combinators, $\Rightarrow^{*}$ represents multi-step evaluation for atomic-parsers (Figure~\ref{fig:Ssemantics-pst}), 
$\Downarrow$ is evaluation relation over $L$ terms}
\label{fig:Ssemantics-cpst}
\end{figure}

\begin{figure*}[t]
\begin{flushleft}
\bigskip
{\bf Expression Typing}\fbox{\small
$\Gamma \vdash$ {\sf e} : $\tau$
           
}
\end{flushleft}
\begin{minipage}{0.40\textwidth}
\small \begin{center}
\inference[{\sf T-Identity}]{\Gamma \vdash {\sf AP} : \textnormal{eff} \ \{ \phi \} \ \tau  \{ \phi' \}}
			{\Gamma \vdash (\iota {\sf AP}) : \textnormal{eff} \{ \phi \} \ \tau \ \{ \phi' \} }
\end{center}
\end{minipage}
\hfill
\begin{minipage}{0.50\textwidth}
\small \begin{center}
\inference[{\sf T-Seq}]{\Gamma \vdash \mathsf{CP_1} : \textnormal{eff} \ \{ \phi1 \} \ \tau_1  \{ \phi1' \} \\ 
	      %& shape (\tau_1) = { \sf List[t]} \\
		  %\Gamma \vdash {\sf inp} : {\sf List[t]} & 
		  \Gamma, {\sf x} : {\tau_1} \vdash \mathsf{CP_2}  : \textnormal{eff} \ \{ \phi2 \} \ \tau_2 \ \{ \phi2' \} }
				  { \begin{array}{@{}c@{}}  
				   \Gamma, {\sf x} : {\tau_1}, {\sf h_i} : {\sf heap} \vdash (\mathsf{CP_1} \textnormal{o} \mathsf{CP_2}) : \\
				  \textnormal{eff} \ \{ \forall {\sf h, x, h_i}. \phi1 \ {\sf h}  \wedge (\phi1' ({\sf h, x, h_i})  => [\phi2]({\sf x}/{\sf inp})) {\sf h_i} \} 
				  \tau_2 \\
				  \{ \forall {\sf h, y, h'}. \phi1' ({\sf h, x, h_i}) \wedge \phi2' ({\sf h_i, y, h'}) \}
				     \end{array}			  
				  } 
\end{center}
\end{minipage}
\hfill
\bigskip
\begin{minipage}{0.40\textwidth}
\small \begin{center}
\inference[{\sf T-Let}]{\Gamma \vdash \mathsf{CP_1} : \tau_1 & \Gamma, {\sf x } : \tau_1 \vdash \tau_2 \ }
			{\Gamma \vdash \textnormal{let x = } \mathsf{CP_1} \ {\sf in} \ \mathsf{CP_2} : \tau_2 }
\end{center}
\end{minipage}
\begin{minipage}{0.50\textwidth}
\small \begin{center}
\inference[{\sf T-Abs}]{\Gamma, {\sf x } : {\sf t} \vdash {\sf CP} : \tau}   
	       		{\Gamma \vdash (\lambda {\sf x} : {\sf t}. \mathsf{CP})  : ({\sf x : t}) \rightarrow \tau } 
\end{center}
\end{minipage}
\bigskip
\begin{minipage}{0.50\textwidth}
\small \begin{center}
\inference[{\sf T-Choice}]{\Gamma \vdash \mathsf{CP_1} : \textnormal{eff} \ \{ \phi1 \} \ \tau  \{ \phi1' \} \\ 
				  \Gamma \vdash \mathsf{CP_2} : \textnormal{eff} \ \{ \phi2 \} \ \tau \ \{ \phi2' \} \\ 
				  \textnormal{eff'} = {\sf eff}  \sqcup {\sf nondet}} 
				 { \begin{array}{@{}c@{}}  
				   \Gamma \vdash (\mathsf{CP_1} \textnormal{<|>}  \mathsf{CP_2}) : \\ 
				   \textnormal{eff'} \ \{ (\phi1 \wedge \phi2) \} \ \tau \ \{ (\phi1'  \lor \phi2') \} 
				     \end{array}			  
				 } 
% 			
\end{center}  
\end{minipage}
\bigskip
\begin{minipage}{0.40\textwidth}
\small \begin{center}
\inference[{\sf T-Fold}**]{  \Gamma \vdash  \mathsf{CP} : \textnormal{eff} \ \{ \phi  \} \ \tau \ \{ \phi' \} & \Gamma \vdash {\sf acc} : \tau_1 \\
		      \Gamma \vdash {\sf f} : \tau_1 \rightarrow \tau \rightarrow {\sf pure} \ \{ \nu : {\sf shape} (\tau_1) | \phi_{\sf f} \}   \\
	             \Gamma \vdash {\sf g} :  \tau_1 \rightarrow {\sf pure} \ {\sf bool} \\   
	             \Gamma_{\mathsf{e}} =  \mathsf{Inv_{start}} \wedge \mathsf{Inv_{ind}} \wedge \mathsf{Inv_{break}} }
		    {\Gamma, \Gamma_{\mathsf{e}} \vdash (\mathsf{foldT} \ {\sf CP} \ {\sf f} \ {\sf g} \ {\sf acc}) :  \textnormal{eff} \ \{\sf P \} \ \tau_1 \ \{ \sf Q \} }
\end{center}  
\hfill
\end{minipage}
\small \begin{center}
\inference[{\sf T-Bind}]{\Gamma \vdash \mathsf{CP_1} : \textnormal{eff} \ \{ \phi1 \} \ \tau_1  \{ \phi1' \}  &
				  \Gamma, {\sf x} : {\tau_1} \vdash \mathsf{CP_2}  : \textnormal{eff} \ \{ \phi2 \} \ \tau_2 \ \{ \phi2' \}  & \Gamma' = \Gamma, {\sf x} : {\tau_1}, {\sf h_i : heap}}
			    { \begin{array}{@{}c@{}}  
				  \Gamma' \vdash (\mathsf{CP_1} \textnormal{>>=} \lambda {\sf x} : \tau_1. \mathsf{CP_2}) : 
				  \textnormal{eff} \ \{ \forall {\sf h}. \phi1 \ {\sf h}  \wedge \phi1' ({\sf h, x, h_i})  => \phi2 \ {\sf h_i} \}
				  \ \tau_2 \  
				  \{ \forall {\sf h, y, h'}. \phi1' ({\sf h, x, h_i}) \wedge \phi2' ({\sf h_i, y, h'}) \}
				     \end{array}			  
				 } 
% 				{
% 				\Gamma \vdash T1 \textnormal{>>=} \lambda x. T2 :  
% 				   \textnormal{eff} \{ \mathsf{CBind_{pre}} \} \ {\sf t2} \ \{ \mathsf{CBind_{post}}\} }
\end{center}
\bigskip
\small \begin{center}
\inference[{\sf T-Sub}]{\Gamma \vdash \mathsf{e_1} : \tau_1 & \Gamma \vdash \tau_1 <: \tau_2 \ }
			{\Gamma \vdash \mathsf{e_1} : \tau_2 }
\end{center}

\hfill
\begin{flushleft}
\begin{tabbing}
\small {**}
$\mathsf{Inv_{start}}$ \ {\sf =}  P \ $\mathsf{h_0}$ \ => ({\sf Inv $\mathsf{h_0}$ acc})  \\ 
\small $\mathsf{Inv_{ind}}$ \ {\sf =}  $\forall$ {\sf h, x, y, y' h'}. ({\sf Inv} \ {\sf h} \ {\sf y}) $\wedge$ (g y = false) {\sf =>} \ {\sf $\phi$ h} \ {\sf =>} (( $\phi'$  {\sf  h \ x \ h'} $\wedge$ $\phi_{\sf f}$ {\sf y' y x}) {\sf =>} ({\sf Inv} \ {\sf h' \ y'}))  \\
\small $\mathsf{Inv_{break}}$ \ {\sf =}  $\forall$ {\sf h, y}. (g y = true) $\wedge$ {\sf Inv} \ h \ y {\sf =>} Q {\sf h \ y}\\ 

\end{tabbing}

\end{flushleft}

\begin{flushleft}
\bigskip
{\bf Subtyping}\fbox{\small
$\Gamma \vdash$ $\tau_1$ <: $\tau_2$
           
}
\end{flushleft}
\begin{minipage}{0.40\textwidth}
\small \begin{center}
 \inference[{\sf T-Sub-Base}]{ \Gamma \vdash \{ \nu : {\sf t} \mid \phi_1 \} & \Gamma \vdash \{ \nu : {\sf t} \mid \phi_2 \} \\
			  \Gamma \vDash \phi_1 => \phi_2 }
				{ 
				\Gamma \vdash \{ \nu : {\sf t} \mid \phi_1 \} <: \{ \nu : {\sf t} \mid \phi_2
				  } 
 
\end{center}  
\end{minipage}
\begin{minipage}{0.50\textwidth}
\small \begin{center}
 \inference[{\sf T-Sub-Arrow}]{ \Gamma \vdash \tau_{21} < \tau_{11} & \Gamma \vdash \tau_{12} <: \tau_{22}}
			  { \Gamma \vdash ({\sf x }: \tau_{11}) \rightarrow \tau_{12}  <: ({\sf x} : \tau_{21}) \rightarrow \tau_{22} } 
 
\end{center}  
\end{minipage}

\bigskip
\begin{minipage}{1.0\textwidth}
\small \begin{center}
 \inference[{\sf T-Sub-Comp}]{ \Gamma \vDash \phi_2 => \phi_1  &    \Gamma \vdash \tau_1 <: \tau_2 &
				    \Gamma \vdash \mathsf{eff_1} \sqsubseteq \mathsf{eff_2} &   \Gamma, \phi_2 \vDash (\phi_{1'} => \phi_{2'})}
				{\begin{array}{@{}c@{}}  
				    \Gamma  \vdash  
				      \mathsf{eff_1} \ \{ \phi_1 \} \ \tau_1 \ \{ \phi_{1'} \} 
					 <:
				      \mathsf{eff_{2}} \ \{ \phi_2 \} \ \tau_2 \ \{ \phi_{2'} \} 
				    \end{array}			  
				  } 
  
% 				    {\Gamma  \vdash  
% 				      \textnormal{eff} \ \{ \phi_1 \} \ {\sf t1} \ \{ \phi_{1'} \} 
% 					<: 
% 				      \mathsf{eff_{ann}} \ \{ \mathsf{\phi_{ann}} \} \ \mathsf{t_{ann}} \ \{ \mathsf{\phi_{ann'}} \} 
% 				      } 

\end{center}  
\end{minipage}
\caption{Static Semantics for Morpheus Combinators.}
\label{fig:Styping-transducer2}
\end{figure*}

\clearpage
  
\section{PST : Dynamic and Typing Semantics}
\subsection{PST and Path Semantics}
\begin{figure}[h]
\begin{flushleft}
\fbox{\small
        $(\mathcal{H}, q); c \Rightarrow (\mathcal{H}, q') \mid (\mathcal{H}, q); c \Rightarrow (\mathcal{H}, \mathsf{v}) $ 
           
}       
\end{flushleft}
\bigskip
\begin{minipage}{0.4\textwidth}
\small \begin{center}
\hspace{5ex} \inference[S-P-$\mathcal{I}$]{ (\mathcal{H}; \mathcal{I}) \Downarrow (\mathcal{H'}; {\sf v})}  
				  {(\mathcal{H}, q_{0}); \mathcal{I} \Rightarrow (\mathcal{H'}, q_{0}) }

\end{center}
\end{minipage}
\bigskip
\begin{minipage}{0.4\textwidth}
\small \begin{center}
\inference[{\sf S-P-Ind}]{  (\mathcal{H}, p); \mathsf{p_{k-1}}^{(p, q)} \Rightarrow (\mathcal{H'}, q) & 
				      \delta = (q \xrightarrow[]{ \mathsf{\phi} \mathbf{/} \mathsf{f} } r) \\
				      \llbracket \mathsf{\phi} \rrbracket \mathcal{H'} = true  	&(q \xrightarrow[]{ \mathsf{\phi'} / \mathsf{f'} } r') \in $R$ \setminus \delta  \\  
					  \llbracket \mathsf{\phi'} \rrbracket \mathcal{H'} = false &   (\mathcal{H'}; \mathsf{f}) \Downarrow (\mathcal{H''}; \_)
					  }  
				  {(\mathcal{H}, p);\mathsf{p_k}^{(p, r)} \Rightarrow (\mathcal{H''}, r) }

\end{center}
\end{minipage}
\bigskip
\begin{minipage}{0.4\textwidth}
\small \begin{center}
\hspace{5ex} \inference[S-P-Base]{ \delta = (p \xrightarrow[]{ \mathsf{\phi} \mathbf{/} \mathsf{f} } q) \\
					\llbracket \mathsf{\phi} \rrbracket \mathcal{H} = true &  (p \xrightarrow[]{ \mathsf{\phi'} / \mathsf{f'} } q') \in $R$ \setminus \delta  \\  
					  \llbracket \mathsf{\phi'} \rrbracket \mathcal{H} = false \\
					  (\mathcal{H}; \mathsf{f}) \Downarrow (\mathcal{H'}; \_)
					  }  
				  {(\mathcal{H}, p);\delta \Rightarrow (\mathcal{H'}, q) }

\end{center}
\end{minipage}
\begin{minipage}{0.4\textwidth}
\small \begin{center}
\inference[S-P-Err]{ \delta = (p \xrightarrow[]{ \mathsf{\phi} / \mathsf{f} } err) & ( p \xrightarrow[]{ \mathsf{\phi'} / \mathsf{f'} } q') \in R \setminus \delta \\
					\llbracket \mathsf{\phi} \rrbracket \mathcal{H} = true & ( p \xrightarrow[]{ \mathsf{\phi'} / \mathsf{f'} } q') \  
					  \llbracket \mathsf{\phi'} \rrbracket \mathcal{H} = false \\
					  (\mathcal{H}; \mathsf{f}) \Downarrow (\mathcal{H'}; \_)
					  }  
				  {(\mathcal{H}, p); \delta \Rightarrow (\mathcal{H'}, {\sf Err}) }

\end{center}
\end{minipage}

\bigskip

\begin{minipage}{0.4\textwidth}
\small \begin{center}
\inference[S-P-Loop]{ \delta = (p \xrightarrow[]{ \mathsf{\phi} \mathbf{/} \mathsf{f} } p) \\
					\llbracket \mathsf{\phi} \rrbracket \mathcal{H} = true \\  
					(p \xrightarrow[]{ \mathsf{\phi'} / \mathsf{f'} } q') \in $R$ \setminus \delta  \\  
					  \llbracket \mathsf{\phi'} \rrbracket \mathcal{H} = false \\
					  (\mathcal{H}; \mathsf{f}) \Downarrow (\mathcal{H'}; \_)
					  }  
				  {(\mathcal{H}, p);\delta \Rightarrow (\mathcal{H'}, p) }

\end{center}
\end{minipage}
\begin{minipage}{0.5\textwidth}
\small \begin{center}
\inference[S-P-Final]{ \delta = (p \xrightarrow[]{ \mathsf{\phi} / \mathsf{f} } q) & (q  \in  F)& \mathcal{A} (q) = \mathsf{e_{q}}) \\
					( p \xrightarrow[]{ \mathsf{\phi'} / \mathsf{f'} } q') \in R \setminus \delta \\
					\llbracket \mathsf{\phi} \rrbracket \mathcal{H} = true & ( p \xrightarrow[]{ \mathsf{\phi'} / \mathsf{f'} } q') \  
					  \llbracket \mathsf{\phi'} \rrbracket \mathcal{H} = false \\
					  (\mathcal{H}; \mathsf{f}) \Downarrow (\mathcal{H'}; \_) & (\mathcal{H'}; \mathsf{e_q}) \Downarrow (\mathcal{H''}; {\sf v})
					  }  
				  {(\mathcal{H}, p);\delta \Rightarrow (\mathcal{H''},  \mathsf{v}) }

\end{center}
\end{minipage}

% \noindent\fbox{\small
%         $\mathsf{T} \Rightarrow_{\alpha} \mathsf{T'}$ 
%            
% }
% \bigskip
% \small \begin{center}
% \hspace{5ex} \inference[C-alpha]{ }      
% 	    			  {{ \{ \lambda x : t. T \}}_{\mathsf{x = v}} \Rightarrow_{\alpha} [\ \mathsf{x} \mapsto \mathsf{v} ]\ T} 
% 
% 
% \end{center}

\caption{Evaluation rules for PST}
\label{fig:Ssemantics-pst}
\end{figure}
\pagebreak

\subsection{Complete Typing Semantics for PST}
\begin{figure}[h]
\begin{flushleft}

\textbf{Paths Typing}\,\fbox{\small
        $\Gamma \vdash$ $\mathsf{p_k}^{(p, q)}$ : $\tau$ 
           
}

\end{flushleft}

\small\begin{center}
\inference[{\sf T-P-Base}] {\delta = (p, \mathsf{\phi_g}, \mathsf{f}, q) & q \neq err & q \notin F &
				    \Sigma (\mathsf{f}) =  (({\sf i} : {\sf t})  \rightarrow \tau) \\
				    \tau = \textnormal{eff} \ \{\phi \} \tau' \{ \phi' \} &  
				    \Gamma_{{\sf ext}} =  \Gamma, i, (\mathsf{\phi_g} \ {\sf i} \ {\sf h})}  
				  {\Gamma_{{\sf ext}} \vdash     
				    \delta : \textnormal{eff} \ \{\phi \} \tau' \{ \phi' \} }

\end{center} 
\bigskip
\small \begin{center}

\inference[{\sf T-P-Ind}]{ \Gamma_1 \vdash \mathsf{p_{k-1}}^{(p,q)} :  (\textnormal{eff} \ \{\phi_{1}\} \  \tau_1 \ \{ \phi_{1'}\}) \ \ \
 				    {\sf term} (\mathsf{p_{k-1}}) = {\sf start} (\mathsf{p_{1}}) \\
				    \Gamma_2 \vdash \mathsf{p_{1}}^{(q,r)} : (\textnormal{eff} \ \{\phi_{2}\} \ \tau_2 \ \{ \phi_{2'}\}) &  
				     & \Gamma = \Gamma_1 $@$ \Gamma_2
				   } 
				  { \Gamma  \vdash  \mathsf{p_{k}}^{(p,r)} : \textnormal{eff} \ \{ \mathsf{path_{pre}} \} \ \tau_2 \ \{\mathsf{path_{post}} \} }

\end{center}
\bigskip

\hfill
\small \begin{center}
\inference[{\sf T-P-Error}]{ \delta = (p, \mathsf{\phi_g}, \mathsf{f}, {\sf err}) & \Sigma (\mathsf{f}) =  (({\sf i} : {\sf t})  \rightarrow \tau) \\
				    \tau = \textnormal{eff} \ \{\phi \} \ \mathsf{\tau'} \ \{ \phi' \} &  
				    \Gamma_{{\sf ext}} =  \Gamma, i, (\mathsf{\phi_g} \ {\sf i} \ {\sf h})\\
				    \textnormal{eff'} = \textnormal{eff} \sqcup \textnormal{exc} & 
				    \phi'^{*} = \forall \ {\sf h} \ {\sf v} \ {\sf h'}. ({\sf v  =  Err}) \wedge \phi'
				    }  
				  {\Gamma_{{\sf ext}} \vdash     
				     \mathsf{p_{1}}^{(p,{\sf err})} : \textnormal{eff'} \ \{\phi \} \ {\sf result \ \tau'} \ \{ \phi'^{*} \} }

\end{center}

\bigskip
\hfill
\small \begin{center}
\inference[{\sf T-P-Final}]{ \delta = (p, \mathsf{\phi_g}, \mathsf{f}, {\sf q}) & 
				    ({\sf q} \in F \setminus {\sf err}) & \Sigma (\mathsf{f}) =  (({\sf i} : {\sf t})  \rightarrow \tau) \\
				    \tau = \textnormal{eff} \ \{\phi \} \ \tau_1 \ \{ \phi' \} &  
				    \Gamma_{{\sf ext}} =  \Gamma, i, (\mathsf{\phi_g} \ {\sf i} \ {\sf h})\\
				    \mathcal{A}(\sf q) = \mathsf{e_q} & \Sigma (\mathsf{e_q}) =  \textnormal{eff} \ \{\phi1 \} \ \tau_2 \ \{ \phi1' \}				    
				    }  
				  {\Gamma_{{\sf ext}} \vdash     
				     \mathsf{p_{1}}^{(p,{\sf q } \in F)} : \textnormal{eff} \ \{ \mathsf{act_{pre}} \} \ \tau_2 \ \{ \mathsf{act_{post}} \}}

\end{center} 

\bigskip
\small \begin{center}
%        
%       
% \inference[{\sf T-P-Loop}]{ \Gamma \vdash \mathsf{p_{k-1}}^{(p,q)} :  (\textnormal{eff} \ \{\phi_{1}\} \  \tau_1 \ \{ \phi_{1'}\}) \ \ \
%  				      \delta = (q, \mathsf{\phi_g}, \mathsf{f}, q) \\
% 				    \Sigma (\mathsf{f}) =  (({\sf i} : {\sf t})  \rightarrow \tau) &
% 				    \tau = \textnormal{eff} \ \{\phi_{2} \} \tau_2 \{ \phi_{2}' \}
% 				      } 
% 				  { \Gamma  \vdash  \mathsf{p_{k}}^{(p,r)} : \textnormal{eff} \ \{ \mathsf{loop_{pre}} \} \ \tau_2 \ \{\mathsf{loop_{post}} \} } 

\end{center}

%% \hfill
%% \small \begin{center}
%% \inference[{\sf T-P-Error}]{ \delta = (p, \mathsf{\phi_g}, \mathsf{f}, {\sf err}) & \Sigma (\mathsf{f}) =  (({\sf i} : {\sf t})  \rightarrow \tau) \\
%% 				    \tau = \textnormal{eff} \ \{\phi \} \ {\sf t} \ \{ \phi' \} &  \Gamma_{{\sf ext}} =  \Gamma, i, (\mathsf{\phi_g} \ {\sf i} \ {\sf h})\\
%% 				    \textnormal{eff'} = \textnormal{eff} \sqcup \textnormal{exc} & 
%% 				    \phi'^{*} = \forall \ {\sf h} \ {\sf v} \ {\sf h'}. ({\sf v  =  Err}) \wedge \phi'
%% 				    }  
%% 				  {\Gamma_{{\sf ext}} \vdash     
%% 				     \mathsf{p_{1}}^{(p,{\sf err})} : \textnormal{eff'} \ \{\phi \} \ {\sf result \ t} \ \{ \phi'^{*} \} } 

%% \end{center} 
%% \bigskip

\small \begin{center}
 \inference[{\sf T-P-Loop}]{ \delta = (p, \mathsf{\phi_g}, {\sf f}, p) & \Sigma (\mathsf{f}) =  (({\sf i} : {\sf t})  \rightarrow \tau) \\
				    \tau = \textnormal{eff} \ \{\phi \} \ {\sf t} \ \{ \phi' \} &  \Gamma_{{\sf ext}} =  \Gamma, i, (\mathsf{\phi_g} \ {\sf i} \ {\sf h})\\
				    \Gamma_{{\sf ext}} \vdash \forall {\sf h \_ h'}. ({\sf Inv \ h} \wedge \mathsf{\phi_g} {\sf h} => \phi {\sf h}) => (\phi'\ {\sf h} \_ {\sf h'} => {\sf Inv \ h'})} 
				   { \Gamma_{{\sf ext}}  \vdash  
				      \mathsf{p_{1}}^{(p,p)} : \textnormal{eff} \ \{{\sf Inv} \ {\sf h} \} \ {\sf t} \ \{ {\sf Inv} \ {\sf h'} \}   }
				    
\end{center}

\bigskip
\begin{flushleft}
\textbf{PST Path Summation}\,\fbox{\small
        $\Gamma \vdash$ PST :  $\tau$ 
           
}
 
\end{flushleft}

\small \begin{center}
 \inference[T-P-Sum]{ paths ({\sf AP}) = \{  \mathsf{p_{k1}}^{(q0, (q \in F))}, \mathsf{p_{k2}}^{(q0, (r \in F))} \} \\
		      \Gamma_{1} \vdash \mathsf{p_{k1}}^{(q0, (q \in F))} : \textnormal{eff} \ \{ \phi1 \} \ \tau  \{ \phi1' \} \\ 
		       \Gamma_{2} \vdash \mathsf{p_{k2}}^{(q0, (r \in F))} : \textnormal{eff} \ \{ \phi2 \} \ \tau  \{ \phi2' \} 
			   }
		{ \Gamma_1 $@$ \Gamma_2 \vdash  
				      {\sf AP} : \textnormal{eff} \{ (\phi1 \wedge \phi2) \} \ \tau \ \{ (\phi1'  \lor \phi2') \} }
				    
\end{center}

% 
% \begin{lstlisting}[escapechar=\@,basicstyle=\small\sf,breaklines=true]
%   @$\mathsf{loop_{pre}}$@ = { @$\forall$@ h . @$\phi1$@ h @$\wedge$@ (@$\phi1$@' @$\mathsf{h_i}$@ => Inv @$\mathsf{h_i}$@) 
%   @$\mathsf{loop_{post}}$@ =  { @$\forall$@ h v h'. Inv h' @$\wedge$@ @$\neg \phi_g h'$@ @$\wedge$@  @$\mathsf{\phi_{inv}}$@
%   @$\mathsf{{Inv_{ind}}}$@ = { @$\forall$@ h h'. (Inv h @$\wedge$@ @$\phi_g h$@ => @$\phi$@ h) => (@$\phi'$@ h h' => Inv h') }
% 	      
% \end{lstlisting}

\caption{Typing rules over PST paths}
\label{fig:typing-wff-pst}
\end{figure}

\pagebreak
\clearpage

\section{Details of Soundness Theorems and Proofs}

\subsection{Well-Typed Heap and Configuration-Typing}
We extend the typing judgements to a configuration typing $\Gamma \vdash$ ($\mathcal{H}$;{\sf e}) : $\tau$ using an extended notion of store typing as follows: 
\begin{definition}[Heap Interpretation Function]
A heap $\mathcal{H}$ in our evaluation rule is a list of references {\sf r1, r2,...} to values {\sf v}. 
To relate it to the logical heaps we use in our specification language {\sf h, h', etc.} we define the following interpretation function:
\[ \| . \|  = empty  \]
\[\| \mathcal{H}, ({\sf r} \mapsto {\sf v}) \| = {\sf update} \| \mathcal{H} \| \ {\sf  r} \ {\sf v}  \]
\end{definition}

Using this definition of interpretation we defined the notion of semantic entailment ($\models$) over a heap $\mathcal{H}$ in an environment $\Gamma$.
\begin{definition}[Semantic entailment]
$\Gamma \models \phi$ ($\mathcal{H}$) iff $\Gamma \vdash$ $\| \mathcal{H} \|$ $\implies$ $\phi$.  
\end{definition}

We also define a well-typed (extended notion of well-typed store) heap by using the standard {\it well-typed} store definition:
\begin{definition}[Well-Typed Heap]
 A heap $\mathcal{H}$ is well-typed with respect to a typing context $\Gamma$, given by the judgement $\Gamma \vdash \mathcal{H}$, iff for all r, such that  
 (r $\mapsto$ {\sf v}) $\in \mathcal{H}$ and $\| \mathcal{H} \|$ = h and $\Gamma \vdash$ (dom h r) $\wedge$ (sel h r v).
\end{definition}

\begin{definition}[Configuration Typing]
A configuration typing $\Gamma \vdash$ ($\mathcal{H}$;{\sf e}) : $\tau$ is defined by 
  $\Gamma \vdash$ $\mathcal{H}$ and $\Gamma \vdash$ {\sf e} : $\tau$.
\end{definition}

Using the above definitions, we define the soundness theorems for $L$-expressions, \name\ and PST Typing rules as follows:
We present a detailed proof for \name\ expression typing soundness as well as PST typing soundness and only formally state soundness for $L$-expression typing

\begin{lemma}[Preservation : $L$-expression]{\label{thm:preservation}}
If  $\Sigma \vdash$ ($\mathcal{H}$; {\sf e}) : $\tau$ and ($\mathcal{H}$ ;{\sf e}) $\Downarrow$ ($\mathcal{H'}$; {\sf e'}) 
then, for some $\Sigma$', such that $\Sigma \subseteq \Sigma'$, $\Sigma' \vdash$ ($\mathcal{H'}$; {\sf e'}) : $\tau$ 
\end{lemma}
\begin{proof}
Proof. The proof is based on induction over standard evaluation rules for $L$ and inversion of the corresponding typing rules along with a standard substitution or reduction lemma.
\end{proof}

\begin{lemma}[Progress : $L$ expression]{\label{thm:progress}}
If $\Sigma \vdash$ ($\mathcal{H}$;{\sf e}) : $\tau$, then either a) {\sf e} is a $L$-value or a $L$-function, 
 OR b) there exists {\sf e'}, such that ($\mathcal{H}$; {\sf e}) $\Downarrow$ ($\mathcal{H'}$; {\sf e'}).
\end{lemma}
\begin{proof}
 The proof based is standard induction over typing derivations for $L$-expressions.
\end{proof}

\begin{theorem}[Soundness : $L$ expression]{\label{thm:l-soundness}}
Given, $\Sigma \vdash$ {\sf e} : {\sf eff} \{$\phi$\} $\tau$ \{$\phi'$\}. 
 and  given there exists some $\mathcal{H}$, such that $\Gamma \models \phi (\mathcal{H}$). \\
 Then 
 \begin{itemize}
 \item $\exists \mathcal{H'}$. ($\mathcal{H}$; {\sf e}) $\Rrightarrow$ ($\mathcal{H'}$; {\sf v})
 \item $\Sigma \vdash $ {\sf v} : $\tau$
 \item $\Sigma, \phi$ ($\mathcal{H}$) $\models$ $\phi'$ ($\mathcal{H}$) {\sf v} ($\mathcal{H'}$)
 \end{itemize}
 
 \end{theorem}
\begin{proof}
 Using progress and preservation for $L$-expressions
\end{proof}

% 
% \AM{Is this sufficient? I feel we need a proof of a kind similar to what have for path types
% below for Morpheus expression type as well, The proof will be straight-forward induction on
% the typing rules for Morpheus expressions. 
% Or in other words, given that our evaluation rules for \name\ expressions are big-step, does preservation+progress proves soundness? }
% \begin{theorem}[Preservation : \name\ expression]{\label{thm:preservation}}
% If  $\Gamma \vdash$ ($\mathcal{H}$; {\sf e}) : $\tau$ and ($\mathcal{H}$ ;{\sf e}) $\Rrightarrow$ ($\mathcal{H'}$; {\sf e'}) 
% then, for some $\Gamma$', such that $\Gamma \subseteq \Gamma'$, $\Gamma' \vdash$ ($\mathcal{H'}$; {\sf e'}) : $\tau$ 
% % \begin{enumerate}
% %  \item If ($\mathcal{H}$ ;{\sf e}) $\Downarrow$ ($\mathcal{H'}$; {\sf e'}) and $\Gamma \vdash$ $\mathcal{H}$ and 
% %  $\Gamma \vdash$ ($\mathcal{H}$; {\sf e}) : $\tau$, then 
% %  for some $\Gamma$', such that $\Gamma \subseteq \Gamma$', $\Gamma' \vdash$ ($\mathcal{H'}$; {\sf e'}) : $\tau$ 
% %  \item If ($\mathcal{H}$ ;{\sf e}) $\Rrightarrow$ ($\mathcal{H'}$; {\sf e'}) and $\Gamma \vdash$ $\mathcal{H}$ and 
% %  $\Gamma \vdash$ ($\mathcal{H}$; {\sf e}) : $\tau$, then 
% %  for some $\Gamma$', such that $\Gamma \subseteq \Gamma$', $\Gamma' \vdash$ ($\mathcal{H'}$; {\sf e'}) : $\tau$ 
% %  \end{enumerate}
% %  
% \end{theorem}
% \begin{proof}
% Proof. The proof is based on induction over evaluation rules and inversion of the corresponding typing rules along with a standard substitution or reduction lemma.
% \begin{itemize}
%  \item Case : CS-$\iota$ i.e. ({\sf e} = $\iota$ AP)
%  \begin{enumerate}
%   
%   \item Given $\Gamma \vdash$ $\iota$ AP : $\tau$
%   \item By uniqueness of typing rules, T-Identity applies
%   \item Inverting the typing rule implies $\Gamma$ AP : $\tau$.
%   \item Evaluation rule CS-$\iota$, implies  $(\mathcal{H}; {\sf AP}) \Rightarrow^{*} (\mathcal{H'}, {\sf v})$.
%   \item Using the soundness of the Path typing over $\Rightarrow$ we get $\Gamma \vdash (\mathcal{H'}, {\sf v})$ : $\tau$
%   \item 5 implies preservation
%  \end{enumerate}
%  \item Case : CS-bind i.e. ({\sf e} = $\mathsf{CP_1} \textnormal{>>=} (\lambda {\sf x : t}. \mathsf{CP_2})$) 
%  \begin{enumerate}
%   \item Given $\Gamma \vdash$ $\mathsf{CP_1} \textnormal{>>=} (\lambda {\sf x : t}. \mathsf{CP_2})$ : $\tau_2$
%   \item By uniqueness of typing rules and Inverting the Typing rule (T-Bind), we get 
%     $\exists \Gamma_0$, $\Gamma \vdash$ $\mathsf{CP_1}$ : $\tau_1$ and $\Gamma$, x : $\tau_1$ $\vdash$ $\tau_2$.
%   \item Using IH, the type of $\mathsf{CP_1}$ is preserved, thus 
%     $\Gamma_0 \vdash (\mathcal{H'}; {\sf v1})$ : $\tau_1$ 
%   \item By substitution preservation, $\Gamma_0$, x : $\tau_1$ ($\mathcal{H''}$; {\sf v2}) : $\tau_2$
%   \item 4 implies preservation
%  \end{enumerate}
% 
%  \item Case : CS-seq : Similar to the CS-Bind case
%  \item Case : CS-Err : (\TODO{Add an error binding and sequence typing})
%  \item Case : Loop : 
%  \item Case : app : 
% \end{itemize}
% 
% \end{proof}

% 
% \begin{theorem}[Progress : \name\ expression]{\label{thm:progress}}
% The progress theorem for a \name\ expression {\sf e}.
%  \item $\Gamma \vdash$ ($\mathcal{H}$;{\sf e}) : $\tau$, then either a) {\sf e} is a value or a $\lambda$-abstraction over a parser, 
%  OR b) there exists {\sf e'}, such that ($\mathcal{H}$; {\sf e}) $\Rrightarrow$ ($\mathcal{H'}$; {\sf e'}).
% \end{theorem}
% \begin{proof}
%  The proof based on induction over typing derivations for \name\ expressions.
% \end{proof}
% 

\begin{theorem}[Soundness \name\ Typing]{\label{lem:morpheussound}}
 Given, $\Gamma \vdash$ {\sf e} : {\sf eff} \{$\phi$\} $\tau$ \{$\phi'$\}. 
 and  given there exists some $\mathcal{H}$, such that $\Gamma \models \phi (\mathcal{H}$). \\
 Then 
 \begin{itemize}
 \item $\exists \mathcal{H'}$. ($\mathcal{H}$; {\sf e}) $\Rrightarrow$ ($\mathcal{H'}$; {\sf v})
 \item $\Gamma \vdash $ {\sf v} : $\tau$
 \item $\Gamma, \phi$ ($\mathcal{H}$) $\models$ $\phi'$ ($\mathcal{H}$) {\sf v} ($\mathcal{H'}$)
\end{itemize}
\end{theorem}
\begin{proof}
 The proof is inductively defined over each \name\ expression typing rule, giving the following cases:
 \begin{itemize}
 \item Case : T-Identity
 \begin{enumerate}
  \item Given $\Gamma \vdash$ $\iota$ AP : $\tau$
  \item Inverting the typing rule implies $\Gamma \vdash$ AP : $\tau$.
  \item Using the soundness result for the PST typing, $\exists$ $\mathcal{H'}$, such that 
  ($\mathcal{H}$; {\sf e}) $\Rightarrow^{*}$ ($\mathcal{H'}$; {\sf v}) and $\Gamma \vdash $ {\sf v} : $\tau$
  
  \item From (3) Evaluation rule CS-$\iota$ is applicable, $(\mathcal{H}; {\sf AP}) \Rrightarrow (\mathcal{H'}, {\sf v})$.
  \item Using (3) and (4) soundness is implied
  \end{enumerate}
 \item Case : T-Bind  
 \begin{enumerate}
  \item Given $\Gamma \vdash$ $\mathsf{CP_1} \textnormal{>>=} (\lambda {\sf x : t}. \mathsf{CP_2})$ : eff $\phi_{pre} \ \tau_2 \ \phi_{post}$.
  \item By Inverting the Typing rule (T-Bind), we get 
    $\exists \Gamma_0$, $\Gamma_0 \vdash$ $\mathsf{CP_1}$ : eff $\phi_1$ $\tau_1$ $\phi_1'$ and $\Gamma_0$, x : $\tau_1$ $\vdash$ $\mathsf{CP_2}$ : $\phi_1$ $\tau_1$ $\phi_1'$.
  \item Using IH, we can assume typing soundness for $\mathsf{CP_1}$  and $\mathsf{CP_2}$, this gives:
    \begin{itemize}
     \item $\exists \mathcal{H'}$, $(\mathcal{H}; \mathsf{CP_1}) \Rrightarrow (\mathcal{H'}, {\sf v1})$. 
     \item $\exists \mathcal{H''}$ $(\mathcal{H'}; \mathsf{CP_2}[\ \mathsf{v1} / \mathsf{x}]) \Rrightarrow (\mathcal{H''}, {\sf v})$.
    \end{itemize}
  \item (3) implies that rule (CS-bind) is applicable, 
  giving ($\mathcal{H}$; $\mathsf{CP_1} \textnormal{>>=} (\lambda {\sf x : t}. \mathsf{CP_2})$ $\Rrightarrow$ $(\mathcal{H''}; {\sf v2})$.
  \item (4) implies first part of the goal i.e. $\exists \mathcal{H'}$. ($\mathcal{H}$; {\sf e}) $\Rrightarrow$ ($\mathcal{H'}$; {\sf v}).
  \item Again using IH, $\Gamma$, {\sf x} : $\tau_1$ $\vdash$ {\sf v2} : $\tau_2$, implying the second part of the goal.
  \item Using IH and Well-typedness of the heap, on $\mathsf{CP_1}$ and $\mathsf{CP_2}$ we get 
    \begin{itemize}
     \item $\Gamma$, $\phi_1$ ($\mathcal{H}$) $\vDash$ $\phi_1' (\mathcal{H} \ {\sf v1}  \ \mathcal{H'})$
     \item $\Gamma$, x : $\tau_1$, $\phi_2$ ($\mathcal{H'}$) $\vDash$ $\phi_1' \mathcal{H'} {\sf v2} \mathcal{H''}$
   \end{itemize}
  \item Using T-Bind rule, constructed $\phi_{pre}$ = \{ $\forall {\sf h}. \phi1 \ {\sf h}  \wedge \phi1' ({\sf h, x, h_i})  => \phi2 \ {\sf h_i}$ \}
  \item And $\phi_{post}$ = \{ $\forall {\sf h, y, h'}. \phi1' ({\sf h, x, h_i}) \wedge \phi2' ({\sf h_i, y, h'})$ \}
  \item The third part of the goal becomes: $\Gamma$,  x : $\tau_1$, $\mathsf{h_i}$ : heap $\vDash$ $\phi_{pre}$ =>  $\phi_{post}$
  \item Using Well-Typedness of the heap, the above becomes:
  $\Gamma$,  x : $\tau_1$ $\vDash$ $\phi_1$ $\mathcal{H}$ $\wedge$ ($\phi_1'$ ($\mathcal{H}$) {\sf x} ($\mathcal{H'}$)  => $\phi_2$ \ ($\mathcal{H'}$))
    =>  $\phi_{post}$
  \item Using 7 
  $\Gamma$,  x : $\tau_1$ $\vDash$ ($\phi_1$ $\mathcal{H}$ $\wedge$ ($\phi_1'$ ($\mathcal{H}$) {\sf x} ($\mathcal{H'}$)  => $\phi_2$ \ ($\mathcal{H'}$))
  $\wedge$ $\phi_1' (\mathcal{H} \ {\sf v1} \ \mathcal{H'})$ $wedge$
  $\phi_1' \mathcal{H'} {\sf v2} \mathcal{H''}$) => $\phi_{post}$
  \item Which is trivially true.
 \end{enumerate}

 \item Case : T-seq : Similar to the T-Bind case
 \item Case : T-Fold 
 \begin{enumerate}
  \item Given $\Gamma_{e}$ $\vdash$ foldT ... : eff \{P\} $\tau_1$ \{Q\}.
  \item Using T-Fold inversion: $\exists$ $\Gamma$, $\Gamma$ $\vdash$ CP : eff $\phi$ $\tau$ $\phi'$ and $\Gamma \vdash$ f : \_, g : \_ 
  \item Using IH for CP, $\exists \mathcal{H'}$. $(\mathcal{H}; \mathsf{CP}) \Rrightarrow (\mathcal{H'}; {\sf v})$.
  \item Using case split on {\sf v}
  \begin{itemize}
   \item Case v $\neq$ {\sf Err}
	\begin{enumerate}
	 \item Using progress for $L$-expressions, we get 
	  \begin{itemize}
	   \item $(\mathcal{H'} ; {\sf f \ v \ b}) \Downarrow (\mathcal{H'}; {\sf b'})$
	   \item $(\mathcal{H'} ; {\sf g \ b'}) \Downarrow (\mathcal{H'}; {\sf brk})$
	  \end{itemize}
	  \item Again, case splitting on {\sf brk}
	  \item Case {\sf brk} = true 
	     \begin{itemize}
	      \item The (CS-break) rule is applicable. Thus
	      $(\mathcal{H}; (\mathsf{foldT \ {\sf CP} \ f \ g \ b})) \Rrightarrow (\mathcal{H'};{\sf b'})$
	      \item This implies the first part of the goal.
	      \item Using (2), $\Gamma \vdash$  {\sf b} : $\tau_1$, thus implying the second part of the goal.
	      \item Premise for the third part of the goal becomes:
	      $\Gamma, \mathsf{Inv_{start}}$ $\wedge$ $\mathsf{Inv_{break}}$ $\wedge$ {\sf P} ($\mathcal{H}$) 
	      \item Using definitions of $\mathsf{Inv_{start}}$ and $\mathsf{Inv_{break}}$, this reduces to:  
	      P $\mathcal{H}$ => (Inv $\mathcal{H}$ b) $\wedge$ (g b' = true $\wedge$ (Inv $\mathcal{H'}$ b')) => Q $\mathcal{H'}$ b'
	      \item The above premise implies the required goal condition Q ($\mathcal{H'}$) {\sf b'}.
	     \end{itemize}

	 \item Case {\sf brk} = false
	     \begin{itemize}
	     \item Using Another induction on the length of the input, Using IH and finiteness of the input gives:
	      $\exists \mathcal{H'}$, such that  $(\mathcal{H'}; (\mathsf{foldT \ {\sf CP} \ f \ g \ b'})) \Rrightarrow (\mathcal{H''}; {\sf v'})$
	      \item Thus (CS-loop) rule is applicable, giving 
	      $(\mathcal{H}; (\mathsf{foldT \ {\sf CP} \ f \ g \ b})) \Rrightarrow (\mathcal{H''};{\sf v'})$
	      \item This implies the first part of the goal.
	      \item Using IH and finiteness of the the input, we get $\Gamma$ ${\sf v'}$ : $\tau_1$, implying the second part of the goal
	      \item Premise for the third part of the goal becomes:
	      $\Gamma, \mathsf{Inv_{start}}$ $\wedge$ $\mathsf{Inv_{break}}$ $\wedge$ $\mathsf{Inv_{ind}}$ $\wedge$ {\sf P} ($\mathcal{H}$) 
	      \item By definition of $\mathsf{Inv_{start}}$, $\mathsf{Inv_{break}}$ and  $\mathsf{Inv_{ind}}$ this reduces to  
	      P $\mathcal{H}$ => (Inv $\mathcal{H}$ b)  
	      $\wedge$ (Inv $\mathcal{H'}$ b' $\wedge$ ( g b' = false)) => $\phi$ $\mathcal{H'}$ => 
	      ($\phi'$ $\mathcal{H''}$ {\sf v'} $\mathcal{H'}$  => Inv $\mathcal{H''}$ {\sf v'})
	      $\wedge$ (( g v' = false) $\wedge$ Inv $\mathcal{H''}$ v' => Q $\mathcal{H''}$ v')
	      \item This implies the required goal condition Q ($\mathcal{H''}$) {\sf v'}.
	     \end{itemize}

	\end{enumerate}

   \item Case v = {\sf Err}
	\begin{enumerate}
	 \item The CS-backtrack is applicable, thus 
	 ($\mathcal{H}$; {\sf foldT \_}) $\Rrightarrow$ $(\mathcal{H'}; {\sf b})$
	 \item This implies the first part of the goal.
	 \item Using (2), $\Gamma$ {\sf b} : $\tau_1$, thus implying the second part of the goal.
	 \item The premise for the third goal becomes:
	  $\Gamma, \mathsf{Inv_{start}}$ $\wedge$ $\mathsf{Inv_{break}}$ $\wedge$ {\sf P} ($\mathcal{H}$) 
	 \item  By definition of $\mathsf{Inv_{start}}$ and $\mathsf{Inv_{break}}$, this reduces to  
	  P $\mathcal{H}$ => (Inv $\mathcal{H}$ b) $\wedge$ (g b = true $\wedge$ (Inv $\mathcal{H'}$ b)) => Q $\mathcal{H'}$ b 
	 \item This implies the required goal condition Q ($\mathcal{H'}$) {\sf b}.
	\end{enumerate}

  \end{itemize}
  
 \end{enumerate}

 \item Case : T-choice : 
 \begin{enumerate}
  \item Given $\Gamma \vdash (\mathsf{CP_1} \textnormal{<|>}  \mathsf{CP_2}) : \textnormal{eff'} \ \{ (\phi1 \wedge \phi2) \} \ \tau \ \{ (\phi1'  \lor \phi2')$ \}. 
  \item Using T-Choice inversion: $\Gamma$ $\vdash$ $\mathsf{CP_1}$ : eff $\phi1$ $\tau$ $\phi1'$ and $\Gamma$ $\vdash$ $\mathsf{CP_2}$ : eff $\phi2$ $\tau$ $\phi2'$.
  \item By IH, $(\mathcal{H}; \mathsf{CP_1}) \Rrightarrow (\mathcal{H'}; {\sf v1})$ and $(\mathcal{H}; \mathsf{CP_2}) \Rrightarrow (\mathcal{H''}; {\sf v2})$.
  \item Thus CS-L and CS-R rules are applicable, giving: two possible output heaps and values:
  $(\mathcal{H}; (\mathsf{CP_1} \mid \mathsf{CP_2})) \Rrightarrow (\mathcal{H'}; {\sf v1})$ and 
  $(\mathcal{H}; (\mathsf{CP_1} \mid \mathsf{CP_2})) \Rrightarrow (\mathcal{H''}; {\sf v2})$
  \item This implies the first part of the goal.
  \item Using IH, $\Gamma \vdash$ v1 : $\tau$ and $\Gamma \vdash$ v2 : $\tau$, implying the second part of the goal.
  \item Again using IH:
  \begin{itemize}
   \item $\Gamma, \phi1(\mathcal{H}$ $\vDash$ $\phi1'$ $\mathcal{H}$ v1 $\mathcal{H'}$ 
   \item $\Gamma, \phi2(\mathcal{H}$ $\vDash$ $\phi2'$ $\mathcal{H}$ v2 $\mathcal{H''}$ 
 \end{itemize}
  \item Taking conjunction of the pre-conditions we get 
  $\Gamma, \phi1(\mathcal{H}$) $\wedge$ $\phi2 (\mathcal{H})$ $\vDash$ $\phi1'$ $\mathcal{H}$ v1 $\mathcal{H'}$ $\wedge$ $\phi2'$ $\mathcal{H}$ v2 $\mathcal{H''}$.
  \item Since the post-condition in the above judgement implies the inferred post-condition, i.e.
  $\phi1'$ $\mathcal{H}$ v1 $\mathcal{H'}$ $\wedge$ $\phi2'$ $\mathcal{H}$ v2 $\mathcal{H''}$ => ($\phi1'$ $\mathcal{H}$ v1 $\mathcal{H'}$) $\lor$ ($\phi2'$ $\mathcal{H}$ v2 $\mathcal{H''}$).
  \item A non-Deterministic pick for the result implies :
  $\phi1'$ $\mathcal{H}$ v1 $\mathcal{H'}$ $\wedge$ $\phi2'$ $\mathcal{H}$ v2 $\mathcal{H''}$ => ($\phi1'$ $\lor$ $\phi2'$) $\mathcal{H}$ v $\mathcal{H}_{out}$).
  \item This  implies the required third goal.
  
 \end{enumerate}
  \item Trivial cases T-App, T-Let, T-Abs
  \end{itemize}

\end{proof}

\begin{lemma}[Soundness Subtyping]{\label{lem:subtyping-soundness}}
 Given $\Gamma \vdash {\sf e}$ : $\tau_1$ and $\exists \tau_1$, 
 such that $\Gamma$ $\vdash$ $\tau_1$ <: $\tau_2$, then $\Gamma$ $\vdash$ {\sf e} : $\tau_2$.
 \end{lemma}
\begin{proof}
  The proof inductively reasons over three subtyping relations:
  \begin{enumerate}
  \item Given $\Gamma \vdash {\sf e}$ : $\tau_1$ and $\exists \tau_1$, such that $\Gamma$ $\vdash$ $\tau_1$ <: $\tau_2$, then $\Gamma$ $\vdash$ {\sf e} : $\tau_2$.
  \item To prove  $\Gamma$ $\vdash$ {\sf e} : $\tau_2$ is sound, i.e. 
   \item We split on the shape of $\tau_1$
  \begin{itemize}
   \item Base-refinement ($\tau_1$ = \{ v : t | $\phi$\}) 
   \begin{enumerate}
    \item Using T-Sub-Base $\tau_2$ is of the form  \{ v : t | $\phi'$\})
    \item Trivial case, as no effect associated.
    \item From the given and the definition of expression typing $\Gamma \vDash \phi {\sf v}$.
    \item To prove soundness we are required to prove $\Gamma \vDash \phi' {\sf v}$.
    \item Inverting the T-Sub-Base gives, $\Gamma \vDash \phi \implies \phi'$, which gives the required condition.
       \end{enumerate}

    \item Computation-type-refinement ($\tau_1$ = eff1 \{$\phi_1$\} v : $\tau_{11}$ \{$\phi_1'$\})
    \begin{enumerate}
      \item Using T-Sub-Comp $\tau_2$ is of the form  eff2 \{$\phi_2$\} v : $\tau_{22}$ \{$\phi_2'$\})
      \item From the given and the definition of expression typing $\Gamma \vDash \phi {\sf v}$.
      \item To prove soundness we are required to prove, the following 3 goals from the soundness theorem:
      \begin{itemize}
       \item $\exists \mathcal{H'}$. ($\mathcal{H}$; {\sf e}) $\Rrightarrow$ ($\mathcal{H'}$; {\sf v})
       \item $\Gamma \vdash $ {\sf v} : $\tau_{12}$
       \item $\Gamma, \phi_2$ ($\mathcal{H}$) $\models$ $\phi_2'$ ($\mathcal{H}$) {\sf v} ($\mathcal{H'}$)
      \end{itemize}
      \item Using the given typing of {\sf e}, the first goal follows.
      \item From given $\Gamma \vdash $ {\sf v} : $\tau_{11}$, Using T-Sub-Comp, $\tau_{11}$ <:$\tau_{12}$.
      \item Thus using T-Sub rule, $\Gamma \vdash $ {\sf v} : $\tau_{12}$. hence the second goal follows.
      \item From given, $\Gamma, \phi_1$ ($\mathcal{H}$) $\models$ $\phi_1'$ ($\mathcal{H}$) {\sf v} ($\mathcal{H'}$),
      \item Using T-Sub-Comp again, $\Gamma \vDash$ $\phi_2$ => $\phi_1$ and $\Gamma, \phi_2 \vDash$ $\phi_1'$ => $\phi_2'$.
      \item Using the above two facts, $\Gamma, \phi_2$ ($\mathcal{H}$) $\models$ $\phi_2'$ ($\mathcal{H}$) {\sf v} ($\mathcal{H'}$)
      \item Thus, the third goal holds.
      \end{enumerate}
 
   \item Arrow-refinement (x : $\tau_{11}$  $\rightarrow$ $\tau_{12}$)
    \begin{enumerate}
      \item Using T-Arrow-Base $\tau_2$ is of the form  (x : $\tau_{11}$  $\rightarrow$ $\tau_{12}$)
      \item From the given and the definition of expression typing $\Gamma \vDash \phi {\sf v}$.
      \item To prove soundness we are required to prove $\Gamma \vDash \phi' {\sf v}$.
      \item Follows from the case analysis and induction over the type of the body.
    \end{enumerate}
  \end{itemize}
  \end{enumerate}

\end{proof}

\begin{lemma}[Soundness Path Typing]{\label{lem:pathsound}}
 Given, $\Gamma \vdash$ $\mathsf{p_{k}}^{(p,q)}$ : {\sf eff} \{$\phi_k$\} $\mathsf{t_{k}}$ \{$\phi'_k$\}. 
 Also Given there exists some $\mathcal{H}$, such that $\Gamma \models \phi_k (\mathcal{H}$). \\
 Then there exists $\mathcal{H'}$, such that:
 \begin{itemize}
  \item Either: ($\mathcal{H}$, {\sf p});$\mathsf{p_{k}}^{(p,q)}$ $\Rightarrow$ ($\mathcal{H'}$; {\sf v}),  $\Gamma \vdash $ {\sf v} : $\mathsf{t_k}$ and 
  $\Gamma, \phi_k$ ($\mathcal{H}$) $\models$ $\phi'_k$ ($\mathcal{H}$) {\sf v} ($\mathcal{H'}$).
  \item Or: ($\mathcal{H}$, {\sf p});$\mathsf{p_{k}}^{(p,q)}$ $\Rightarrow$ ($\mathcal{H'}$; {\sf q}) and 
  $\Gamma, \phi_k$ ($\mathcal{H}$) $\models$ $\phi'_k$ ($\mathcal{H}$) ($\mathcal{H'}$).
  
 \end{itemize}
\end{lemma}
\begin{proof}
 The proof is inductively defined on the length of the path, giving two cases:
 \begin{enumerate}
  \item {\bf Base Case (k = 1)} : We do a case analysis on typing rule for path of length 1:
  \begin{itemize}
   \item Case : T-P-Base: Inverting (T-P-Base) implies: \\
    \begin{enumerate}
      \item $\exists \delta$ = (p, $\mathsf{\phi_g}$, $\mathsf{f}$, {\sf q})
      \item $\exists \Gamma_0$, $\Gamma$ = $\Gamma_0$, $\phi_g$ {\sf i} {h} , where {\sf h} = $\| \mathcal{H}\|$ 
      \item  Using the progress for $\Sigma$ typing and 
      \item Deterministic semantics of the Transducers. 
      \item Using (a)-(d), The evaluation rule (S-P-Base) applies, thus: 
      \item There exists a heap $\mathcal{H'}$ such that ($\mathcal{H}$, {\sf p}); $\delta$ $\Rightarrow$ ($\mathcal{H'}$; {\sf q}).
      \item Using sound typing result (Theorem ~\ref{thm:progress} and ~\ref{thm:preservation}) for $L$-expression we get ($\Sigma$, $\phi$ $\models$ $\phi'$ $\mathcal{H}$ $\mathcal{H'}$.) Using (T-P-Base), The pre and post type annotations, $\phi_k$ = $\phi$  and $\phi'_k$ = $\phi'$. Given, $\Gamma$ is parameterized over $\Sigma$, we have  $\Gamma, \phi_k$ ($\mathcal{H}$) $\models$ $\phi'_k$ ($\mathcal{H}$) ($\mathcal{H'}$).
      \item f and g $\implies$ soundness
    \end{enumerate}
     \item Case : T-P-Error : Inverting (T-P-Error) implies: \\
     \begin{enumerate}
      \item $\exists \delta$ = (p, $\mathsf{\phi_g}$, $\mathsf{f}$, {\sf err}), 
      \item $\exists \Gamma_0$, $\Gamma$ = $\Gamma_0$, $\phi_g$ {\sf i} {h} , where {\sf h} = $\| \mathcal{H}\|$. 
      \item Using the progress for $\Sigma$ typing and 
      \item Deterministic semantics of the Transducers. 
      \item Using (a)-(d), The evaluation rule (S-P-Err) applies, thus: 
      \item  There exists a heap $\mathcal{H'}$ such that ($\mathcal{H}$, {\sf p}); $\delta$ $\Rightarrow$ ($\mathcal{H'}$; {\sf Err}).
      \item Using sound typing result (Theorems ~\ref{thm:progress} and ~\ref{thm:preservation}) for $L$-expression we get ($\Sigma$, $\phi$ $\models$ $\phi'$ $\mathcal{H}$ $\mathcal{H'}$ 
      \item Using (T-P-Error), The pre and post type annotations, $\phi_k$ = $\phi$  and $\phi'_k$ = ($\phi' \wedge$ {\sf v = Err}). Given, $\Gamma$ is parameterized over $\Sigma$, we have  $\Gamma, \phi_k$ ($\mathcal{H}$) $\models$ $\phi'_k$ ($\mathcal{H}$) ($\mathcal{H'}$).
     \item Using (T-P-Error), $\mathsf{t_k}$ = {\sf result t} and $\phi'_k$ = ($\phi' \wedge$ {\sf v = Err}), thus {\sf v = Err} : $\mathsf{t_k}$.
     \item f, h and i $\implies$ soundness.
   
     \end{enumerate}
  
   \item Case : T-P-Act Similar to above two cases, using (S-P-Act) and soundness over the action of the final state. 
   \item Case : T-P-Loop : This case uses the definition of the {\sf Inv} and the Invariant checking predicate to prove the soundness.
   Inverting (T-P-Loop) implies:\\
    \begin{enumerate}
      \item $\exists \delta$ = (p, $\mathsf{\phi_g}$, $\mathsf{f}$, {\sf p})
      \item $\Gamma$ = $\Gamma_0$, $\phi_g$ {\sf i} {h} , where {\sf h} = $\| \mathcal{H}\|$ 
      \item  Using the progress for $\Sigma$ typing and 
      \item  Well-formed semantics of the Transducers. 
      \item  Using (a)-(d), The evaluation rule (S-P-Loop) applies, thus: 
      \item There exists a heap $\mathcal{H'}$ such that ($\mathcal{H}$, {\sf p}); $\delta$ $\Rightarrow$ ($\mathcal{H'}$; {\sf p}).
      \item Using sound typing result (Theorem ~\ref{thm:progress} and ~\ref{thm:preservation}) for $L$-expression we get ($\Sigma$, $\phi$ $\models$ $\phi'$ $\mathcal{H}$ $\mathcal{H'}$.) Using (T-P-Loop), and the definition of $\mathsf{Inv_{ind}}$ 
      $\mathsf{\Gamma_{ext}}$ $\models$ {\sf Inv} ($\mathcal{H}$) ($\mathcal{H'}$).
      \item f and g $\implies$ soundness.
    \end{enumerate}
   
    \end{itemize}
  \item {\bf Inductive Case (k > 1)} : The only typing rule applicable in this case: 
  \begin{itemize}
   \item T-P-Ind : Inverting the (T-P-Ind) rule implies:\\
   
   \begin{enumerate}
   \item  $\exists$ a path of length (k-1), ($\mathsf{p_{k-1}}^{(p,q)}$), such that for some $\Gamma_1 \subseteq \Gamma \vdash$ ($\mathsf{p_{k-1}}^{(p,q)}$) : {\sf eff} \{$\phi_{k-1}$\} $\mathsf{t_{k-1}}$ \{$\phi'_{k-1}$\} 
   \item  $\exists$ a path of length (1), ($\mathsf{p_{1}}^{(q,r)}$), such that for some $\Gamma_2 \subseteq \Gamma \vdash$ ($\mathsf{p_{1}}^{(q,r)}$) :{\sf eff} \{$\phi_1$\} $\mathsf{t_{1}}$ \{$\phi'_1$\} 
   \item  terminal control state for ($\mathsf{p_{k-1}}^{(p,q)}$) = start control state ($\mathsf{p_{1}}^{(q,r)}$).
   \item  Using Induction Hypothesis on soundness of ($\mathsf{p_{k-1}}^{(p,q)}$) implies ($\mathcal{H}$, {\sf p});($\mathsf{p_{k-1}}^{(p,q)}$)  $\Rightarrow$ ($\mathcal{H}_{int1}$; {\sf q}) 
   \item ($\mathsf{p_{1}}^{(q,r)}$), implies ($\mathcal{H}_{int2}$, {\sf p});($\mathsf{p_{1}}^{(q,r)}$)  $\Rightarrow$ ($\mathcal{H'}$; {\sf r}).
   \item  Using (T-P-Ind) $\Gamma$ $\vdash$ $\phi$ the guard for the ($\mathsf{p_{1}}^{(q,r)}$).
   \item  Main induction step : Using (T-P-Ind), $\mathsf{path_{pre}}$ implies ($\phi'_{k-1}$ ($\mathcal{H}_{int1}$)  $\implies$ ($\phi_{1}$ $\mathcal{H}_{int2}$), thus, ($\mathcal{H}_{int2}$; {\sf q}) $\Rightarrow$ ($\mathcal{H'}$; {\sf r}), thus (S-P-Ind) is applicable and 
   ($\mathcal{H}$, {\sf p});($\mathsf{p_{k}}^{(p,r)}$)  $\Rightarrow$ ($\mathcal{H'}$; {\sf r}).
   \item  Using Induction Hypothesis, $\Gamma_1, \phi_{k-1}$ ($\mathcal{H}$) $\models$ $\phi'_{k-1}$ ($\mathcal{H}$) ($\mathcal{H}_{int1}$) and \\
   $\Gamma_2, \phi_{1}$ ($\mathcal{H}_{int2}$) $\models$ $\phi'_{1}$ ($\mathcal{H}_{int2}$) ($\mathcal{H'}$) and 
   Thus, $\Gamma_1 @ \Gamma_2$ $\vdash$ $\mathsf{path_{pred}} \models $ $\mathsf{path_{post}}$ ($\mathcal{H}$) ($\mathcal{H}_{int1}$) ($\mathcal{H}_{int2}$) ($\mathcal{H'}$), giving us our required proof of soundness for $\mathsf{p_{k}}^{(p,r)}$.
   \item g and h $\implies$ soundness.
   \end{enumerate}
  \end{itemize}

 \end{enumerate}

\end{proof}

\begin{theorem}[Soundness PST]
Given a correctness specification $\tau$ = {\sf eff} \{$\phi$\} {\sf v:  t} \{$\phi'$\} and a {\sf PST T}, such that 
under some $\Gamma$, $\Gamma \vdash$ {\sf T} : $\tau$, then if there there exists $\mathcal{H}$, such that $\Gamma \models \phi (\mathcal{H})$, then there exists a $\mathcal{H'}$, such that, 
\begin{itemize}
 \item ($\mathcal{H}$, $q_0$); {\sf T} $\Rightarrow$ ($\mathcal{H'}$; {\sf v})
 \item $\Gamma \vdash $ {\sf v} : {\sf t}
 \item $\Gamma, \phi$ ($\mathcal{H}$) $\models$ $\phi'$ ($\mathcal{H}$) {\sf v} ($\mathcal{H'}$)
\end{itemize}
\end{theorem}
\begin{proof}
The proof uses T-P-Sum rule, Inverting the rule implies:
\begin{enumerate}
 \item $\exists$ two paths $\mathsf{p_{k1}}^{(q0, (q \in F))}$, and $\mathsf{p_{k1}}^{(q0, (r \in F))}$.
 \item Using the soundness rule for paths (Lemma~\ref{lem:pathsound}), either (S-P-L) or (S-P-R) apply.
 \item Thus, $\exists$ either $\mathcal{H'}$, {\sf vleft} or $\mathcal{H''}$, {\sf vright} such that.
 \item ($\mathcal{H}$, $q_0$); {\sf PST} $\Rightarrow$ ($\mathcal{H'}$; {\sf vleft}). Or 
 \item ($\mathcal{H}$, $q_0$); {\sf PST} $\Rightarrow$ ($\mathcal{H''}$; {\sf vright}). 
 \item Using (T-P-Sum) and Soundness of path typing (Lemma~\ref{lem:pathsound}) $\Gamma \vdash$ {\sf vleft} : {\sf t} and {\sf vright} : {\sf t}, Thus, 
 $\Gamma \vdash$ {\sf v} : {\sf t}.
 \item Using (T-P-Sum) and Soundness of path typing (Lemma~\ref{lem:pathsound}), $\Gamma, \phi1$ ($\mathcal{H}$) $\models$ $\phi1'$ ($\mathcal{H}$) {\sf vleft} ($\mathcal{H'}$) and  $\Gamma, \phi2$ ($\mathcal{H}$) $\models$ $\phi2'$ ($\mathcal{H}$) {\sf vright} ($\mathcal{H''}$).
 \item Thus $\Gamma, \phi1 \wedge \phi2$ ($\mathcal{H}$) $\models$ ($\phi1'$ ($\mathcal{H}$) {\sf vleft} ($\mathcal{H'}$)) $\lor$ ($\phi1'$ ($\mathcal{H}$) {\sf vright} ($\mathcal{H''}$)).
 \item Using 4, 5, 6, 8 $\implies$ soundness condition.
 
\end{enumerate}
\end{proof}

% 
% \subsection{Decidability of \name\ Typechecking}
% \begin{lemma}[Lemma Grounding]{\label{lem:grounding}}
% Given that the refinements in our specification language are restricted to first-order formulas without existentials (refer Figure 6). We define a judgement $\Gamma \vdash$ {\sf e} EPR, saying that the type synthesized for a well-typed \name\ term {\sf e} or a PST path ($\mathsf{p_{k}}^{(q,r)}$) can be reduced to the EPR fragment of the first order logic.   
% \end{lemma}
% \begin{proof}
% The proof is 
% \end{proof}

\section{Benchmark Grammars}
Following are the grammars for the Benchmark applications:
\begin{enumerate}
 \item {\bf PNG}
\begin{lstlisting}[escapechar=\@,basicstyle=\small\sf,breaklines=true]
 png : header . many chunk 
 chunk : length . typespec . content . Pair (length,content)
 length : number 
 typespec : char 
 content : char*
\end{lstlisting}
 \item {\bf PPM}
  
\begin{tabbing}
\small
 {\sf ppm} : {\sf ``P''} . {\sf versionnumber} . {\sf header} . {\sf data} \\
 {\sf versionnumber} :  digit \\
 {\sf header } : width = number . height = number . max = number \\
 {\sf data } : rows* [length (rows) = height]\\
 {\sf row } : rgb* [length (rgb) = width]  \\
 {\sf rgb } : r = number . g = number  . b= number [r < max, g < max, b < max] \\ 
 {\sf number} :  digit* \\
 
 \end{tabbing}
\item {\bf Simple-Haskell}

\begin{lstlisting}[escapechar=\@,basicstyle=\small\sf,breaklines=true]
caseexp : offside ('case' . exp . 'of') . offside (align alts) 
alts :  (alt) . (alt)* 
alt : pat ralt;
ralt : ('->' exp)
pat : exp
exp : varid
varid : [a-z, 0-9]*
\end{lstlisting}

\item {\bf Simple-Python}
\begin{lstlisting}[escapechar=\@,basicstyle=\small\sf,breaklines=true]
while_stmt: offsie 'while' . offside test . offside ':' . offside suite
suite: offside NEWLINE . offside stmt+
test: expr op expr
expr : identifier
stmt: small_stmt NEWLINE
small_stmt: expr op expr
op : > | < | = 
\end{lstlisting}

\item {\bf Xauction}
\begin{lstlisting}[escapechar=\@,basicstyle=\small\sf,breaklines=true]
  listing :  sellerinfo .  auctioninfo
  sellerinfo :  sname . srating 
  auctioninfo  : bidderinfo+  
  bidderinfo  : bname . brating
  sname :  "<name>" name "</name>"
  srating :  "<rating>" number "</rating>"
  bname :  "<name>" name "</name>"
  brating : "<rating>" number "</rating>"

  name : [a-z].[a-z,0-9]*
  number : [0-9]+
\end{lstlisting}

\item {\bf Xprotein} where {\sf proteins} is a global list of parsed proteins. 
\begin{lstlisting}[escapechar=\@,basicstyle=\small\sf,breaklines=true]

proteindatabase  = database  proteinentry+
database =  <database> uid </database>
proteinentry =  <ProteinEntry> header  protein 	skip* </ProteinEntry> 
header = <header> . uid .</header>
uid = number
protein = <protein> name . id . </protein> [@$\neg$@ (name @$\in$@ proteins)] {proteins.add name}
\end{lstlisting}

\item {\sf Health}
\begin{itemize}
 \item Following is the custom-stateful regex patter-matcher
\begin{lstlisting}[escapechar=\@,basicstyle=\small\sf,breaklines=true]
Custom Regex pattern = 
	<skip> ([^,]*, {4}) 
		(<?round-off> cancer-deaths) 
			@$\lambda$@ x. (<skip>[,*,] {2}) 
				(<?check-less-than x> cancer-deaths-min) 
					@$\lambda$@ y. 
					(<?check-greater-than x> cancer-deaths-max) @$\lambda $@ z.(<Triple {x;y;z}> [*\n])
\end{lstlisting}
\item 
Following is the grammar capturing the above pattern:
\begin{lstlisting}[escapechar=\@,basicstyle=\small\sf,breaklines=true]
csvhealth : count 4 (skip) . x = cancer-deaths . (count 2 skip) . y = cancer-deaths-min [y < x] . z = cancer-deaths-max [z > x]   
skip : [a-z]*.','
cancer-deaths : number 
cancer-deaths-min : number 
cancer-deaths-max : number
\end{lstlisting}

\end{itemize}

\item {\bf Streams}

\begin{lstlisting}[escapechar=\@,basicstyle=\small\sf,breaklines=true]
streamicc :  t = tagentry . chunk (t)
tagentry : signature .  offset  .  size
signature : number
offset : number 
size : number
chunk (t) : s = GetStream . s1 = Take (t.sz) s . SetStream s1 . Tag (t.signature) . s2 = Drop sz s . Setstream s2
  
GetStreamm : !inp
SetStream (s1) : inp := s1
Tag (choice) : tag-left [choice=0]| tag-right[choice=1]
tag-left :x = number [x = 0] 
tag-right : x = number [x =1]
\end{lstlisting}

\item {\bf Typedef}

\begin{lstlisting}[escapechar=\@,basicstyle=\small\sf,breaklines=true]
decl := "typedef" . 
		typeexpr . 
		id=rawident [@$\neg$@ id @$\in$@ (!identifiers)]
		{types.add id} 

typename := x = rawident [x @$\in$@ (!types)]{return x}
typeexp := "int" | "bool"
expr :=  id=rawident {identifiers.add id ; return id}
program := many decl . many expr
\end{lstlisting}

\end{enumerate}

% 
% 
% 
% \begin{lstlisting}[escapechar=\@,basicstyle=\small\sf,breaklines=true]
%    number n : int @$\rightarrow$@  
%    stexc {@$\forall h$@. true} v : result ([char] * int) 
%    {@$\forall h$@.(v = Val (x, y) =>  
%     (y <= n @$\wedge$@ 
%     (@$\forall x1, l l'$@. ((sel h' inp l') @$\wedge$@ (sel h inp l) @$\wedge$@ elems l' x1) => elems l x1) 
%     @$\wedge$@ @$\forall x2, l$@. ((sel h inp l) @$\wedge$@ elems (x) x2) => elems l x2)) 
%     @$\lor$@ (v = Err =>  
%       (@$\forall x1, l l'$@. ((sel h' inp l') @$\wedge$@ (sel h inp l) @$\wedge$@ elems l' x1) => elems l x1)}
% \end{lstlisting}

%

% 
% \input{intro-oopsla}
% \input{overview-oopsla}
%  \input{morpheus}
%  \input{oopsla-morphtype}
% \input{evaluation}
% \input{related}
% 
% \bibliographystyle{ACM-Reference-Format}
% \bibliography{paper}
% 
% \newpage
% \appendix
% \input{appendix}
%%
%% The next two lines define the bibliography style to be used, and
%% the bibliography file.
%%
%% If your work has an appendix, this is the place to put it.